\algnewcommand{\algorithmicswitch}{\textbf{switch}}
\algnewcommand{\algorithmiccase}{\textbf{case}}
\algnewcommand{\algorithmicon}{\textbf{on}}
\algnewcommand{\algorithmicmacro}{\textbf{macro}}
\algrenewcommand{\algorithmicdo}{}
\algrenewcommand{\algorithmicthen}{}
\algnewcommand{\algorithmicgoto}{\textbf{goto}}%
\algnewcommand{\Goto}[1]{\algorithmicgoto~\ref{#1}}%
\algnewcommand{\algorithmicbreak}{\textbf{break}}%
\algnewcommand{\Break}[0]{\algorithmicbreak}%
\algnewcommand{\algorithmicwaiton}{\textbf{wait on}}%
\algnewcommand{\WaitOn}[1]{\algorithmicwaiton~{#1}}%
\setlist[itemize]{leftmargin=5.5mm}
\setlist[enumerate]{leftmargin=5.5mm}
\pgfplotsset{compat=1.14}
\newcommand{\AAD}{Availability-Accountability Dilemma\xspace}
\newcommand{\aad}{availability-accountability dilemma\xspace}
\newcommand{\asr}{accountable safety resilience\xspace}
\newcommand{\Asr}{Accountable safety resilience\xspace}
\newcommand{\ASR}{Accountable Safety Resilience\xspace}
\newcommand{\ourprotocol}{our protocol\xspace}
\newcommand{\Ourprotocol}{Our protocol\xspace}
\newcommand{\chlc}[2]{\ensuremath{\mathsf{ch}_{#1}^{#2}}}
\newcommand{\LOGda}[2]{%
    \ifthenelse{\equal{#1}{}}{%
        \ensuremath{\mathsf{LOG}_{\mathrm{da}}^{#2}}%
    }{%
        \ensuremath{\mathsf{LOG}_{\mathrm{da},#1}^{#2}}%
    }%
}
\newcommand{\LOGbft}[2]{%
    \ifthenelse{\equal{#1}{}}{%
        \ensuremath{\mathsf{LOG}_{\mathrm{bft}}^{#2}}%
    }{%
        \ensuremath{\mathsf{LOG}_{\mathrm{bft},#1}^{#2}}%
    }%
}
\newcommand{\LOG}[2]{%
    \ensuremath{\mathsf{LOG}_{#1}^{#2}}
}
\newcommand{\LOGacc}[2]{%
    \ifthenelse{\equal{#1}{}}{%
        \ensuremath{\mathsf{LOG}_{\mathrm{acc}}^{#2}}%
    }{%
        \ensuremath{\mathsf{LOG}_{\mathrm{acc},#1}^{#2}}%
    }%
}
\newcommand{\tr}[2]{%
    \ifthenelse{\equal{#1}{}}{%
        \ensuremath{\mathsf{T}^{#2}}%
    }{%
        \ensuremath{\mathsf{T}_{#1}^{#2}}%
    }%
}
\newcommand{\wt}[2]{%
    \ifthenelse{\equal{#1}{}}{%
        \ensuremath{\mathsf{w}^{#2}}%
    }{%
        \ensuremath{\mathsf{w}_{#1}^{#2}}%
    }%
}
\newcommand{\LOGBLANKFIX}[0]{\ensuremath{\mathsf{LOG}}}
\newcommand{\PI}[0]{\ensuremath{\Pi}}
\newcommand{\PIlc}[0]{\ensuremath{\Pi_{\mathrm{lc}}}}
\newcommand{\PIbft}[0]{\ensuremath{\Pi_{\mathrm{bft}}}}
\newcommand{\PIacc}[0]{\ensuremath{\Pi_{\mathrm{acc}}}}
\newcommand{\Adv}[0]{\ensuremath{\mathcal A}}
\newcommand{\Env}[0]{\ensuremath{\mathcal Z}}
\newcommand{\ie}[0]{\emph{i.e.}\xspace}
\newcommand{\eg}[0]{\emph{e.g.}\xspace}
\newcommand{\cf}[0]{\emph{cf.}\xspace}
\newcommand{\Eg}[0]{\emph{E.g.}\xspace}
\newcommand{\GST}[0]{\ensuremath{\mathsf{GST}}}
\newcommand{\GAT}[0]{\ensuremath{\mathsf{GAT}}}
\newcommand{\tx}[0]{\ensuremath{\mathsf{tx}}}
\newcommand{\txs}[0]{\ensuremath{\mathsf{txs}}}
\newcommand{\negl}[0]{\ensuremath{\operatorname{negl}}}
\newcommand{\Tconfirm}[0]{\ensuremath{T_{\mathrm{confirm}}}}
\newcommand{\Tcheckpoint}[0]{\ensuremath{T_{\mathrm{cp}}}}
\newcommand{\Ttimeout}[0]{\ensuremath{T_{\mathrm{to}}}}
\newcommand{\Trecent}[0]{\ensuremath{T_{\mathrm{rec}}}}
\newcommand{\kcp}[0]{\ensuremath{k_{\mathrm{cp}}}}
\newcommand{\Thotstuff}[0]{\ensuremath{T_{\mathrm{hs}}}} %
\newcommand{\Tslot}[0]{\ensuremath{T_{\mathrm{slot}}}} %
\newcommand{\Cf}[0]{\ensuremath{\mathcal{C}}}
\newcommand{\Wf}[0]{\ensuremath{\mathcal{W}}}
\newcommand{\bprop}[1]{%
    \ifthenelse{\equal{#1}{}}{%
        \ensuremath{\Hat{b}}%
    }{%
        \ensuremath{\Hat{b}_{#1}}%
    }%
}
\newcommand{\ld}[1]{%
    \ifthenelse{\equal{#1}{}}{%
        \ensuremath{\mathrm{L}^{(c)}}%
    }{%
        \ensuremath{\mathrm{L}^{(#1)}}%
    }%
}
\newcommand{\adj}[0]{\ensuremath{\mathcal{J}}}
\newcommand{\betaA}[0]{\ensuremath{\beta_{\mathrm{a}}}}
\newcommand{\betaL}[0]{\ensuremath{\beta_{\mathrm{l}}}}
\newcommand{\CpReq}[3]{%
    \ifthenelse{\equal{#3}{}}{%
        \ensuremath{\langle\mathsf{#1},#2\rangle}%
    }{%
        \ensuremath{\langle\mathsf{#1},#2\rangle_{#3}}%
    }%
}
\newcommand{\clue}{evidence\xspace}
\newcommand{\clues}{evidences\xspace}
\newcommand{\Clues}{Evidences\xspace}
\newcommand{\AdvEnvSync}[0]{\ensuremath{(\Adv_{\mathrm{s}}, \Env_{\mathrm{s}})}}
\newcommand{\AdvEnvPsync}[0]{\ensuremath{(\Adv_{\mathrm{p}}, \Env_{\mathrm{p}})}}
\newcommand{\AdvEnvDA}[0]{\ensuremath{(\Adv_{\mathrm{da}}, \Env_{\mathrm{da}})}}
\newcommand{\AdvEnvPG}[0]{\ensuremath{(\Adv_{\mathrm{pda}}, \Env_{\mathrm{pda}})}}
\newcommand{\AdvEnvParameterized}[1]{\ensuremath{(\Adv_{\mathrm{#1}}, \Env_{\mathrm{#1}})}}
\theoremstyle{plain}
\newtheorem*{theorem*}{Theorem}
\definecolor{myParula01Blue}{RGB}{0,114,189}
\definecolor{myParula02Orange}{RGB}{217,83,25}
\definecolor{myParula03Yellow}{RGB}{237,177,32}
\definecolor{myParula04Purple}{RGB}{126,47,142}
\definecolor{myParula05Green}{RGB}{119,172,48}
\definecolor{myParula06LightBlue}{RGB}{77,190,238}
\definecolor{myParula07Red}{RGB}{162,20,47}
\tikzset{myparula11/.style={color=myParula01Blue,solid,mark=+,mark options={solid}}}
\tikzset{myparula12/.style={color=myParula01Blue,densely dashed,mark=x,mark options={solid}}}
\tikzset{myparula13/.style={color=myParula01Blue,densely dotted,mark=o,mark options={solid}}}
\tikzset{myparula14/.style={color=myParula01Blue,dashdotted,mark=triangle,mark options={solid}}}
\tikzset{myparula15/.style={color=myParula01Blue,dashdotdotted,mark=square,mark options={solid}}}
\tikzset{myparula21/.style={color=myParula02Orange,solid,mark=+,mark options={solid}}}
\tikzset{myparula22/.style={color=myParula02Orange,densely dashed,mark=x,mark options={solid}}}
\tikzset{myparula23/.style={color=myParula02Orange,densely dotted,mark=o,mark options={solid}}}
\tikzset{myparula24/.style={color=myParula02Orange,dashdotted,mark=triangle,mark options={solid}}}
\tikzset{myparula25/.style={color=myParula02Orange,dashdotdotted,mark=square,mark options={solid}}}
\tikzset{myparula31/.style={color=myParula03Yellow,solid,mark=+,mark options={solid}}}
\tikzset{myparula32/.style={color=myParula03Yellow,densely dashed,mark=x,mark options={solid}}}
\tikzset{myparula33/.style={color=myParula03Yellow,densely dotted,mark=o,mark options={solid}}}
\tikzset{myparula34/.style={color=myParula03Yellow,dashdotted,mark=triangle,mark options={solid}}}
\tikzset{myparula35/.style={color=myParula03Yellow,dashdotdotted,mark=square,mark options={solid}}}
\tikzset{myparula41/.style={color=myParula04Purple,solid,mark=+,mark options={solid}}}
\tikzset{myparula42/.style={color=myParula04Purple,densely dashed,mark=x,mark options={solid}}}
\tikzset{myparula43/.style={color=myParula04Purple,densely dotted,mark=o,mark options={solid}}}
\tikzset{myparula44/.style={color=myParula04Purple,dashdotted,mark=triangle,mark options={solid}}}
\tikzset{myparula45/.style={color=myParula04Purple,dashdotdotted,mark=square,mark options={solid}}}
\tikzset{myparula51/.style={color=myParula05Green,solid,mark=+,mark options={solid}}}
\tikzset{myparula52/.style={color=myParula05Green,densely dashed,mark=x,mark options={solid}}}
\tikzset{myparula53/.style={color=myParula05Green,densely dotted,mark=o,mark options={solid}}}
\tikzset{myparula54/.style={color=myParula05Green,dashdotted,mark=triangle,mark options={solid}}}
\tikzset{myparula55/.style={color=myParula05Green,dashdotdotted,mark=square,mark options={solid}}}
\tikzset{myparula61/.style={color=myParula06LightBlue,solid,mark=+,mark options={solid}}}
\tikzset{myparula62/.style={color=myParula06LightBlue,densely dashed,mark=x,mark options={solid}}}
\tikzset{myparula63/.style={color=myParula06LightBlue,densely dotted,mark=o,mark options={solid}}}
\tikzset{myparula64/.style={color=myParula06LightBlue,dashdotted,mark=triangle,mark options={solid}}}
\tikzset{myparula65/.style={color=myParula06LightBlue,dashdotdotted,mark=square,mark options={solid}}}
\tikzset{myparula71/.style={color=myParula07Red,solid,mark=+,mark options={solid}}}
\tikzset{myparula72/.style={color=myParula07Red,densely dashed,mark=x,mark options={solid}}}
\tikzset{myparula73/.style={color=myParula07Red,densely dotted,mark=o,mark options={solid}}}
\tikzset{myparula74/.style={color=myParula07Red,dashdotted,mark=triangle,mark options={solid}}}
\tikzset{myparula75/.style={color=myParula07Red,dashdotdotted,mark=square,mark options={solid}}}
\pgfplotsset{
    mysimpleplot/.style = {
        every axis plot/.prefix style={thick},
        width=1.0\linewidth,
        height=0.75\linewidth,
        title style={font=\scriptsize,align=center},
        legend cell align=left,
        legend style={font=\scriptsize},
        legend columns=3,
        legend style={
            at={(0.5,1)},
            yshift=0.3em,
            anchor=south,
            draw=none,
            /tikz/every even column/.append style={
                column sep=0.3em
            },
            cells={
                align=left
            }
        },
        grid=both,
        minor tick num=3,
        major grid style={solid,draw=gray!50},
        minor grid style={densely dotted,draw=gray!50},
        label style={font=\scriptsize,align=center},
        tick label style={font=\scriptsize},
    },
}
\newcommand{\cmark}{\ding{52}}%
\newcommand{\xmark}{\ding{56}}%
\title{The \AAD \\ and its Resolution via Accountability Gadgets}
\titlerunning{\AAD and Accountability Gadgets}
\author{Joachim Neu \and
Ertem Nusret Tas \and
David Tse}
\institute{Stanford University\\
\email{\{jneu,nusret,dntse\}@stanford.edu}%
\footnote{Extended version: \cite{neu2021aadarxiv}. The authors contributed equally and are listed alphabetically.}}
\begin{document}
\begin{onlyonarxiv}
\end{onlyonarxiv}
\begin{onlyinproceedings}
\end{onlyinproceedings}
\nocite{neu2021aadarxiv}
\maketitle

\begin{abstract}
For applications of Byzantine fault tolerant (BFT) consensus protocols where the participants are economic agents, recent works highlighted the importance of \emph{accountability}: the ability to identify participants who provably violate the protocol. At the same time, being able to reach consensus under dynamic levels of participation is desirable for censorship resistance. We identify an \emph{\aad}: in an environment with dynamic participation, no protocol can simultaneously be accountably-safe and live. We provide a resolution to this dilemma by constructing a
provably secure
optimally-resilient accountability gadget to checkpoint a longest chain protocol, such that the full ledger is live under dynamic participation and the checkpointed prefix ledger is accountable. Our
accountability gadget construction is black-box and can use any BFT protocol which is accountable under static participation. Using HotStuff as the black box, we implemented our construction as a protocol for the Ethereum 2.0 beacon chain, and our Internet-scale experiments with more than $4{,}000$ nodes show that the protocol 
achieves
the required scalability and has better latency than the 
current solution 
Gasper,
which was shown insecure by recent attacks.
\end{abstract}

\section{Introduction}
\label{sec:introduction}

\subsection{Accountability and Dynamic Participation}
\label{sec:intro-accountability}

Safety and liveness are the two fundamental security properties of consensus protocols. A protocol run by a distributed set of nodes is safe if the ledgers generated by the protocol are consistent across nodes and across time. It is live if all honest transactions eventually enter into the ledger.
Traditionally, consensus protocols are developed for fault-tolerant distributed computing, where a set of distributed computing devices aims to emulate a reliable centralized computer.
In modern decentralized applications such as cryptocurrencies,
consensus nodes are no longer just disinterested computing devices but are agents acting based on economic and other rationales. To provide the proper incentives to encourage nodes to follow the protocol, it is important that they can be held accountable for their protocol-violating behavior. %
This point of view is advocated by Buterin and Griffith \cite{casper} in the context of their effort to add accountability (among other things) to Ethereum's Proof-of-Work (PoW) longest chain protocol, and is also central to the design of Gasper \cite{gasper}, the protocol running Ethereum~2.0's Proof-of-Stake (PoS) beacon chain. In these protocols, accountability is used to incentivize proper behavior by slashing the stake of protocol-violating agents.
PoW protocols 
like
Bitcoin \cite{nakamoto_paper} or Ethereum 1.0 do not
assign identities to miners, and hence cannot be expected to provide accountability.
Even Nakamoto-style PoS protocols such as Cardano's Ouroboros family \cite{kiayias2017ouroboros,david2018ouroboros,badertscher2018ouroboros}
lack accountability.
On the other hand,
protocols that are designed to provide accountability include Polygraph \cite{CGG19} and Tendermint \cite{tendermint}, and a recent comprehensive work \cite{forensics} shows that accountability can be added on top of
many (but not all) `traditional' propose-and-vote-style Byzantine fault tolerant (BFT) protocols,
such as HotStuff \cite{yin2018hotstuff}, PBFT \cite{pbft}, or Streamlet \cite{streamlet,snapandchat}.
There is, however, another crucial difference between Nakamoto-style and propose-and-vote-style protocols.
While protocols from the first group
do not provide accountability,
they tolerate dynamic participation, a sought after feature of public permissionless blockchains not only for censorship resistance.
In Bitcoin, \eg,
the total hash rate varies over many orders of magnitude over the years. Yet, the blockchains remain continuously \emph{available}, \ie, live.
Protocols from the second group, oppositely,
provide accountability but do not tolerate dynamic participation.%
\footnote{%
For completeness, there are also protocols which neither provide accountability nor tolerate dynamic participation, \eg, Algorand \cite{algorand}.}
Why is there no protocol that both supports accountability and tolerates dynamic participation?

\subsection{\AAD and Resolution via Accountability Gadgets}
\label{sec:intro-aad}

Our first result says that it is impossible to support accountability for
\emph{dynamically available} protocols, \ie, protocols that are live under dynamic participation (\cf Theorem~\ref{thm:aa-dilemma}).
We call this the \emph{\aad}.

\label{sec:resolution-via-acc-gadgets}

Our second contribution is
to provide a resolution to the dilemma.
As no \emph{single} ledger protocol can simultaneously
be available and accountable,
we design and implement an accountability gadget which,
when applied to a longest chain protocol,
generates a dynamically available ledger $\LOGda{}{}$ and a checkpointed prefix ledger $\LOGacc{}{}$ with provably optimal security properties.

Consider a network with a total of $n$ permissioned nodes, and an environment where the network may partition and the nodes may go online and offline.
\begin{enumerate}
    \item (\textbf{P1: Accountability}) The accountable ledger $\LOGacc{}{}$ can provide an \asr of $n/3$ at all times (\ie, identify that many protocol violators in case of a safety violation), and it is live after
    a possible
    partition heals and greater than $2n/3$ honest nodes come online.
    \item (\textbf{P2: Dynamic Availability}) The available ledger $\LOGda{}{}$ is guaranteed to be safe after 
    a possible
    network partition and live at all times, provided that fewer than $1/2$ of the online nodes are adversarial.
\end{enumerate}

Note that while the checkpointed ledger is by definition always a prefix of the full available ledger, the above result says that the checkpointed ledger will catch up with the available ledger when the network heals and a sufficient number of honest nodes come online.
Users can choose
individually
whether to resolve the dilemma
in favor of
availability
or
accountability.
For example, under exceptional circumstances,
a coffee shop
might rather tolerate payments reverting than stalling,
while a car dealer
might prefer stalling over reverting payments.

The achieved resiliences are optimal, which
can be seen by comparing this result with \cite[Theorem B.1]{forensics} (for P1) and
\cite[Theorem 3]{pass2017} (for P2).
The checkpointed ledger $\LOGacc{}{}$ cannot achieve better \asr than $n/3$; it in fact achieves exactly that. The dynamically available ledger $\LOGda{}{}$ cannot achieve a better resilience than $1/2$; the ledger in fact achieves it.
Moreover, even if the network was synchronous at all times,
no protocol could have generated an accountable ledger with better resilience
(Theorem~\ref{thm:aa-dilemma}). So we are getting partition-tolerance for free, even though accountability is the goal.

\begin{figure}[t]
    \centering
    \begin{tikzpicture}[
        x=2cm,
        y=1.12cm,
    ]
        \scriptsize

        \node [align=center,draw,fill=white] (PIlc) at (0.65,0) {Checkpoint-\\respecting\\longest\\chain};
        \node [align=center,draw,rotate=90,fill=white] (cpvotegenerator) at (2,0) {Vote\\generator};
        \node [align=center,draw,rotate=90,fill=white] (PIbft) at (2.9,0) {Accountable\\consensus};
        \node [align=center,draw,rotate=90,fill=white] (cpvoteinterpreter) at (4,0) {Vote\\interpreter};
        \draw [Latex-] (PIlc.west) -- ++(-0.25,0) node [above] {$\txs$};
        \draw [-Latex] (PIlc) -- (cpvotegenerator);
        \draw [-Latex] (cpvotegenerator) -- (PIbft) node [midway,above,rotate=90,anchor=west] {Votes};
        \draw [-Latex] (PIbft) -- (cpvoteinterpreter) node [midway,above,rotate=90,anchor=west] {Votes};
        \draw [-Latex] (PIbft) -- (cpvoteinterpreter) node [midway,below] {$\LOGbft{}{}$};
        \draw [-Latex] (cpvoteinterpreter.south) -- ++(0.1,0) -- ++(0,0.9) -| (cpvotegenerator) node[pos=0.25,above] {Checkpoint decisions};
        
        \draw [-Latex] (cpvoteinterpreter.south) -- ++(0.3,0) -- ++(0,1.5) -| (PIlc) node[pos=0.25,above] {Checkpoint decisions};
        
        \node [rotate=90,anchor=west,align=left] at ($(PIlc.east)+(0.2,0)$) {Confirmed\\blocks};

        \begin{pgfonlayer}{bg1}
            \draw [fill=myParula06LightBlue!30,draw=none] ($(cpvotegenerator.north east)+(-0.2,0.75)$) rectangle ($(cpvoteinterpreter.south west)+(0.2,-0.65)$);
            \node [anchor=south east] at ($(cpvoteinterpreter.south west)+(0.2,-0.65)$) {\emph{Accountability gadget}};
        \end{pgfonlayer}
        \begin{pgfonlayer}{bg2}
            \node [circle,draw=black!60,text=black!80,fill=black!20,inner sep=1pt,xshift=-2pt,yshift=-2pt] at ($(PIlc.south west)$) {$\PIlc$};
            \node [circle,draw=black!60,text=black!80,fill=black!20,inner sep=1pt,xshift=-2pt,yshift=-2pt] at ($(PIbft.north west)$) {$\PIbft$};
            
            \draw [-Latex,myparula61] (cpvoteinterpreter.south) -- ++(0.75,0) node[above] {$\LOGacc{}{}$};
            \draw [-Latex,myparula21] (PIlc.east) -- ++(0.15,0) -- ++(0,-1.5) -- ($(cpvoteinterpreter.south)+(0.75,-1.5)$) node[above] {$\LOGda{}{}$};
        \end{pgfonlayer}
        \coordinate (c1) at ($(cpvoteinterpreter.south west)+(0.2,-0.65)$);
        \coordinate (c2) at ($(cpvotegenerator.north east)+(-0.2,0.75)$);
        \node [circle,text=myParula06LightBlue,draw=none,inner sep=1pt,fill=white,xshift=5pt,yshift=5pt] at (c1 -| c2) {$\PIacc$};
    \end{tikzpicture}
    \caption[]{%
        We construct an accountability gadget $\PIacc$
        from any accountable BFT protocol $\PIbft$
        and apply it to a longest-chain-type protocol $\PIlc$
        as follows:
        The fork choice rule of $\PIlc$ is modified
        to respect the latest checkpoint decision.
        Blocks confirmed by $\PIlc$
        are output as available ledger $\LOGda{}{}$.
        They are also the basis on which nodes
        generate a proposal and vote for the next checkpoint.
        To ensure that all nodes reach the same checkpoint decision, consensus is reached on which votes to count
        using $\PIbft$.
        Checkpoint decisions are output as accountable ledger $\LOGacc{}{}$ and fed back into the protocol
        to ensure consistency of future block production in $\PIlc$
        and future checkpoints with previous checkpoints.%
    }
    \label{fig:protocol-description-overall-diagram}
\end{figure}

The accountability gadget construction is shown in Figure~\ref{fig:protocol-description-overall-diagram}.
It is built on top of any existing longest chain protocol modified to respect the checkpoints. That is, new blocks are proposed and the ledger of confirmed transactions is determined based on the longest chain among all the chains containing the latest checkpointed block.
This gives the available full ledger $\LOGda{}{}$.
Periodically, 
nodes vote on the next checkpoint
(following a randomly selected leader's proposal).
To ensure that when tallying votes all nodes
base their decision for the next checkpoint
on the same set of votes,
any accountable BFT protocol designed for a fixed level of participation
can be used
(entirely as a black box)
to reach consensus on the votes.
The chain up to the latest checkpoint constitutes the accountable prefix
ledger $\LOGacc{}{}$.
The gadget ensures that block production and confirmation
in $\PIlc$ and future checkpoints honor established checkpoints.
When instantiated with an accountable BFT protocol
that is secure
under network partitions,
$\LOGacc{}{}$ inherits its partition-tolerance.

\begin{figure}[t]
    \centering
    \begin{tikzpicture}
        \begin{axis}[
            mysimpleplot,
            name=plot1,
            xlabel={Time [s]},
            ylabel={Ledger length\\~[blocks]},
            legend columns=4,
            xmin=0, xmax=2500,
            ymin=-1, ymax=250,
            height=0.3\linewidth,
            width=0.4\linewidth,
            x label style={at={(axis description cs:1.0,0.0)},anchor=south east},
        ]
        
            \addplot [myparula61,jump mark left,mark=none,ultra thick] table [x=t,y=LOGacc] {figures/prototype-experiments/ledgers-runtime2500-n4100-f0-Tcp300.txt};
            \label{leg:prototype-results-ledgers-nofaults-LOGacc}
            \addlegendentry{Accountable prefix ledger};
            
            \addplot [myparula21,jump mark left,mark=none] table [x=t,y=LOGava] {figures/prototype-experiments/ledgers-runtime2500-n4100-f0-Tcp300.txt};
            \label{leg:prototype-results-ledgers-nofaults-LOGava}
            \addlegendentry{Available full ledger};
            
            \legend{};
            
        \end{axis}
        \begin{axis}[
            mysimpleplot,
            name=plot2,
            xshift=0.29\linewidth,
            xlabel={Time [s]},
            legend columns=4,
            xmin=0, xmax=2500,
            ymin=-1, ymax=250,
            height=0.3\linewidth,
            width=0.4\linewidth,
            yticklabels={},
            legend columns=1,
            legend style={
                at={(1,1)},
                xshift=0.3em,
                anchor=north west,
                draw=none,
                /tikz/every even column/.append style={
                    column sep=0.3em
                },
                cells={
                    align=left
                }
            },
            x label style={at={(axis description cs:1.0,0.0)},anchor=south east},
        ]
        
            \addplot [myparula61,jump mark mid,mark=none,ultra thick] table [x=t,y=LOGacc] {figures/prototype-experiments/ledgers-runtime2500-n4100-f1025-Tcp300.txt};
            \label{leg:prototype-results-ledgers-faults-LOGacc}
            \addlegendentry{Accountable prefix ledger};
            
            \addplot [myparula21,jump mark mid,mark=none] table [x=t,y=LOGava] {figures/prototype-experiments/ledgers-runtime2500-n4100-f1025-Tcp300.txt};
            \label{leg:prototype-results-ledgers-faults-LOGava}
            \addlegendentry{Available full ledger};

        \end{axis}
    \end{tikzpicture}%
    \vspace{-0.3em}
    \caption[]{%
        \textbf{Left:}
        Ledger dynamics
        of a longest chain protocol
        outfitted with our accountability gadget
        based on HotStuff,
        measured
        with $4{,}100$ nodes distributed
        around the world.
        No attack.
        The available full ledger grows steadily.
        The accountable prefix periodically catches up
        whenever a new
        block is checkpointed.
        \textbf{Right:}
        Even in the presence of a $\beta=25\%$
        adversary
        who mines selfishly in $\PIlc$
        and boycotts leader duty
        in $\PIbft$ and $\PIacc$,
        $\LOGda{}{}$ grows steadily
        and $\LOGacc{}{}$ periodically
        catches up with $\LOGda{}{}$.
        Under attack,
        the growth rate of $\LOGda{}{}$
        is reduced (due to selfish mining)
        and $\LOGacc{}{}$'s catching up
        is occasionally slightly
        delayed due to leader timeouts.
        (Parameters
        $n=4100$,
        $\Tcheckpoint=5\,\mathrm{min}$,
        $\Ttimeout=1\,\mathrm{min}$,
        $\Thotstuff=20\,\mathrm{s}$,
        $\kcp=k=6$,
        all nodes online;
        \cf Sections~\ref{sec:protocol-description}, \ref{sec:experiments}%
        )%
    }
    \label{fig:prototype-results-ledgers-nofaults}
    \label{fig:prototype-results-ledgers-faults}
\end{figure}

Since there are many accountable BFT  protocols \cite{forensics}, we have a lot of implementation choices.
Due to its
maturity and the availability of a high quality open-source implementation
which we could employ practically as a black box,
we decided to implement
a prototype of our accountability gadget using the HotStuff protocol \cite{yin2018hotstuff}. Taking the Ethereum 2.0's beacon chain as a target application
and matching its key performance characteristics
such as latency and block size,
we performed Internet-scale experiments to
demonstrate that our solution can meet the target specification
with over $4{,}000$ participants (see Figure~\ref{fig:prototype-results-ledgers-nofaults}(l)).
In particular,
for the chosen parameterization
and
even
before taking reduction measures,
the peak bandwidth required for a node to participate
does not exceed $1.5\,\mathrm{MB/s}$
(with a long-term
average of $78\,\mathrm{KB/s}$)
and hence 
is feasible even for many
consumer-grade Internet connections.
At the same time, our prototype provides
$5\times$
better average latency
of $\LOGacc{}{}$ compared to the
instantiation of Gasper
currently used
for Ethereum 2's beacon chain.

\subsection{Related Work}
\label{sec:related}

\subsubsection{Accountability}

Accountability in distributed protocols has been studied in earlier works.
\cite{HKD07} designed a system, PeerReview, which detects faults.
\cite{HK09} classifies faults into different types and studies their detectability. Casper \cite{casper} focuses on accountability and fault detection when there is violation of safety, and led to the notion of accountable safety resilience we use in this work. Polygraph \cite{CGG19} is a partially synchronous BFT protocol which is secure when there are less than $n/3$ adversarial nodes, and when there is a safety violation, at least $n/3$ nodes can be held accountable.
\cite{RG20} builds upon \cite{CGG19} to create a blockchain which can exclude Byzantine nodes that were found to have 
violated the protocol.

Many of these previous works focus on studying the accountability of \emph{specific} protocols and think of accountability as an add-on feature in addition to the basic security properties of the protocol. \cite{forensics} follows this  spirit but broadens the investigation to formulate a framework to study the accountability of many existing BFT protocols. More specifically, their framework augments the traditional resilience metric  with accountable safety resilience (which they call forensic support). The present work is more in the spirit of \cite{casper}
where accountability is a central design goal, not just an add-on feature. To formalize this spirit, we split traditional resilience into safety and liveness resiliences, upgrade safety resilience to accountable safety resilience, and formulate accountable security as a tradeoff between liveness resilience and accountable safety resilience. Further, we broaden the study to the important dynamic participation environment, where  we discovered the \aad (Theorem~\ref{thm:aa-dilemma}).
While at its heart the impossibility result Theorem~B.1 of \cite{forensics} is really about the tradeoff between liveness and accountable safety resiliences, although not stated as such, and it is indeed applicable very generally, when applied to the dynamic participation setting it would give a loose result and would not have been able to demonstrate the \aad.

\subsubsection{Availability-Finality Dilemma and Finality Gadgets}

\begin{table}[t]
    \centering
    \caption{%
    Accountability gadgets provide security, accountability,
    and predictable validity,
    which are not found conjoint in
    any one of the previous works
    \cite{gasper,snapandchat,ebbandflow,sankagiri_clc}.%
    }
    \setlength{\tabcolsep}{3pt}
    \begin{tabular}{lcccc}
        \toprule
            & \makecell{Gasper\\\cite{gasper}}
            & \makecell{Checkp.\ LC\\\cite{sankagiri_clc}}
            & \makecell{Snap\&Chat\\\cite{ebbandflow,snapandchat}}
            & \makecell{Acc. gadgets\\(This work)} \\
        \midrule
        Provable security 
            & \textcolor{myParula07Red}{\xmark{}}
            & \textcolor{myParula05Green}{\cmark{}}
            & \textcolor{myParula05Green}{\cmark{}}
            & \textcolor{myParula05Green}{\cmark{}}
            \\
        Accountability 
            & \textcolor{myParula05Green}{\cmark{}}
            & \textcolor{myParula07Red}{\xmark{}}
            & \textcolor{myParula05Green}{\cmark{}}
            & \textcolor{myParula05Green}{\cmark{}}
            \\
        Predictable validity 
            & \textcolor{myParula05Green}{\cmark{}}
            & \textcolor{myParula05Green}{\cmark{}}
            & \textcolor{myParula07Red}{\xmark{}}
            & \textcolor{myParula05Green}{\cmark{}}
            \\
        \bottomrule
    \end{tabular}
    \label{tab:comparison}
\end{table}

The \emph{avail\-ability-finality dilemma} \cite{PS_partition,lewispye2020resource,ebbandflow} states that no protocol can provide both finality, \ie, safety under network partitions, and availability, \ie, liveness under dynamic participation. The \emph{\aad} states that no protocol can provide both accountable safety and liveness under dynamic participation. Although they are different, it turns out that some, but not all, protocols that resolve the availability-finality dilemma can be used to resolve the \aad.
Casper \cite{casper} and Gasper \cite{gasper} pioneered resolution of the dilemmata
but lacked a specification of the desired security properties
and
suffered from attacks \cite{ethresearch-bouncing-attack-analysis,ethresearch-balancing-attack,ethresearch-balancing-attack2,ebbandflow,threeattacks,twoattacks}.
Specifically,
Gasper is insecure \cite{ethresearch-balancing-attack,ethresearch-balancing-attack2,ebbandflow,threeattacks}
(Table~\ref{tab:comparison}).
The first provably secure resolution of the availability-finality dilemma is the class of snap-and-chat protocols \cite{ebbandflow}, which combines a longest chain protocol with a partially synchronous BFT protocol in a black box manner to provide finality. If the partially synchronous BFT protocol is accountable, it is not too difficult to show \cite{snapandchat} that the resulting snap-and-chat protocol would also provide a resolution to the \aad. On the other hand, checkpointed longest chain \cite{sankagiri_clc}, another resolution of the availability-finality dilemma, is not accountable
(Table~\ref{tab:comparison}),
as shown in Appendix \ref{sec:appendix-non-accountability-clc}.

A strength of snap-and-chat protocols is its black box nature which gives it flexibility to provide additional features. A drawback is that the  protocol may reorder the blocks from the longest chain protocol to form the final ledger \cite{snapandchat}. This means that when a proposer proposes a block on the longest chain, it cannot predict the ledger state and check the  validity of transactions by just looking at the earlier blocks in the longest chain.  This lack of \emph{predictable validity} 
(Table~\ref{tab:comparison})
opens the protocol up to spamming and prohibits the use of standard techniques for sharding and light client support.
Checkpointed longest chain builds upon a line of work called finality gadgets \cite{casper,dinsdale2019afgjort,stewart2020grandpa,gasper} and overcomes this limitation of snap-and-chat protocols because the longest chain protocol is modified to respect the checkpoints so that the order of blocks can be preserved. However, checkpointed longest chain's finality gadget is not black box, but specifically uses Algorand BA \cite{algorand_agreement}, which is not accountable \cite{forensics}. It is not readily apparent if and how Algorand BA could be replaced with any accountable BFT protocol.

The accountability gadget we design combines structural elements from snap-and-chat protocols and from the checkpointed longest chain to uniquely achieve the best of both worlds.
It builds on the checkpointed longest chain and 
earlier (not provably secure)
finality gadgets
in that it complements a longest chain protocol with a checkpointing mechanism
and thus achieves predictable validity.
Like snap-and-chat protocols, it allows the use of any BFT protocol as a black box for checkpointing, retaining simplicity and flexibility and, when an accountable BFT protocol like HotStuff is used, the checkpointed ledger is accountable.
Our accountability gadget provides security, accountability,
and predictable validity (Table~\ref{tab:comparison}),
which are not found conjoint in any one of the prior works.%

\subsection{Outline}

We introduce in Section~\ref{sec:model} the notation and model for the proof of the \aad in Section~\ref{sec:availability-accountability-dilemma} and the construction and security proof of accountability gadgets in Section~\ref{sec:acc-gadgets-section}.
Finally, we discuss details of a prototype implementation
and
experimental performance results
in Section~\ref{sec:experiments}.

\section{Model}
\label{sec:model}

In the
client-server model of
state machine replication (SMR),
\emph{nodes} take inputs called \emph{transactions} and enable clients to agree on a single sequence of transactions, called the \emph{ledger} and denoted by $\LOG{}{}$, that produced the state evolution. 
For this purpose, nodes exchange messages, \eg, blocks or votes, and each node $i$ records its view of the protocol by time $t$ in an execution transcript $\tr{i}{t}$.
To obtain the ledger at time $t$, clients query the nodes running the protocol.
When a node $i$ is queried at time $t$, it produces
\emph{\clue} $\wt{i}{t}$ by applying an \emph{\clue generation function} $\Wf$ to its current transcript: $\wt{i}{t} \triangleq \Wf(\tr{i}{t})$.
Upon collecting \clues from some subset $S$ of the nodes, each client applies the \emph{confirmation rule} $\Cf$ to this set of \clues to obtain the ledger: $\LOG{}{} \triangleq \Cf(\{\wt{i}{t}\}_{i \in S})$.
Protocols typically require to query a subset $S$ containing at least one honest node.

\paragraph{Environment and Adversary:}
We assume that transactions are input to nodes by the environment $\Env$.
There exists a public-key infrastructure and each of the $n$ nodes is equipped with a unique cryptographic identity.
A random oracle serves as a common source of randomness.
Time is slotted and the nodes have synchronized
local
clocks.
\emph{Corruption:}
Adversary $\Adv$ is a probabilistic poly-time algorithm.
Before the protocol execution starts, $\Adv$ gets to corrupt (up to) $f$ nodes, then called \emph{adversarial} nodes. 
Adversarial nodes surrender their internal state to the adversary and can deviate from the protocol arbitrarily (Byzantine faults) under the adversary's control.
The remaining $(n-f)$ nodes are called \emph{honest} and follow the protocol as specified.
\emph{Networking:} Nodes can send each other messages.
Before a \emph{global stabilization time} $\GST$, $\Adv$ can delay network messages arbitrarily.
After $\GST$, $\Adv$ is required to deliver all messages sent between honest nodes within a known upper bound of $\Delta$ slots.
$\GST$ is chosen by $\Adv$, unknown to the honest nodes, and can be a causal function of the randomness in the protocol.
\emph{Sleeping:} To model dynamic participation, we adopt the concept of \emph{sleepiness} \cite{sleepy}.
Before a \emph{global awake time}%
\footnote{%
Node operators are rewarded and incur
little expenses for protocol participation.
Thus,
one naturally expects
frequent periods of (near) full participation.
$\GAT$ models the time when participation stabilizes,
analogous to the $\GST$ of network delays.%
} $\GAT$, $\Adv$ chooses, for every time slot and honest node, whether it is \emph{awake} (\ie, online) or \emph{asleep} (\ie, offline).
After $\GAT$, all honest nodes are awake.
An awake honest node 
executes the protocol faithfully.
An asleep honest node 
does not execute the protocol,
and messages that would have arrived in that slot are queued and delivered in the first slot in which the node is awake again.
Adversarial nodes are always awake.
We define $\beta$ as the maximum fraction of adversarial nodes among awake nodes throughout the execution of the protocol.
$\GAT$, just like $\GST$, is chosen by the adversary, unknown to the honest nodes and can be a causal function of the randomness.
But, while $\GST$ needs to happen eventually
($\GST < \infty$), $\GAT$ may be infinite.

Given above definition of a partially synchronous network with dynamic participation $\AdvEnvPG$, we model a synchronous network $\AdvEnvSync$, a partially synchronous network $\AdvEnvPsync$, and a synchronous network with dynamic participation $\AdvEnvDA$ as special cases with $\GST = \GAT = 0$, $\GAT = 0$, and $\GST = 0$, respectively.
Subsequently, we specify for every theorem
under which of the above four $\AdvEnvParameterized{...}$ it holds.
Examples of Nakamoto-style and propose-and-vote-style BFT protocols framed in the above model are given in Appendix~\ref{sec:appendix-examples-model}.

\paragraph{Safety and Liveness Resiliences:} Safety and liveness are defined as the traditional security properties of SMR protocols:
\begin{definition}
\label{def:security}
Let $\Tconfirm$ be a polynomial function of the security parameter $\sigma$ of an SMR protocol $\PI$.
We say that $\PI$ with a confirmation rule $\Cf$
is \emph{secure} and has transaction confirmation time $\Tconfirm$ if ledgers output by $\Cf$ satisfy:
\begin{itemize}
    \item \textbf{Safety:} For any time slots $t,t'$ and sets of nodes $S,S'$ satisfying the requirements stipulated by the protocol,
    either
    $\LOG{}{}\triangleq\Cf(\{\wt{i}{t}\}_{i \in S})$
    is a prefix of
    $\LOGBLANKFIX'\triangleq\Cf(\{\wt{i}{t'}\}_{i \in S'})$
    or vice versa.
    \item \textbf{Liveness:} If $\Env$ inputs a transaction to an awake honest node at some time $t$, then, for any time slot 
    $t' \geq \max(t,\GST,\GAT)+\Tconfirm$
    and any set of nodes $S$ satisfying the requirements stipulated by the protocol, the transaction is included in $\LOG{}{} \triangleq \Cf(\{\wt{i}{t'}\}_{i \in S})$.
\end{itemize}
\end{definition}

\begin{definition}
\label{def:safety}
For static (dynamic) participation, \emph{safety resilience} of a protocol is the 
maximum number $f$ of adversarial nodes (maximum fraction $\beta$ of adversarial nodes among awake nodes) such that the protocol satisfies safety.
Such a protocol provides \emph{$f$-safety} (\emph{$\beta$-safety}).
\end{definition}

\begin{definition}
\label{def:liveness}
For static (dynamic) participation, \emph{liveness resilience} of a protocol is the
maximum number $f$ of adversarial nodes (maximum fraction $\beta$ of adversarial nodes among awake nodes) such that the protocol satisfies liveness.
Such a protocol provides \emph{$f$-liveness} (\emph{$\beta$-liveness}).
\end{definition}

\paragraph{\ASR:} To formalize the concept of \asr, we define an \emph{adjudication function} $\adj$, similar to the forensic protocol defined in \cite{forensics}, as follows:
\begin{definition}
\label{def:adjudication-protocol}
An adjudication function $\adj$ takes as input two sets of \clues $W$ and $W'$ with conflicting ledgers $\LOG{}{}\triangleq\Cf(W)$ and $\LOGBLANKFIX'\triangleq\Cf(W')$, and outputs a set of nodes that have provably violated the protocol rules.
\end{definition}

So, $\adj$ never outputs an honest node.
When the clients observe a safety violation, \ie, at least two sets of \clues $W$ and $W'$ such that $\LOG{}{}\triangleq\Cf(W)$ and $\LOGBLANKFIX'\triangleq\Cf(W')$ conflict with each other, they call $\adj$ on these \clues to identify nodes that have violated the protocol.
Note that %
$\LOG{}{}\triangleq\Cf(\{\wt{i}{t}\}_{i \in S})$
may satisfy safety/liveness only
if the \clues come from
a set $S$ of nodes that satisfies some assumptions stipulated
by the protocol, \eg, that $S$ contains one honest
node.
On the other hand, $\adj$
should only output nodes that have undoubtedly violated protocol,
without
the verdict being conditional on any presumption.

\Asr builds on the concept of \emph{$\alpha$-ac\-count\-a\-ble-safety} first introduced in \cite{casper}:
\begin{definition}
\label{def:accountable-safety}
For static (dynamic) participation, \emph{\asr} of a protocol is the minimum number $f$ of nodes (minimum fraction $\beta$ of nodes among awake nodes) output by $\adj$
in the event of a safety violation.
Such a protocol provides \emph{$f$-accountable-safety} (\emph{$\beta$-accountable-safety}).
\end{definition}

Note that $\beta$-accountable-safety implies $\beta$-safety of the protocol (and the same for $f$) since $\adj$ outputs only adversarial nodes.

\section{The Availability-Accountability Dilemma}
\label{sec:availability-accountability-dilemma}

We observe that the strictest tradeoff between the liveness and \asr occurs for dynamically available protocols under $\AdvEnvDA$,
a result which was named the \aad in Section~\ref{sec:intro-aad}:
\begin{theorem}
\label{thm:aa-dilemma}
No SMR protocol provides both $\betaA$-accountable-safety and $\betaL$-liveness for any $\betaA,\betaL > 0$ under $\AdvEnvDA$.
\end{theorem}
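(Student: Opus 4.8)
The plan is to argue by contradiction via an indistinguishability (``swapped-worlds'') construction that exploits a feature specific to the sleepy model: to the rest of the network, an honest node that is asleep (or still syncing after waking) is indistinguishable from an adversarial node that replays exactly that behavior. This is what makes the statement about \emph{accountable} safety rather than plain safety, since under synchrony with dynamic participation plain safety together with liveness is easy to achieve. Suppose some protocol had both $\betaA$-accountable-safety and $\betaL$-liveness under $\AdvEnvDA$ with $\betaA,\betaL>0$. Split the $n$ nodes into two equal halves $P$ and $Q$, and fix two conflicting transactions $\tx_1,\tx_2$ (e.g.\ a double spend), so that any ledger containing $\tx_1$ conflicts with any ledger containing $\tx_2$.

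First I would set up two reference executions in which liveness is genuinely invokable. In reference execution~$1$, all of $P$ is honest and awake from time $0$, all of $Q$ is honest but asleep until a finite time $T$ (so $\GAT=T$) and then wakes and syncs, and $\Env$ inputs $\tx_1$ at time $0$. Since the adversarial fraction among awake nodes is $0\le\betaL$ throughout, $\betaL$-liveness forces a client to confirm a ledger $\LOG{}{}=\Cf(W_1)\ni\tx_1$ by time $T+\Tconfirm$ from $P$'s \clues $W_1$, and $\betaA$-safety (implied by accountable-safety, with the adversary below threshold) keeps this ledger stable. Reference execution~$2$ is the mirror image: $Q$ honest and awake, $P$ asleep-then-syncing, input $\tx_2$, yielding a conflicting $\Cf(W_2)\ni\tx_2$ from $Q$'s \clues $W_2$.

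Next I build two attack executions that present $\adj$ with the \emph{same} pair of conflicting \clues $(W_1,W_2)$. In attack execution~$3$, $P$ is honest and awake while $Q$ is adversarial: the adversary has $Q$ emulate toward $P$ exactly the messages honest-$Q$ sends in reference execution~$1$ (staying silent, then syncing to $L_1$), so $P$'s transcript equals its reference-$1$ transcript and $P$ therefore confirms and outputs $W_1$; simultaneously, because adversarial $Q$ holds $Q$'s keys and can replay honest-$Q$'s reference-$2$ computation (the random oracle and environment inputs being reproducible), the adversary fabricates precisely the \clues $W_2$. Thus clients in execution~$3$ observe conflicting \clues $(W_1,W_2)$, a genuine safety violation. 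Attack execution~$3'$ is the swap: $Q$ is honest (outputting $W_2$ by liveness, exactly as in reference~$2$) and $P$ is adversarial (replaying honest-$P$'s reference-$1$ behavior to fabricate $W_1$). By construction both attack executions hand $\adj$ the identical input $(W_1,W_2)$.

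The contradiction then comes for free: since $\adj$ is a function of $(W_1,W_2)$ alone, $\adj(W_1,W_2)$ is one fixed set $F$. In execution~$3$ every node of $P$ is honest, so $F\cap P=\emptyset$ (recall $\adj$ never outputs an honest node); in execution~$3'$ every node of $Q$ is honest, so $F\cap Q=\emptyset$. As $P\cup Q$ is the entire node set, $F=\emptyset$, contradicting $\betaA>0$, which demands that $|F|\ge\betaA n\ge 1$ of the awake nodes be blamed. I expect the main obstacle to be making this ``swapped-world'' indistinguishability fully rigorous: one must check that the adversary can reproduce the opposite half's honest transcript bit-for-bit (matching signatures via the surrendered keys of the now-corrupt half, and matching random-oracle answers and $\Env$ inputs), and that liveness in the reference executions compels each honest half to confirm its transaction \emph{within} the window in which its transcript agrees with the matching attack execution --- i.e.\ correctly threading the $\GAT$/$\Tconfirm$ timing so that both conflicting confirmations are forced before any reconciliation could dissolve the conflict.
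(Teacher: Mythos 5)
Your proposal is correct and follows essentially the same four-world indistinguishability argument as the paper's proof in Appendix~\ref{sec:appendix-aad-proof}: two reference executions where each half confirms a conflicting transaction by liveness, two swapped attack executions presenting $\adj$ with the identical pair of \clues, and the conclusion that $\adj$ can blame neither half. The only cosmetic differences are that you wake the sleeping half at a finite $\GAT$ (a slightly more careful handling of when liveness is invokable) and phrase the final contradiction as $F=\emptyset$ rather than as inconsistent containments $S_3\subseteq Q$, $S_4\subseteq P$.
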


Theorem~\ref{thm:aa-dilemma}
states that under dynamic participation it is impossible for an SMR protocol to provide both positive \asr and positive liveness resilience.
In light of this result, protocol designers
are compelled to choose between protocols that maintain liveness under fluctuating participation, and protocols that can enforce the desired incentive mechanisms highlighted in Section~\ref{sec:intro-accountability} via accountability.
Since both of the above features are desirable properties for Internet-scale consensus protocols, the \aad
presents a serious obstacle in the effort to obtain an incentive-compatible and robustly live protocol for applications such as cryptocurrencies.

To build intuition for the proof of Theorem~\ref{thm:aa-dilemma}, let us consider a permissioned longest chain protocol under $\AdvEnvDA$
where half of nodes are adversarial.
Adversarial nodes avoid all communication with honest nodes and build a private chain that conflicts with the chain built collectively by the honest nodes.
Such diverging chains mean the possibility of
an (ostensible) safety violation.
Think of an honest client towards whom adversarial nodes pretend to be asleep and who confirms a ledger based solely on the longest chain provided by the honest \clues;
and a co-conspirator of the adversary who pretends to not have received any \clues from honest nodes and
to have confirmed a ledger based solely on the longest chain provided by the adversarial \clues.
Indeed, both would obtain non-empty ledgers,
because the longest chain is dynamically available,
but these two ledgers would conflict.
Yet, based on the two sets of \clues,
the judge $\adj$ can
neither distinguish who is honest client and
who is co-conspirator,
nor tell which nodes are honest or adversarial.
So none of the adversarial nodes can be held accountable
(without risking to falsely convict an honest node).

Formal proof of Theorem~\ref{thm:aa-dilemma}
(Appendix~\ref{sec:appendix-aad-proof})
relies on the fact that in a dynamically available protocol, adversarial nodes, by private execution, can always create a set of \clues that yields a conflicting ledger through the confirmation rule $\Cf$.
This is because dynamically available protocols cannot set a lower bound on the number of \clues eligible to generate a non-empty ledger through $\Cf$, and thus are forced to output ledgers for \clues from any number of nodes.

Theorem~\ref{thm:aa-dilemma} is also related to a contemporaneous result \cite{lewispyeroughgardenccs} which shows that dynamically available protocols cannot produce certificates of confirmation, where such a certificate guarantees that there cannot be a conflicting confirmation so long as stipulated constraints on the adversary hold.

\section{Accountability Gadgets}
\label{sec:acc-gadgets-section}

In this section, we give a detailed description of the accountability gadget introduced in Section~\ref{sec:resolution-via-acc-gadgets}.
For ease of exposition, we construct
it
from 
an accountable BFT protocol $\PIbft$ 
with
accountable safety and liveness resilience of $\lfloor n/3 \rfloor$.

Like the checkpointed longest chain \cite{sankagiri_clc},
accountability gadgets %
output a prefix ledger safe under partial synchrony along with a full ledger live under dynamic participation.
For this purpose, both protocols
are deployed as overlays on top of
a dynamically available longest chain 
protocol
and periodically checkpoint its output to protect
against reversals under network partition.
Accountability gadgets can be instantiated from any partially synchronous BFT SMR protocol, which is used
as a black box for checkpointing.
If the selected protocol provides accountability,
then adversarial nodes can be held to account
should there ever be a reversal of a checkpoint.
In contrast,
the checkpointed longest chain is interwoven with a variant
of a particular protocol, Algorand BA \cite{algorand_agreement},
which does not provide accountability \cite{forensics} (\cf Appendix~\ref{sec:appendix-non-accountability-clc}).
Furthermore,
it is not readily apparent how to use another protocol instead.
As a result, the checkpointed longest chain cannot provide a resolution
to the \aad, whereas accountability gadgets can.

\subsection{Protocol Description}

\label{sec:protocol-description}

\begin{algorithm}[t]
    \caption{Checkpoint vote generator (helper functions: see Appendix~\ref{sec:appendix-pseudocode-helpers})}
    \label{algo:pseudocode-checkpoint-proposer}
    \begin{algorithmic}[1]
        \scriptsize
        \newcommand{\lastCp}{\mathrm{lastCp}}
        \newcommand{\proposals}{\mathrm{props}}
        \newcommand{\currIter}{\mathrm{currIter}}
        \newcommand{\me}{\mathrm{myself}}
        \State $\lastCp, \proposals \gets \bot, \{ c: \bot \mid c=0,1,... \}$
        \Comment{Last checkpoint, proposals}
        \For{$\currIter \gets 0,1,...$}
            \Comment{Loop over checkpoint iterations}
            \If{$\lastCp \neq \bot$}
                \While{waiting $\Tcheckpoint$ time}
                    \Comment{Wait $\Tcheckpoint$ time after new checkpoint decision}
                    \State $\Call{PerformBookkeeping}{}$
                \EndWhile
            \EndIf
            \If{$\operatorname{CpLeaderOfIter}(\currIter) = \me$}
                \Comment{Broadcast proposal if leader of current iteration}
                \State $\Call{Broadcast}{}(\CpReq{propose}{\currIter,\Call{GetCurrProposalTip}{\null}}{\me})$
                \label{line:leader-generator}
            \EndIf
            \While{waiting $\Ttimeout$ time}
                \Comment{Wait $\Ttimeout$ for timeout of checkpoint iteration}
                \State $\Call{PerformBookkeeping}{}$
                \On{$\proposals[\currIter] \neq \bot$, but at most once}
                    \Comment{Act on the first proposal received from authorized leader before end of $\Tcheckpoint$-wait and $\Ttimeout$-timeout}
                    \If{$\Call{IsValidProposal}{\proposals[\currIter]}$}
                    \label{line:isvalidproposal}
                        \Comment{Valid proposal is consistent with current checkpoint-respecting LC}
                        \State $\Call{SubmitVote}{}(\CpReq{accept}{\currIter,\proposals[\currIter]}{\me})$
                        \label{line:vote-accept-generator}
                    \Else
                        \State $\Call{SubmitVote}{}(\CpReq{reject}{\currIter}{\me})$
                        \label{line:vote-reject-generator}
                        \Comment{Reject invalid proposal}
                    \EndIf
                \EndOn
            \EndWhile
            \State $\Call{SubmitVote}{}(\CpReq{reject}{\currIter}{\me})$
            \label{line:vote-timeout-generator} \Comment{Reject due to timeout}
            \State \WaitOn{$\operatorname{Checkpoint}(c,b)$ \textbf{from} checkpoint vote interpreter (Alg.~\ref{algo:pseudocode-checkpoint-extractor}) \textbf{with} $c = \currIter$}
            \State $\lastCp \gets b$
            \label{loc:pseudocode-checkpoint-proposer-nextiter} \Comment{Keep track of checkpoint decision}
        \EndFor
        \Macro{$\Call{PerformBookkeeping}{}$}
            \On{receiving $\operatorname{Checkpoint}(c,b)$ \textbf{from} checkpoint vote interpreter (Alg.~\ref{algo:pseudocode-checkpoint-extractor}) \textbf{with} $c = \currIter$}
                \State \Goto{loc:pseudocode-checkpoint-proposer-nextiter}
                \label{line:goto2}
                \Comment{Jump to conclusion of current iteration}
            \EndOn
            \On{receiving $\operatorname{Proposal}(c,b)$ \textbf{from} checkpoint leader of iteration $c$ \textbf{with} $\proposals[c] = \bot$}
            \label{line:recvverifiedprop2}
                \State $\proposals[c] \gets b$
                    \Comment{Keep track of first proposal from authorized leader per iteration $c$}
            \EndOn
        \EndMacro
    \end{algorithmic}
\end{algorithm}

\begin{algorithm}[t]
    \caption{Checkpoint vote interpreter (helper functions: see Appendix~\ref{sec:appendix-pseudocode-helpers})}
    \label{algo:pseudocode-checkpoint-extractor}
    \begin{algorithmic}[1]
        \scriptsize
        \newcommand{\currIter}{\mathrm{currIter}}
        \newcommand{\currVotes}{\mathrm{currVotes}}
        \newcommand{\pk}{\mathrm{pk}}
        \newcommand{\Reject}{\mathrm{Reject}}
        \newcommand{\Accept}{\mathrm{Accept}}
        \newcommand{\committee}{\mathrm{committee}}
        \newcommand{\block}{\mathrm{block}}
        \newcommand{\req}{\mathrm{req}}
        \newcommand{\vote}{\mathrm{vote}}
        \For{$\currIter \gets 0,1,...$}
            \State $\currVotes \gets \{ (\pk, \bot) \mid \pk\in\committee \}$
            \Comment{Latest vote of each node}
            \While{true}
                \Comment{Go through votes as ordered by $\PIbft$}
                \State $\vote \gets \Call{GetNextVerifiedVoteFromBft}{\null}$
                \label{line:verifiedvote}
                \Comment{Verify signature}
                    \If{$\vote = \CpReq{accept}{c,b}{\pk}$ \textbf{with} $c = \currIter$}
                        \State $\currVotes[\pk] \gets \Accept(b)$
                        \Comment{Count accept vote for block $b$}
                    \ElsIf{$\vote = \CpReq{reject}{c}{\pk}$ \textbf{with} $c = \currIter$}
                        \State $\currVotes[\pk] \gets \Reject$
                        \Comment{Count reject vote}
                    \EndIf
                \If{$\exists b: |\{ \pk \mid \currVotes[\pk] = \Accept(b) \}| \geq 2n/3$}
                \label{line:threshold1}
                    \State $\Call{OutputCp}{\operatorname{Checkpoint}(\currIter, b)}$
                    \label{line:decision-accept-interpreter}
                    \Comment{New checkpoint decision}
                    \State \Break
                \ElsIf{$|\{ \pk \mid \currVotes[\pk] = \Reject \}| \geq n/3$}
                \label{line:threshold2}
                    \State $\Call{OutputCp}{\operatorname{Checkpoint}(\currIter,\bot)}$
                    \label{line:decision-reject-interpreter}
                    \Comment{Abort current iteration}
                    \State \Break
                \EndIf
            \EndWhile
        \EndFor
    \end{algorithmic}
\end{algorithm}

Accountability gadgets, denoted by $\PIacc$, can be used in conjunction with any dynamically available longest chain (LC) protocol $\PIlc$ such as Nakamoto's PoW LC protocol \cite{nakamoto_paper}, Sleepy \cite{sleepy}, Ouroboros \cite{kiayias2017ouroboros,david2018ouroboros,badertscher2018ouroboros} and Chia \cite{cohen2019chia} (Fig.~\ref{fig:protocol-description-overall-diagram}).
Subsequently, we focus on permissioned/PoS LC protocols.
PoW and Proof-of-Space are discussed in
Appendix~\ref{sec:appendix-extension-to-PoW}.
The protocol $\PIlc$ then follows a modified chain selection rule where honest nodes build on the tip of the LC that contains all of the \emph{checkpoints} they have observed.\footnote{There are no conflicting checkpoints unless a safety violation has already occurred.
Upon detecting a safety violation, honest nodes stop participating in the protocol. Punishment of parties identified by the accountability mechanism as malicious and system recovery are handled by mechanisms external to the protocol.}
We call
such chains 
\emph{checkpoint-respecting LCs}.
At each time slot $t$, each honest node $i$ outputs the $k$-deep prefix of the checkpoint-respecting LC (or the prefix of the latest checkpoint, whichever is longer) in its view as $\LOGda{i}{t}$.

The accountability gadget %
$\PIacc$ has three main components as shown on Fig.~\ref{fig:protocol-description-overall-diagram}:
a checkpoint vote generator (Alg.~\ref{algo:pseudocode-checkpoint-proposer}) 
issues checkpoint proposals and votes,
an accountable SMR protocol $\PIbft$ 
is used to reach consensus on which votes to count for the checkpoint decision,
and a checkpoint vote interpreter (Alg.~\ref{algo:pseudocode-checkpoint-extractor})
outputs checkpoint decisions computed deterministically from 
the checkpoint votes sequenced by $\PIbft$.
The protocol $\PIbft$ can be instantiated with any accountable BFT protocol,
\eg, 
Streamlet \cite{streamlet}, LibraBFT \cite{libraBFT}, or HotStuff \cite{yin2018hotstuff}.
It is used as a black box ordering service within $\PIacc$ and
is assumed to have confirmation time $\Tconfirm$.
We denote the ledger output by $\PIbft$ as $\LOGbft{}{}$,
and emphasize that it is internal to $\PIacc$.
Checkpoint vote generator and interpreter are run locally by each node and interact with $\PIbft$ and $\LOGbft{}{}$.
Hence, when
we refer to $\LOGbft{}{}$ in the following,
we mean the ledger in the view of a specific node.

The accountability gadget $\PIacc$ proceeds in \emph{checkpoint iterations} denoted by $c$, each of which attempts to checkpoint a 
block in $\PIlc$.
The checkpoint vote generator produces requests which can be of three forms:
$\CpReq{propose}{c,b}{i}$ 
proposes
block $b$ for checkpointing in 
iteration $c$, 
$\CpReq{accept}{c,b}{i}$ votes in favor of 
block $b$ in iteration $c$,
$\CpReq{reject}{c}{i}$ votes to reject
iteration $c$.
Here, $\langle ... \rangle_i$ denotes a message signed by node $i$.
Each iteration $c$ has a publicly verifiable and unique random leader $\ld{c}$.
The leader obtains the $\kcp$-deep block $b$ on its checkpoint-respecting LC
and broadcasts it to all other nodes as the checkpoint proposal for $c$
(Alg.~\ref{algo:pseudocode-checkpoint-proposer}, l.~\ref{line:leader-generator}).
Nodes receive checkpoint proposals
(signed by the legitimate leader $\ld{c}$)
from the network,
and order 
them with respect to their checkpoint iteration
(Alg.~\ref{algo:pseudocode-checkpoint-proposer},
l.~%
\ref{line:recvverifiedprop2}).
A proposal is
\emph{valid}
in view of node $i$
if the proposed block is within $i$'s checkpoint-respecting LC and extends all previous checkpoints observed by $i$.
During an iteration $c$, each node $i$ checks if the proposal received for $c$
is valid
(Alg.~\ref{algo:pseudocode-checkpoint-proposer}, l.~\ref{line:isvalidproposal}).
If it has 
received a valid proposal 
with 
block $b$, it votes $\CpReq{accept}{c,b}{i}$ (Alg.~\ref{algo:pseudocode-checkpoint-proposer}, l.~\ref{line:vote-accept-generator}).
Otherwise,
if $i$ does not receive
any valid proposal
for a timeout period $\Ttimeout$, $i$ votes $\CpReq{reject}{c}{i}$ (Alg.~\ref{algo:pseudocode-checkpoint-proposer}, l.~\ref{line:vote-reject-generator}, \ref{line:vote-timeout-generator}).
Votes are input 
as payload to $\PIbft$, which 
sequences them into
ledger $\LOGbft{}{}$.
Thus,
nodes
reach consensus on which votes to count for
checkpoint decision of the given iteration.

The checkpoint vote interpreter (Alg.~\ref{algo:pseudocode-checkpoint-extractor}) processes the
sequence of votes in 
$\LOGbft{}{}$ to produce checkpoint decisions.
Each node processes verified votes (\ie, with valid signature)
in the order they appear on
$\LOGbft{}{}$
(Alg.~\ref{algo:pseudocode-checkpoint-extractor}, l.~\ref{line:verifiedvote}).
Upon observing $2n/3$ unique $\CpReq{accept}{c,b}{i}$ votes
for
a block $b$ and the current iteration $c$,
each node outputs $b$ as the \emph{checkpoint} for 
$c$ 
(Alg.~\ref{algo:pseudocode-checkpoint-extractor}, l.~\ref{line:decision-accept-interpreter}).
The checkpointed blocks output
over time,
together with their respective prefixes, constitute
$\LOGacc{i}{t}$.
Furthermore, 
checkpoint decisions are fed back to $\PIlc$ and the checkpoint vote generator
to ensure consistency of future block production in $\PIlc$
and of checkpoint proposals with prior checkpoints.
Oppositely,
upon observing $n/3$ unique $\CpReq{reject}{c}{i}$ votes for the current iteration $c$,
each node outputs $\bot$ as the checkpoint decision for $c$
(Alg.~\ref{algo:pseudocode-checkpoint-extractor}, l.~\ref{line:decision-reject-interpreter}) to
signal that $c$
was aborted
with no new checkpointed block.
This
happens 
if honest nodes 
$\mathsf{reject}$
because they have not seen progress for too long.
Once a node outputs a 
decision for 
current iteration $c$,
the checkpoint vote interpreter 
proceeds to
$c+1$;
thus, only a single decision is output per iteration.

Upon receiving a new checkpoint for the current iteration $c$,
nodes leave 
$c$
of the checkpoint vote generator and enter
$c+1$
(Alg.~\ref{algo:pseudocode-checkpoint-proposer}, l.~%
\ref{line:goto2}).
If the checkpoint decision was for $b \neq \bot$,
nodes wait for $\Tcheckpoint$ time
(\emph{checkpoint interval})
before
considering
checkpoint proposals for $c+1$.
As will become clear in the analysis, the checkpoint interval is crucial to ensure that $\PIlc$'s chain dynamics are `not disturbed too much' by accommodating and respecting checkpoints.
Note that throughout the execution there is only a single
instantiation $\PIbft$, since the votes for different checkpoint
iterations can still be ordered into a single sequence.

\subsection{Security Properties}
\label{sec:proof-sketch}

\begin{figure*}[t]
    \hspace{-3cm}
    \begin{tikzpicture}[
        x=1cm,
        y=1.1cm,
        _box/.style = {
            align=center,
            text depth=5pt,
            draw,
        },
        _labelinline/.style = {
            align=center,
            anchor=south west,
            fill=black!10,
            draw,
        },
        statement/.style = {
            _box,
            labelinline/.style = {
                _labelinline,
            },
        },
        statementoutcome/.style = {
            _box,
            fill=black!5!green!10,
            labelinline/.style = {
                _labelinline,
                fill=white,
            },
        },
        leadsto/.style = {
            -Latex,
        },
    ]
        \tiny

        \begin{scope}
            \node (box1) at (0,-1) [statement] {\tikz{\node [labelinline] {1};} Consistency of check-\\pointed blocks in $\PIlc$};
            \node (box2) at (2.5,-1) [statement] {\tikz{\node [labelinline] {2};} Accountability\\ of $\LOGbft{}{}$};
            \node (box3) at (1,-2) [statementoutcome] {\tikz{\node [labelinline] {3};} Accountability \\of $\LOGacc{}{}$ (App.~\ref{sec:appendix-accountable-safety})};

            \draw [leadsto] (box1) -> (box3);
            \draw [leadsto] (box2) -> (box3);
        \end{scope}

        \begin{scope}[xshift=5.75cm]
            \node (box4) at (-1.5,0) [statement] {\tikz{\node [labelinline] {4};} Gap and recency properties\\of $\PIacc$ (App.~\ref{sec:appendix-gap-and-recency})};
            \node (box5) at (2.5,0) [statement] {\tikz{\node [labelinline] {5};} Recurrence of checkpoint-strong\\pivots (App.~\ref{sec:appendix-common-prefix})};
            \node (box6) at (0,-1) [statement] {\tikz{\node [labelinline] {6};} Security of $\PIlc$ after\\$\max(\GST,\GAT)$ (App.~\ref{sec:appendix-security-proof-for-the-longest-chain-protocol})};
            \node (box7) at (3.5,-1) [statement] {\tikz{\node [labelinline] {7};} Liveness of $\PIbft$ after\\$\max(\GST,\GAT)$};
            \node (box8) at (2,-2) [statementoutcome] {\tikz{\node [labelinline] {8};} Liveness of $\LOGacc{}{}$ after\\$\max(\GST,\GAT)$ (App.~\ref{sec:appendix-liveness})};

            \draw [leadsto] (box4) -> (box6);
            \draw [leadsto] (box5) -> (box6);
            \draw [leadsto] (box6) -> (box8);
            \draw [leadsto] (box7) -> (box8);
        \end{scope}

        \begin{scope}[xshift=12.6cm]
            \node (box9) at (0,-1) [statement] {\tikz{\node [labelinline] {9};} Checkpointing\\$\kcp$-deep blocks};
            \node (box10) at (2.4,-1) [statement] {\tikz{\node [labelinline] {10};} Security of $\PIlc$\\under synchrony};
            \node (box11) at (1.2,-2) [statementoutcome] {\tikz{\node [labelinline] {11};} Security of $\LOGda{}{}$};

            \draw [leadsto] (box9) -> (box11);
            \draw [leadsto] (box10) -> (box11);
        \end{scope}
    \end{tikzpicture}%
    \caption[]{Dependency of the security properties of $\LOGacc{}{}$ and $\LOGda{}{}$ on the properties of $\PIacc$, $\PIlc$ and $\PIbft$.}
    \label{fig:proof-sketch-diagram}
\end{figure*}

In this section, we formalize and prove the security properties \textbf{P1} and \textbf{P2} of Section~\ref{sec:resolution-via-acc-gadgets} %
for accountability gadgets based on \emph{permissioned} LC protocols \cite{sleepy,david2018ouroboros,kiayias2017ouroboros,badertscher2018ouroboros}.
(For an extension of the security analysis to Proof-of-Work and Proof-of-Space LC protocols, see Appendix~\ref{sec:appendix-extension-to-PoW}.)

For the worst case,
we first fix
$f = \lceil n/3 \rceil-1$
and consider an accountability gadget $\PIacc$ instantiated with a partially synchronous BFT protocol $\PIbft$ that provides $(n-2f)$-accountable-safety at all times, and $f$-liveness under partial synchrony after the network partition heals and sufficiently many honest nodes are awake.
(An example $\PIbft$
is HotStuff \cite{yin2018hotstuff} with a quorum size 
$(n-f)$.)
Let $\lambda$ and $\sigma$ denote the security parameters associated with the employed cryptographic primitives and the LC protocol $\PIlc$, respectively.
Then,
the security properties of 
$\LOGacc{}{}$ and $\LOGda{}{}$ output by the accountability gadget $\PIacc$ and the
LC protocol $\PIlc$ (modified to be checkpoint-respecting) are:
\begin{theorem}
\label{thm:main-security}
For any $\lambda, \sigma$,
and $\Tconfirm, k, \kcp$ linear in $\sigma$:
\begin{enumerate}
    \item (\textbf{P1: Accountability})
    Under $\AdvEnvPG$,
    the accountable ledger $\LOGacc{}{}$ provides $(n-2f)$-accountable-safety at all times (except with probability $\negl(\lambda)$),
    and there exists a constant $\mathbf{C}$ such that $\LOGacc{}{}$ provides $f$-liveness with confirmation time $\Tconfirm$ after $\mathbf{C}\max(\GST,\GAT)$ (except with probability $\negl(\sigma)$).
    \item (\textbf{P2: Dynamic Availability})
    Under $\AdvEnvDA$,
    the available ledger $\LOGda{}{}$ provides $1/2$-safety
    and $1/2$-liveness at all times
    (except with probability $\negl(\sigma)+\negl(\lambda)$).
    \item (\textbf{Prefix})
    $\LOGacc{}{}$ is always a prefix of $\LOGda{}{}$.
\end{enumerate}
\end{theorem}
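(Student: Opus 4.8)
The plan is to follow the modular decomposition sketched in Figure~\ref{fig:proof-sketch-diagram}, treating $\PIbft$ entirely as a black box and establishing the three claims separately. The guiding principle is that $\LOGacc{}{}$ is a \emph{deterministic} function of the vote sequence $\LOGbft{}{}$ (via the vote interpreter, Alg.~\ref{algo:pseudocode-checkpoint-extractor}), whereas $\LOGda{}{}$ is a $k$-deep prefix of the checkpoint-respecting longest chain. Consequently, the accountable-safety of $\LOGacc{}{}$ should be reducible to that of $\PIbft$, while the availability of $\LOGda{}{}$ should be reducible to that of the (modified) longest chain protocol $\PIlc$; the two analyses are then largely independent and are conducted under the respective models $\AdvEnvPG$ and $\AdvEnvDA$.

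For the accountability half of P1, I would argue that any two conflicting accountable ledgers force a safety violation of $\LOGbft{}{}$, so that $\PIbft$'s $(n-2f)$-accountable-safety is inherited. Concretely, I first establish consistency of checkpointed blocks (Box~1): a checkpoint for iteration $c$ requires $2n/3$ $\mathsf{accept}$ votes for one block, of which at least $2n/3 - f > n/3$ are honest; since honest nodes only $\mathsf{accept}$ proposals lying on their checkpoint-respecting LC that extend all previously observed checkpoints, and since a node enters iteration $c{+}1$ only after deciding $c$, the checkpointed blocks form a chain. If $\LOGbft{}{}$ is safe, all honest nodes process the same vote prefix, apply the interpreter deterministically, and output identical checkpoint decisions, so $\LOGacc{}{}$ is consistent. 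Hence any conflict in $\LOGacc{}{}$ implies a conflict in $\LOGbft{}{}$, and $\adj$ applied to the BFT transcripts outputs $\geq n-2f$ provable violators (Box~2 $\Rightarrow$ Box~3); this holds under $\AdvEnvPG$ at all times, up to the $\negl(\lambda)$ probability of signature forgery, because accountable safety of $\PIbft$ relies on neither synchrony nor full participation.

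For the liveness half of P1 and for P2 I would analyze the two ledgers separately. Liveness of $\LOGacc{}{}$ (Box~8) after time $\mathbf{C}\max(\GST,\GAT)$ combines liveness of $\PIbft$ (Box~7) with security of $\PIlc$ after stabilization (Box~6): once all honest nodes are awake and messages arrive within $\Delta$, an honest leader is eventually drawn (after at most a constant expected number of aborted iterations, which fixes $\mathbf{C}$), its $\kcp$-deep proposal is valid in every honest view by the common-prefix property of $\PIlc$, all $\geq n-f > 2n/3$ honest nodes vote $\mathsf{accept}$, $\PIbft$ orders these votes, and the interpreter emits a new checkpoint. For P2 I would show security of $\LOGda{}{}$ under $\AdvEnvDA$ (Box~11) by arguing that checkpointing never removes honest blocks from the chain: since only $\kcp$-deep blocks are checkpointed (Box~9) and, under synchrony with $\beta<1/2$, such blocks already lie in the honest common prefix (Box~10), the adversary with $f<n/3$ cannot force a checkpoint off the honest chain, so the checkpoint-respecting fork choice coincides with the plain longest-chain rule on honest blocks and inherits its $1/2$-safety and $1/2$-liveness; crucially this needs no liveness of $\PIbft$, so it survives the stalling of BFT under dynamic participation. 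The prefix claim (part~3) is then almost immediate: every checkpoint lies on the checkpoint-respecting LC and $\LOGda{}{}$ is defined to extend at least to the latest checkpoint, whence $\LOGacc{}{} \preceq \LOGda{}{}$.

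I expect the main obstacle to be Box~6, the security of the checkpoint-respecting $\PIlc$ after $\max(\GST,\GAT)$, because the modified fork-choice rule couples the longest-chain dynamics to the adversarially-influenceable checkpoint stream, so the standard backbone/Sleepy-style analysis no longer applies verbatim. The plan here is to establish the gap and recency properties of $\PIacc$ (Box~4) --- that checkpoints are spaced $\Tcheckpoint$ apart and always pin recent, $\kcp$-deep blocks --- and then to prove recurrence of ``checkpoint-strong pivots'' (Box~5): time points at which the unique honest checkpoint already sits in the stabilized common prefix, so that from then on the checkpoint constraint is vacuous and the chain evolves exactly as an unconstrained longest chain. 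Showing that such pivots recur sufficiently often, using that $\Tcheckpoint$ is large enough for the chain to heal between checkpoints, is the delicate probabilistic core of the argument; the remaining reductions are comparatively routine bookkeeping.
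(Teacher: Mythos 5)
Your decomposition is the paper's own (it follows Fig.~\ref{fig:proof-sketch-diagram} box for box), and your treatment of P2, the prefix property, and the liveness half of P1 --- including identifying the self-healing of the checkpoint-respecting LC via the gap/recency properties and the recurrence of checkpoint-strong pivots as the technical core --- matches the paper's argument. The gap is in the accountability half of P1.

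You claim that ``any conflict in $\LOGacc{}{}$ implies a conflict in $\LOGbft{}{}$,'' deriving this from Box~1 via the count that at least $2n/3-f>n/3$ of the accept voters are honest and honest voters only accept proposals extending prior checkpoints. But that counting presupposes at most $f$ adversarial nodes, which is precisely what you may not assume when proving \emph{accountable} safety: the adjudication function must output $n-2f$ provable violators in every execution exhibiting a safety violation, including those where the adversary controls far more than $f$ nodes. In such executions $\LOGbft{}{}$ can be perfectly safe --- all honest nodes agree on the vote sequence and hence, by determinism of the interpreter, on the checkpoint \emph{decisions} --- and yet the checkpointed blocks $b_1$ (iteration $c'$) and $b_2$ (iteration $c>c'$) can conflict in the block tree, because the $n-f$ accept votes for $b_2$ were cast mostly by adversarial nodes ignoring the consistency requirement. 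Then $\LOGacc{}{}$ (the prefix chain of the latest checkpoint) violates safety while $\adj$ applied to $\LOGbft{}{}$ has nothing to report, and your reduction identifies no culprits. The paper's proof of Theorem~\ref{thm:accountable-safety-theorem} handles exactly this case as a separate branch: by quorum intersection, at least $2(n-f)-n = n-2f$ nodes voted accept for \emph{both} $b_1$ in iteration $c'$ and $b_2$ in iteration $c$; since $\LOGbft{}{}$ is safe, any node voting in iteration $c$ must already have seen $b_1$ checkpointed for $c'$, and an honest node only accepts a proposal consistent with every checkpoint it has seen --- so each of those $n-2f$ double-voters is irrefutably guilty, and $\adj$ returns that set directly from the vote record rather than delegating to $\PIbft$'s adjudicator. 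Without this second branch the $(n-2f)$-accountable-safety claim is unproven.
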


Here,
$\negl(.)$ denotes a negligible function
that decays faster than all polynomials.
To prove Theorem~\ref{thm:main-security}, we first focus on the security of $\LOGda{}{}$ under $\AdvEnvDA$, 
synchronous network with dynamic availability
(\tikz[baseline=-2.75pt]{ \draw node[rectangle,draw,inner sep=2pt] {\tiny 11}; } of Fig.~\ref{fig:proof-sketch-diagram}).
We know from \cite{sleepy,david2018ouroboros,kiayias2017ouroboros,badertscher2018ouroboros} that $\PIlc$ is safe and live with some security parameter $\sigma$ under the original LC rule when $\beta<1/2$
(\tikz[baseline=-2.75pt]{ \draw node[rectangle,draw,inner sep=2pt,fill=black!10] {\tiny 10}; }).
Hence, if $\kcp$ is selected as an appropriate linear function of $\sigma$, once a block becomes $\kcp$-deep at time $s$ in the LC held by
an honest
node, it stays on the LCs held by all
honest nodes forever.
Since there are at least $n-f>f$ $\mathsf{accept}$ votes for any block checkpointed by an honest node at time $s$, there is at least one honest node that voted $\mathsf{accept}$ for any such block.
As honest nodes $\mathsf{accept}$ only proposals that are at least $\kcp$-deep in their LCs,
(\tikz[baseline=-2.75pt]{ \draw node[rectangle,draw,inner sep=2pt,fill=black!10] {\tiny 9}; }),
checkpointed blocks are already part of the LCs held by every other honest node at time $s$ under $\AdvEnvDA$.
Thus, new checkpoints can only appear in the common prefix of the honest nodes' LCs and do not affect the security of the LC protocol.
Next
accountability and liveness of $\LOGacc{}{}$ under $\AdvEnvPG$
(\tikz[baseline=-2.75pt]{ \draw node[rectangle,draw,inner sep=2pt] {\tiny 3}; }, \tikz[baseline=-2.75pt]{ \draw node[rectangle,draw,inner sep=2pt] {\tiny 8}; }).
The pseudocode of $\PIacc$ stipulates that honest nodes
$\mathsf{accept}$ only
proposals that are consistent with previous checkpoints
(\tikz[baseline=-2.75pt]{ \draw node[rectangle,draw,inner sep=2pt,fill=black!10] {\tiny 1}; }),
and a new checkpoint requires
$(n-f)$ $\mathsf{accept}$ votes (l.~\ref{line:threshold1} of Alg.~\ref{algo:pseudocode-checkpoint-extractor}).
Thus, in the event of a safety violation, either there are two inconsistent ledgers $\LOGbft{}{}$ held by honest nodes, or
$(n-2f)$ nodes
have voted for inconsistent checkpoints.
In both cases,
$(n-2f)$ adversarial nodes are identified as violators by invoking either
$(n-2f)$-accountable-safety of $\LOGbft{}{}$
(\tikz[baseline=-2.75pt]{ \draw node[rectangle,draw,inner sep=2pt,fill=black!10] {\tiny 2}; })
or the consistency requirement for checkpoints
(\tikz[baseline=-2.75pt]{ \draw node[rectangle,draw,inner sep=2pt,fill=black!10] {\tiny 1}; }),
implying $(n-2f)$-accountable-safety of $\LOGacc{}{}$.
Detailed proof 
in App.~\ref{sec:appendix-accountable-safety}.

Liveness of $\LOGacc{}{}$
(\tikz[baseline=-2.75pt]{ \draw node[rectangle,draw,inner sep=2pt] {\tiny 8}; })
requires the existence of iterations after $\max(\GST,\GAT)$ where all honest nodes 
$\mathsf{accept}$
honest proposals.
This, in turn, depends on whether the proposals by honest leaders are consistent with the checkpoint-respecting LCs 
at
honest nodes after $\max(\GST,\GAT)$.
To show this,
we prove that $\PIlc$ recovers its security after $\max(\GST,\GAT)$
(\tikz[baseline=-2.75pt]{ \draw node[rectangle,draw,inner sep=2pt,fill=black!10] {\tiny 6}; }).
We first observe that 
with
checkpoints, honest nodes abandon their LC if a new checkpoint 
appears 
on another (possibly shorter) chain.
Then, 
some honest blocks produced meanwhile may not
contribute to chain growth.
This feature of checkpoint-respecting LCs violates a core assumption of the standard proof techniques \cite{backbone,sleepy,kiayias2017ouroboros} for LC protocols.
To bound the number of 
abandoned honest blocks and demonstrate the \emph{self-healing}
property of checkpoint respecting LCs, we follow an
approach introduced in \cite{sankagiri_clc}. 
We first observe the \emph{gap} and \emph{recency} properties for $\PIacc$ (App.~\ref{sec:appendix-gap-and-recency})
which
are
necessary conditions for any checkpointing mechanism to ensure 
self-healing of $\PIlc$
(\tikz[baseline=-2.75pt]{ \draw node[rectangle,draw,inner sep=2pt,fill=black!10] {\tiny 4}; }).
The gap property states that 
$\Tcheckpoint$ has to be sufficiently longer than the time it takes for a
proposal to get checkpointed.
The recency property requires that newly checkpointed blocks were held in the checkpoint-respecting LC of at least one honest node within a short time interval
before
the checkpoint decision.

Using the gap and recency properties, we next extend the analysis of \cite{sankagiri_clc} to permissioned protocols by introducing the concept of \emph{checkpoint-strong pivots}, a generalization of strong pivots \cite{sleepy}.
Whereas strong pivots count honest and adversarial blocks to claim 
convergence of the LC in the view of different honest nodes, checkpoint-strong pivots consider only 
honest blocks that are guaranteed to extend the checkpoint-respecting LC, thus resolving
non-monotonicity for these chains.
Recurrence of checkpoint-strong pivots after $\max(\GST,\GAT)$
(\tikz[baseline=-2.75pt]{ \draw node[rectangle,draw,inner sep=2pt,fill=black!10] {\tiny 5}; })
along with the gap and recency properties
lead to
security of $\PIlc$ after $\max(\GST,\GAT)$. 
Details 
in App.~\ref{sec:appendix-security-proof-for-the-longest-chain-protocol}.
Given 
self-healing
of $\PIlc$, liveness of $\LOGacc{}{}$ follows from 
liveness of $\PIbft$ after $\max(\GST,\GAT)$
(\tikz[baseline=-2.75pt]{ \draw node[rectangle,draw,inner sep=2pt,fill=black!10] {\tiny 7}; }).
Full proof
in App.~\ref{sec:appendix-liveness}.

Finally, the prefix property follows readily from the way in which both $\LOGda{}{}$ and $\LOGacc{}{}$ are derived from the checkpoint-respecting LC.

\section{Experimental Evaluation}
\label{sec:experiments}

To evaluate whether the protocol of Section~\ref{sec:protocol-description}
can be
a drop-in replacement for 
the Ethereum 2 beacon chain,
we have implemented a prototype\footnote{Source code:
\url{https://github.com/tse-group/accountability-gadget-prototype}}.
\Ourprotocol incurs average required bandwidth comparable to Gasper at reduced latency of $\LOGacc{}{}$.
Gasper's resilience decreases
as the number of nodes increases, for fixed latency of 
$\LOGacc{}{}$,
due to a new attack %
\cite{ethresearch-balancing-attack2},
whereas our protocol is provably secure.
Supplemental
material of experimental evaluation is given in Appendix~\ref{sec:appendix-experiments}.

\begin{figure}[t]
    \centering
    \begin{tikzpicture}[
        x=2cm,
        y=1cm,
        component/.style = {
            minimum height=0.75cm,
            minimum width=2cm,
            align=center,
            draw,
            fill=white,
        },
        componentrotated/.style = {
            component,
            rotate=90,
        },
        arrowlabelrotated/.style ={
            midway,
            above,
            rotate=90,
            anchor=west,
            align=left,
        },
    ]
        \scriptsize

        \node [componentrotated,fill=black!10!blue!10] (lcminer) at (0,0) {LC block pro-\\duction lottery};
        \node [componentrotated,fill=black!10!green!10] (lcblocktree) at (1,0) {LC block\\tree manager};
        \node [componentrotated] (votegenerator) at (2,0) {Checkpoint\\vote generator};
        \node [componentrotated,fill=black!10] (hotstuff) at (3,0) {HotStuff};
        \node [componentrotated] (voteinterpreter) at (4,0) {Checkpoint\\vote interpreter};

        \node [component,minimum width=5cm,fill=black!10] (gossip) at (2,-2.5) {libp2p Gossipsub network};

        \node [black!30] (network) at (4.5,-2.5) {Network};
        \draw [Latex-Latex,black!30,dashed] (gossip.east) -- (network.west);

        \draw [-Latex] ($(lcblocktree.north)+(0,-0.5)$) -- ($(lcminer.south)+(0,-0.5)$) node [arrowlabelrotated] {Tip to grow};
        \draw [-Latex] (lcminer) |- (gossip) node [pos=0.37,right,align=left] {New\\blocks};
        \draw [-Latex] (gossip.north -| lcblocktree.west) -- (lcblocktree.west) node [midway,left,align=right] {New\\blocks};

        \draw [-Latex] ($(votegenerator.north)+(0,-0.425)$) -- ($(lcblocktree.south)+(0,-0.425)$) node [arrowlabelrotated] {Get/validate\\proposals};
        \draw [-Latex] ($(lcblocktree.south)+(0,-0.575)$) -- ($(votegenerator.north)+(0,-0.575)$);

        \draw [-Latex] ($(votegenerator.south)+(0,-0.5)$) -- ($(hotstuff.north)+(0,-0.5)$) node [arrowlabelrotated] {Votes};
        \draw [-Latex] ($(hotstuff.south)+(0,-0.5)$) -- ($(voteinterpreter.north)+(0,-0.5)$) node [arrowlabelrotated] {Votes};

        \draw [-Latex] ([xshift=-0.75mm] votegenerator.west) -- ([xshift=-0.75mm] votegenerator.west |- gossip.north);
        \draw [-Latex] ([xshift=0.75mm] votegenerator.west |- gossip.north) -- ([xshift=0.75mm] votegenerator.west) node [pos=0.75,right] {Proposals};

        \draw [-Latex] ([xshift=-0.75mm] hotstuff.west) -- ([xshift=-0.75mm] hotstuff.west |- gossip.north);
        \draw [-Latex] ([xshift=0.75mm] hotstuff.west |- gossip.north) -- ([xshift=0.75mm] hotstuff.west) node [pos=0.625,right,align=left] {HotStuff\\messages};

        \draw [-Latex,myparula21] ([yshift=-0.8cm] lcblocktree.south) -- (1.5,-1.8) -- (5.2,-1.8) node [above,xshift=-2em] {$\LOGda{}{}$};

        \coordinate (tmp1) at ($(voteinterpreter.south)+(0,-0.5)$);
        \coordinate (tmp2) at (5.2,0);
        \draw [-Latex,myparula61] ($(voteinterpreter.south)+(0,-0.5)$) -- (tmp1 -| tmp2) node [above,xshift=-2em] {$\LOGacc{}{}$};

        \draw [-Latex] ($(voteinterpreter.south)+(0,-0.5)$) -- ++(0.2,0) node [below,xshift=-3mm,anchor=north west] {Checkpoints} |- (3,1.4) -| (votegenerator);
        \draw [-Latex] ($(voteinterpreter.south)+(0,-0.5)$) -- ++(0.2,0) |- (3,1.4) node [above] {Checkpoints} -| (lcblocktree);

    \end{tikzpicture}%
    \caption{Components and their interactions in
    implementation
    of Fig.~\ref{fig:protocol-description-overall-diagram}.
    Gray: off the shelf components used as black boxes.
    Blue: taken from $\PIlc$ without modification.
    Green: taken from $\PIlc$, modified
    to respect checkpoints.}
    \label{fig:protocol-systems-diagram}
\end{figure}

A diagram of the different components of our prototype
and their interactions is provided in Figure~\ref{fig:protocol-systems-diagram}.
We use a longest chain protocol modified to respect latest checkpoints as $\PIlc$, with a permissioned block production lottery;
and a variant of
HotStuff\footnote{We used this Rust implementation: \url{https://github.com/asonnino/hotstuff} \cite{hotstuff2chain}} as $\PIbft$.
All communication (including HotStuff's) takes place in a broadcast fashion via \texttt{libp2p}'s \texttt{Gossipsub} protocol\footnote{We used this Rust implementation: \url{https://github.com/libp2p/rust-libp2p} \cite{gossipsub}}, mimicking Ethereum 2 \cite{eth2-spec-p2p}.
The parameters of \ourprotocol 
match the number of validators ($n=4096$),
average block inter-arrival time ($12\,\mathrm{s}$)
and block payload size ($22\,\mathrm{KBytes}$)
of the
Ethereum 2 beacon chain.
We chose $\kcp=6$ so that an honest checkpoint proposal is
likely 
accepted by honest nodes,
and $k=6$ for swift $72\,\mathrm{s}$ average 
latency
of $\LOGda{}{}$.
Setting $\Thotstuff = 20\,\mathrm{s}$
and
$\Ttimeout = 1\,\mathrm{min}$
avoids HotStuff timeouts escalating into checkpoint timeouts unnecessarily.
Finally, to target $5\times$ improvement in average
$\LOGacc{}{}$ latency over Gasper (\cf Figure~\ref{fig:latency-bandwidth}), we set $\Tcheckpoint = 5\,\mathrm{min}$.

Adversarial nodes in the experiment
boycott
leader duty in $\PIbft$
and mine selfishly \cite{selfishmining} in $\PIlc$.
We ran our prototype
(a) with no adversary (Fig.~\ref{fig:prototype-results-ledgers-nofaults}(l)),
and (b) with $\beta=25\%$ adversary (Fig.~\ref{fig:prototype-results-ledgers-faults}(r)),
each 
for $2500\,\mathrm{s}$ on five AWS EC2 \texttt{c5a.8xlarge} instances in each of ten AWS regions
with $82$ nodes per machine, for a total of $4100$ nodes.
Each honest (adversarial) node connected to $15$ ($15$ honest, $15$ adversarial) randomly selected peers for the
peer-to-peer
network.
Both without
(Fig.~\ref{fig:prototype-results-ledgers-nofaults}(l))
and under
attack
(Fig.~\ref{fig:prototype-results-ledgers-faults}(r))
$\LOGda{}{}$
(\ref{leg:prototype-results-ledgers-nofaults-LOGava})
grows steadily, albeit under attack 
slower
due to selfish mining.
In both cases,
$\LOGacc{}{}$
(\ref{leg:prototype-results-ledgers-nofaults-LOGacc})
periodically catches up with $\LOGda{}{}$.
Timeouts cause
minor
delayed catch-up.

\begin{figure}[t]
    \centering
    \begin{tikzpicture}
        \begin{axis}[
            mysimpleplot,
            xlabel={Time [s] $\rightarrow$},
            ylabel={Rx bandwidth\\~[KB/s]},
            legend columns=4,
            xmin=900, xmax=1500,
            ymin=1e1, ymax=1e6,
            height=0.3\linewidth,
            width=0.85\linewidth,
            ymode=log,
            x label style={at={(axis description cs:1.0,1.0)},anchor=north east,xshift=0em},
        ]
        
            \draw [fill=red,draw=none,fill opacity=0.2] (axis cs:1010,2e1) rectangle (axis cs:1055,5e5);
            \draw [fill=red,draw=none,fill opacity=0.2] (axis cs:1340,2e1) rectangle (axis cs:1385,5e5);
            \draw [fill=orange,draw=none,fill opacity=0.2] (axis cs:1130,2e1) rectangle (axis cs:1190,3e4);
        
            \addplot [black,no marks,very thin] table [x=t,y=rx] {figures/prototype-experiments/traffic_runtime2500_n4100_f0_Tcp300_13.53.192.169.txt};
            \label{leg:prototype-results-traffic-nofaults-rx}

            \legend{};
            
        \end{axis}
    \end{tikzpicture}%
    \vspace{-0.3em}
    \caption[]{%
        Setting of Fig.~\ref{fig:prototype-results-ledgers-nofaults}(l):
        The network traffic
        for each AWS instance
        (\ie, $82$ nodes)
        shows four marked spikes (red)
        for every new checkpoint ($\Tcheckpoint=5\,\mathrm{min}$ interval) and smaller spikes (orange)
        for every new $\PIlc$ block ($\Tslot=7.5\,\mathrm{s}$ interval).
    }
    \label{fig:prototype-results-traffic-nofaults}
\end{figure}

\begin{figure}[t]
    \centering
    \begin{tikzpicture}
        \begin{axis}[
            mysimpleplot,
            xlabel={Time [s] $\rightarrow$},
            ylabel={Rx bandwidth\\~[KB/s]},
            legend columns=4,
            xmin=850, xmax=1500,
            ymin=1e1, ymax=1e6,
            height=0.3\linewidth,
            width=0.55\linewidth,
            ymode=log,
            x label style={at={(axis description cs:0.5,1.0)},anchor=north west,xshift=0em},
        ]

            \draw [fill=red,draw=none,fill opacity=0.2] (axis cs:927,2e1) rectangle (axis cs:1077,7e5);
            \draw [fill=orange,draw=none,fill opacity=0.2] (axis cs:1195,2e1) rectangle (axis cs:1205,1e5);
            \draw [fill=orange,draw=none,fill opacity=0.2] (axis cs:1136,2e1) rectangle (axis cs:1146,1e5);
        
            \addplot [black,no marks,very thin] table [x=t,y=rx] {figures/prototype-experiments/traffic_runtime2500_n4100_f1025_Tcp300_3.238.227.12.txt};
            \label{leg:prototype-results-traffic-faults-rx-adversary}

            \legend{};
            
        \end{axis}
        \begin{axis}[
            mysimpleplot,
            xshift=0.45\linewidth,
            xlabel={Time [s] $\rightarrow$},
            legend columns=4,
            xmin=850, xmax=1500,
            ymin=1e1, ymax=1e6,
            height=0.3\linewidth,
            width=0.55\linewidth,
            yticklabels={},
            ymode=log,
            x label style={at={(axis description cs:0.5,1.0)},anchor=north west,xshift=0em},
        ]

            \draw [fill=red,draw=none,fill opacity=0.2] (axis cs:927,2e1) rectangle (axis cs:1077,7e5);
            \draw [fill=orange,draw=none,fill opacity=0.2] (axis cs:1195,2e1) rectangle (axis cs:1205,1e5);
            \draw [fill=orange,draw=none,fill opacity=0.2] (axis cs:1136,2e1) rectangle (axis cs:1146,1e5);
        
            \addplot [black,no marks,very thin] table [x=t,y=rx] {figures/prototype-experiments/traffic_runtime2500_n4100_f1025_Tcp300_13.53.192.225.txt};
            \label{leg:prototype-results-traffic-faults-rx-honest}

            \legend{};
            
        \end{axis}
    \end{tikzpicture}%
    \vspace{-0.3em}
    \caption[]{%
        Setting of Fig.~\ref{fig:prototype-results-ledgers-faults}(r):
        Leader timeouts in $\PIbft$ and $\PIacc$
        can delay new checkpoints (red). \Eg,
        after the end of a checkpoint interval
        ($t\approx870\,\mathrm{s}$),
        and subsequent
        $\PIacc$ leader timeout ($t\approx930\,\mathrm{s}$),
        honest nodes vote to
        reject the current checkpoint iteration,
        but the decision is delayed by another
        $\PIbft$ leader timeout.
        The next checkpoint iteration has an honest
        leader, but a decision is again delayed by
        a $\PIbft$ leader timeout,
        until a new checkpoint is finally reached
        ($t\approx1070\,\mathrm{s}$).
        Traffic at honest nodes (\textbf{right})
        lacks some of the small spikes (orange) 
        of
        traffic at adversarial nodes (\textbf{left}),
        since the adversary temporarily withholds
        some of its blocks
        from honest nodes due to selfish mining.%
    }
    \label{fig:prototype-results-traffic-faults}
\end{figure}

Network traffic
(Figs.~\ref{fig:prototype-results-traffic-faults}, \ref{fig:prototype-results-traffic-nofaults} \emph{for an exemplary AWS instance},
\ie, for $82$ nodes)
shows
frequent small spikes 
for
$\PIlc$ blocks and infrequent wide spikes 
for $\PIacc$ votes and $\PIbft$ blocks and votes.
Traffic increases slightly under attack
(per node:
avg. $78\,\mathrm{KB/s}$ vs. $56\,\mathrm{KB/s}$,
peak $1.5\,\mathrm{MB/s}$ vs. $1.34\,\mathrm{MB/s}$)
because inactive adversarial leaders
cause 
more
iterations 
in $\PIacc$ and $\PIbft$.
The bandwidth requirement 
does not limit participation
using
consumer-grade Internet access.
Note that our prototype does not employ bandwidth reduction
techniques that are orthogonal to the consensus problem,
such as aggregate and short signatures or spreading the vote
out over time.
Figure~\ref{fig:latency-bandwidth} corroborates that
even if voting 
was artificially rate-limited and thus spread out over time
(as is the case in Gasper),
bandwidth and latency comparable
to Gasper could be achieved.

\begin{figure}[t]
    \centering
    \begin{tikzpicture}
        \scriptsize
        \begin{axis}[
            mysimpleplot,
            xlabel={Avg. $\LOGacc{}{}$ latency [s] $\rightarrow$},
            ylabel={Avg. bandwidth\\requirement [vote/s]},
            xmax=2400,
            width=0.75\linewidth,
            height=0.45\linewidth,
            ymode=log,
            xmode=log,
            xtick={400,600,900,1350,2025},
            xticklabels={400,600,900,1350,2025},
            legend columns=1,
            legend style={
                at={(1,1)},
                xshift=0.3em,
                anchor=north west,
                draw=none,
                /tikz/every even column/.append style={
                    column sep=0.3em
                },
                cells={
                    align=left,
                },
                /tikz/every odd column/.style={yshift=0.5em},
            },
            x label style={at={(axis description cs:0.0,0.0)},anchor=south west,xshift=0.5em},
        ]

            \addplot [
                nodes near coords,
                point meta=explicit symbolic,
                visualization depends on={value \thisrow{style}\as\mystyle},
                every node near coord/.append style={\mystyle},
                myparula43,
                only marks,
                forget plot,
            ] table [
                x=LOGacc_latency,
                y=vote_bandwidth,
                meta=label,
            ] {
label LOGacc_latency vote_bandwidth style
$16$ 480.0 42.666666666666664 {below left}
$32$ 960.0 21.333333333333332 {below left}
$64$ 1920.0 10.666666666666666 {below left}
            };
            
            \addplot [
                myparula43,
                no marks,
                forget plot,
            ] table [
                x=LOGacc_latency,
                y=vote_bandwidth,
                meta=label,
            ] {
label LOGacc_latency vote_bandwidth
$16$ 480.0 42.666666666666664
$17$ 510.0 40.15686274509804
$18$ 540.0 37.925925925925924
$19$ 570.0 35.92982456140351
$20$ 600.0 34.13333333333333
$21$ 630.0 32.507936507936506
$22$ 660.0 31.03030303030303
$23$ 690.0 29.681159420289855
$24$ 720.0 28.444444444444443
$25$ 750.0 27.30666666666667
$26$ 780.0 26.25641025641026
$27$ 810.0 25.28395061728395
$28$ 840.0 24.38095238095238
$29$ 870.0 23.54022988505747
$30$ 900.0 22.755555555555556
$31$ 930.0 22.021505376344084
$32$ 960.0 21.333333333333332
$33$ 990.0 20.68686868686869
$34$ 1020.0 20.07843137254902
$35$ 1050.0 19.504761904761903
$36$ 1080.0 18.962962962962962
$37$ 1110.0 18.45045045045045
$38$ 1140.0 17.964912280701753
$39$ 1170.0 17.504273504273502
$40$ 1200.0 17.066666666666666
$41$ 1230.0 16.650406504065042
$42$ 1260.0 16.253968253968253
$43$ 1290.0 15.875968992248062
$44$ 1320.0 15.515151515151516
$45$ 1350.0 15.170370370370371
$46$ 1380.0 14.840579710144928
$47$ 1410.0 14.52482269503546
$48$ 1440.0 14.222222222222221
$49$ 1470.0 13.931972789115646
$50$ 1500.0 13.653333333333334
$51$ 1530.0 13.38562091503268
$52$ 1560.0 13.12820512820513
$53$ 1590.0 12.880503144654087
$54$ 1620.0 12.641975308641975
$55$ 1650.0 12.412121212121212
$56$ 1680.0 12.19047619047619
$57$ 1710.0 11.976608187134502
$58$ 1740.0 11.770114942528735
$59$ 1770.0 11.570621468926554
$60$ 1800.0 11.377777777777778
$61$ 1830.0 11.191256830601093
$62$ 1860.0 11.010752688172042
$63$ 1890.0 10.835978835978835
$64$ 1920.0 10.666666666666666
            };
            
            \addlegendimage{myparula43}
            \addlegendentry{Gasper\\$n=4096$, vary $C$}

            \addplot [
                nodes near coords,
                point meta=explicit symbolic,
                visualization depends on={value \thisrow{style}\as\mystyle},
                every node near coord/.append style={\mystyle},
                myparula42,
                only marks,
                forget plot,
            ] table [
                x=LOGacc_latency,
                y=vote_bandwidth,
                meta=label,
            ] {
label LOGacc_latency vote_bandwidth style
$16$ 480.0 85.33333333333333 {above right}
$32$ 960.0 42.666666666666664 {above right}
$64$ 1920.0 21.333333333333332 {above right}
            };
            
            \addplot [
                myparula42,
                no marks,
                forget plot,
            ] table [
                x=LOGacc_latency,
                y=vote_bandwidth,
                meta=label,
            ] {
label LOGacc_latency vote_bandwidth
$16$ 480.0 85.33333333333333
$17$ 510.0 80.31372549019608
$18$ 540.0 75.85185185185185
$19$ 570.0 71.85964912280701
$20$ 600.0 68.26666666666667
$21$ 630.0 65.01587301587301
$22$ 660.0 62.06060606060606
$23$ 690.0 59.36231884057971
$24$ 720.0 56.888888888888886
$25$ 750.0 54.61333333333334
$26$ 780.0 52.51282051282052
$27$ 810.0 50.5679012345679
$28$ 840.0 48.76190476190476
$29$ 870.0 47.08045977011494
$30$ 900.0 45.51111111111111
$31$ 930.0 44.04301075268817
$32$ 960.0 42.666666666666664
$33$ 990.0 41.37373737373738
$34$ 1020.0 40.15686274509804
$35$ 1050.0 39.009523809523806
$36$ 1080.0 37.925925925925924
$37$ 1110.0 36.9009009009009
$38$ 1140.0 35.92982456140351
$39$ 1170.0 35.008547008547005
$40$ 1200.0 34.13333333333333
$41$ 1230.0 33.300813008130085
$42$ 1260.0 32.507936507936506
$43$ 1290.0 31.751937984496124
$44$ 1320.0 31.03030303030303
$45$ 1350.0 30.340740740740742
$46$ 1380.0 29.681159420289855
$47$ 1410.0 29.04964539007092
$48$ 1440.0 28.444444444444443
$49$ 1470.0 27.86394557823129
$50$ 1500.0 27.30666666666667
$51$ 1530.0 26.77124183006536
$52$ 1560.0 26.25641025641026
$53$ 1590.0 25.761006289308174
$54$ 1620.0 25.28395061728395
$55$ 1650.0 24.824242424242424
$56$ 1680.0 24.38095238095238
$57$ 1710.0 23.953216374269005
$58$ 1740.0 23.54022988505747
$59$ 1770.0 23.141242937853107
$60$ 1800.0 22.755555555555556
$61$ 1830.0 22.382513661202186
$62$ 1860.0 22.021505376344084
$63$ 1890.0 21.67195767195767
$64$ 1920.0 21.333333333333332
            };
            
            \addlegendimage{myparula42}
            \addlegendentry{Gasper\\$n=8192$, vary $C$}

            \addplot [
                nodes near coords,
                point meta=explicit symbolic,
                visualization depends on={value \thisrow{style}\as\mystyle},
                every node near coord/.append style={\mystyle},
                myparula53,
                only marks,
                forget plot,
            ] table [
                x=LOGacc_latency,
                y=vote_bandwidth,
                meta=label,
            ] {
label LOGacc_latency vote_bandwidth style
$10\,\mathrm{min}$ 372.0 34.13333333333333 {above}
$20\,\mathrm{min}$ 672.0 17.066666666666666 {below left}
$30\,\mathrm{min}$ 972.0 11.377777777777778 {below left}
$40\,\mathrm{min}$ 1272.0 8.533333333333333 {below left}
$50\,\mathrm{min}$ 1572.0 6.826666666666667 {below left}
$60\,\mathrm{min}$ 1872.0 5.688888888888889 {below left}
            };
            
            \addplot [
                myparula53,
                no marks,
                forget plot,
            ] table [
                x=LOGacc_latency,
                y=vote_bandwidth,
                meta=label,
            ] {
label LOGacc_latency vote_bandwidth
$10\,\mathrm{min}$ 372.0 34.13333333333333
$11\,\mathrm{min}$ 402.0 31.03030303030303
$12\,\mathrm{min}$ 432.0 28.444444444444443
$13\,\mathrm{min}$ 462.0 26.256410256410255
$14\,\mathrm{min}$ 492.0 24.38095238095238
$15\,\mathrm{min}$ 522.0 22.755555555555556
$16\,\mathrm{min}$ 552.0 21.333333333333332
$17\,\mathrm{min}$ 582.0 20.07843137254902
$18\,\mathrm{min}$ 612.0 18.962962962962962
$19\,\mathrm{min}$ 642.0 17.964912280701753
$20\,\mathrm{min}$ 672.0 17.066666666666666
$21\,\mathrm{min}$ 702.0 16.253968253968253
$22\,\mathrm{min}$ 732.0 15.515151515151516
$23\,\mathrm{min}$ 762.0 14.840579710144928
$24\,\mathrm{min}$ 792.0 14.222222222222221
$25\,\mathrm{min}$ 822.0 13.653333333333334
$26\,\mathrm{min}$ 852.0 13.128205128205128
$27\,\mathrm{min}$ 882.0 12.641975308641975
$28\,\mathrm{min}$ 912.0 12.19047619047619
$29\,\mathrm{min}$ 942.0 11.770114942528735
$30\,\mathrm{min}$ 972.0 11.377777777777778
$31\,\mathrm{min}$ 1002.0 11.010752688172044
$32\,\mathrm{min}$ 1032.0 10.666666666666666
$33\,\mathrm{min}$ 1062.0 10.343434343434344
$34\,\mathrm{min}$ 1092.0 10.03921568627451
$35\,\mathrm{min}$ 1122.0 9.752380952380953
$36\,\mathrm{min}$ 1152.0 9.481481481481481
$37\,\mathrm{min}$ 1182.0 9.225225225225225
$38\,\mathrm{min}$ 1212.0 8.982456140350877
$39\,\mathrm{min}$ 1242.0 8.752136752136753
$40\,\mathrm{min}$ 1272.0 8.533333333333333
$41\,\mathrm{min}$ 1302.0 8.325203252032521
$42\,\mathrm{min}$ 1332.0 8.126984126984127
$43\,\mathrm{min}$ 1362.0 7.937984496124031
$44\,\mathrm{min}$ 1392.0 7.757575757575758
$45\,\mathrm{min}$ 1422.0 7.5851851851851855
$46\,\mathrm{min}$ 1452.0 7.420289855072464
$47\,\mathrm{min}$ 1482.0 7.26241134751773
$48\,\mathrm{min}$ 1512.0 7.111111111111111
$49\,\mathrm{min}$ 1542.0 6.965986394557823
$50\,\mathrm{min}$ 1572.0 6.826666666666667
$51\,\mathrm{min}$ 1602.0 6.69281045751634
$52\,\mathrm{min}$ 1632.0 6.564102564102564
$53\,\mathrm{min}$ 1662.0 6.440251572327044
$54\,\mathrm{min}$ 1692.0 6.320987654320987
$55\,\mathrm{min}$ 1722.0 6.206060606060606
$56\,\mathrm{min}$ 1752.0 6.095238095238095
$57\,\mathrm{min}$ 1782.0 5.988304093567251
$58\,\mathrm{min}$ 1812.0 5.885057471264368
$59\,\mathrm{min}$ 1842.0 5.785310734463277
$60\,\mathrm{min}$ 1872.0 5.688888888888889
            };
            
            \addlegendimage{myparula53}
            \addlegendentry{\Ourprotocol\\$n=4096$, vary $\Tcheckpoint$}

            \addplot [
                nodes near coords,
                point meta=explicit symbolic,
                visualization depends on={value \thisrow{style}\as\mystyle},
                every node near coord/.append style=\mystyle,
                myparula52,
                only marks,
                forget plot,
            ] table [
                x=LOGacc_latency,
                y=vote_bandwidth,
                meta=label,
            ] {
label LOGacc_latency vote_bandwidth style
$10\,\mathrm{min}$ 372.0 68.26666666666667 {above right}
$20\,\mathrm{min}$ 672.0 34.13333333333333 {above right}
$30\,\mathrm{min}$ 972.0 22.755555555555556 {above right}
$40\,\mathrm{min}$ 1272.0 17.066666666666666 {above right}
$50\,\mathrm{min}$ 1572.0 13.653333333333334 {above right}
$60\,\mathrm{min}$ 1872.0 11.377777777777778 {above right}
            };
            
            \addplot [
                myparula52,
                no marks,
                forget plot,
            ] table [
                x=LOGacc_latency,
                y=vote_bandwidth,
                meta=label,
            ] {
label LOGacc_latency vote_bandwidth
$10\,\mathrm{min}$ 372.0 68.26666666666667
$11\,\mathrm{min}$ 402.0 62.06060606060606
$12\,\mathrm{min}$ 432.0 56.888888888888886
$13\,\mathrm{min}$ 462.0 52.51282051282051
$14\,\mathrm{min}$ 492.0 48.76190476190476
$15\,\mathrm{min}$ 522.0 45.51111111111111
$16\,\mathrm{min}$ 552.0 42.666666666666664
$17\,\mathrm{min}$ 582.0 40.15686274509804
$18\,\mathrm{min}$ 612.0 37.925925925925924
$19\,\mathrm{min}$ 642.0 35.92982456140351
$20\,\mathrm{min}$ 672.0 34.13333333333333
$21\,\mathrm{min}$ 702.0 32.507936507936506
$22\,\mathrm{min}$ 732.0 31.03030303030303
$23\,\mathrm{min}$ 762.0 29.681159420289855
$24\,\mathrm{min}$ 792.0 28.444444444444443
$25\,\mathrm{min}$ 822.0 27.30666666666667
$26\,\mathrm{min}$ 852.0 26.256410256410255
$27\,\mathrm{min}$ 882.0 25.28395061728395
$28\,\mathrm{min}$ 912.0 24.38095238095238
$29\,\mathrm{min}$ 942.0 23.54022988505747
$30\,\mathrm{min}$ 972.0 22.755555555555556
$31\,\mathrm{min}$ 1002.0 22.021505376344088
$32\,\mathrm{min}$ 1032.0 21.333333333333332
$33\,\mathrm{min}$ 1062.0 20.68686868686869
$34\,\mathrm{min}$ 1092.0 20.07843137254902
$35\,\mathrm{min}$ 1122.0 19.504761904761907
$36\,\mathrm{min}$ 1152.0 18.962962962962962
$37\,\mathrm{min}$ 1182.0 18.45045045045045
$38\,\mathrm{min}$ 1212.0 17.964912280701753
$39\,\mathrm{min}$ 1242.0 17.504273504273506
$40\,\mathrm{min}$ 1272.0 17.066666666666666
$41\,\mathrm{min}$ 1302.0 16.650406504065042
$42\,\mathrm{min}$ 1332.0 16.253968253968253
$43\,\mathrm{min}$ 1362.0 15.875968992248062
$44\,\mathrm{min}$ 1392.0 15.515151515151516
$45\,\mathrm{min}$ 1422.0 15.170370370370371
$46\,\mathrm{min}$ 1452.0 14.840579710144928
$47\,\mathrm{min}$ 1482.0 14.52482269503546
$48\,\mathrm{min}$ 1512.0 14.222222222222221
$49\,\mathrm{min}$ 1542.0 13.931972789115646
$50\,\mathrm{min}$ 1572.0 13.653333333333334
$51\,\mathrm{min}$ 1602.0 13.38562091503268
$52\,\mathrm{min}$ 1632.0 13.128205128205128
$53\,\mathrm{min}$ 1662.0 12.880503144654089
$54\,\mathrm{min}$ 1692.0 12.641975308641975
$55\,\mathrm{min}$ 1722.0 12.412121212121212
$56\,\mathrm{min}$ 1752.0 12.19047619047619
$57\,\mathrm{min}$ 1782.0 11.976608187134502
$58\,\mathrm{min}$ 1812.0 11.770114942528735
$59\,\mathrm{min}$ 1842.0 11.570621468926554
$60\,\mathrm{min}$ 1872.0 11.377777777777778
            };
            
            \addlegendimage{myparula52}
            \addlegendentry{\Ourprotocol\\$n=8192$, vary $\Tcheckpoint$}

        \end{axis}
    \end{tikzpicture}%
    \vspace{-0.3em}
    \caption[]{%
        For fixed $n$,
        the average latency of $\LOGacc{}{}$
        for Gasper and \ourprotocol (here for $\kcp=6$)
        increases with the number $C$ of slots per epoch
        and with $\Tcheckpoint$,
        respectively,
        while the bandwidth required for votes
        reduces proportionally.
        \Ourprotocol offers a better
        tradeoff
        and can tolerate twice the $n$
        at comparable latency and bandwidth
        (\ourprotocol for $n=8192, \Tcheckpoint=30\,\mathrm{min}$ vs. Gasper for $n=4096, C=32$).%
    }
    \label{fig:latency-bandwidth}
\end{figure}

Figure~\ref{fig:latency-bandwidth} compares
bandwidth
and latency of $\LOGacc{}{}$ for varying parameters
and $\beta = 0, \Delta = 0$.
Gasper transmits
$2 \cdot \frac{n}{C}$ votes per $12\,\mathrm{s}$,
with $C$ the number of slots per epoch,
\ourprotocol transmits $5 \cdot n$ votes per $\Tcheckpoint$ time.
A transaction takes on average $\frac{1}{2} + 2$ epochs to enter
into 
$\LOGacc{}{}$
for Gasper,
and $\kcp \cdot 12\,\mathrm{s} + \frac{1}{2}\cdot\Tcheckpoint$
time to enter 
$\LOGacc{}{}$
for \ourprotocol.
\Ourprotocol offers slightly improved latency at comparable
bandwidth, or comparable bandwidth and latency but for
a larger number of nodes.

\section*{Acknowledgment}
JN, ENT, and DT
are supported by
the Reed-Hodgson Stanford Graduate Fellowship,
the Stanford Center for Blockchain Research,
and 
the Center for Science of Information (CSoI), an NSF Science and Technology Center under grant agreement CCF-0939370, respectively.
\bibliographystyle{splncs04}
\bibliography{references}

\appendix

\section{Proof of \AAD}
\label{sec:appendix-aad-proof}

A formal proof of Theorem~\ref{thm:aa-dilemma} building on the observations discussed in Section~\ref{sec:availability-accountability-dilemma} is as follows:
\begin{proof}
For the sake of contradiction, suppose there exists an SMR protocol $\PI$ that provides $\betaL$-liveness and $\betaA$-accountable-safety for some $\betaL,\betaA>0$ under $\AdvEnvDA$.
Then, there exists an adjudication function $\adj$, which given two sets of \clues attesting to conflicting ledgers, outputs a non-empty set of adversarial nodes.

Suppose there are $n$ nodes in $\Env$.
Without loss of generality, we may assume that $n$ is even; otherwise, 
$\Env$ puts one
node to sleep throughout the execution.
Let 
$P$ and $Q$ partition the $n$ nodes into two disjoint equal groups
with $|P|=|Q|=n/2$.
We denote by $[\tx]$ a ledger consisting of a single transaction $\tx$ at its first index.

Next consider the following worlds:

\textbf{World 1:}
Nodes in $P$ are honest and awake throughout the execution.
$\Env$ inputs $\tx_1$ to them.
Nodes in $Q$ are asleep.
Since $\PI$ satisfies liveness for some $\betaL > 0$ under $\AdvEnvDA$, nodes in $P$ eventually generate a set of \clues $W_1$ such that $\Cf(W_1)=[\tx_1]$.

\textbf{World 2:}
Nodes in $Q$ are honest and awake throughout the execution.
$\Env$ inputs $\tx_2$ to them.
Nodes in $P$ are asleep.
Since $\PI$ satisfies liveness for some $\betaL > 0$ under $\AdvEnvDA$, nodes in $Q$ eventually generate a set of \clues $W_2$ such that $\Cf(W_2)=[\tx_2]$.

\textbf{World 3:} $\Env$ wakes up all $n$ nodes, and inputs $\tx_1$ to the nodes in $P$ and $\tx_2$ to the nodes in $Q$.
Nodes in $P$ are honest.
Nodes in $Q$ are adversarial and do not communicate with the nodes in $P$.
All nodes stay awake throughout the execution.
Since the worlds 1 and 3 are indistinguishable for the nodes in $P$, they eventually generate a set of \clues $W_1$ such that $\Cf(W_1)=[\tx_1]$.
Nodes in $Q$ simulate the execution in world 2 without any communication with the nodes in $P$.
Hence, they eventually generate a set of \clues $W_2$ such that $\Cf(W_2)=[\tx_2]$.
Thus, there is a safety violation.
So,
$\adj$ takes $W_1$ and $W_2$, and outputs a non-empty set $S_3 \subseteq Q$ of adversarial nodes.

\textbf{World 4:} $\Env$ wakes up all $n$ nodes, and inputs $\tx_1$ to the nodes in $P$ and $\tx_2$ to the nodes in $Q$.
Nodes in $Q$ are honest.
Nodes in $P$ are adversarial and do not communicate with the nodes in $Q$.
All nodes stay awake throughout the execution.
Since the worlds 2 and 4 are indistinguishable for the nodes in $Q$, they eventually generate a set of \clues $W_2$ such that $\Cf(W_2)=[\tx_2]$.
Nodes in $P$ simulate the execution in world 1 without any communication with the nodes in $Q$.
Hence, they eventually generate a set of \clues $W_1$ such that $\Cf(W_1)=[\tx_1]$.
Thus, there is a safety violation.
So,
$\adj$ takes $W_1$ and $W_2$, and outputs a non-empty set $S_4 \subseteq P$ of adversarial nodes.

Note however
that worlds 3 and 4 are indistinguishable from the perspective of the adjudication function $\adj$.
Thus, it is not possible that $\adj$ reliably outputs
a non-empty set which in the case of world 3
contains only elements of $Q$ and in the case
of world 4 contains only elements of $P$,
as would be required by Definition~\ref{def:adjudication-protocol}.
\end{proof}

\section{Security Proof for Accountability Gadgets}

\label{sec:appendix-security-proofs}

\subsection{Theorem Statement and Notation}
\label{sec:appendix-notation}

In this section, we consider an accountability gadget $\PIacc$ instantiated with a BFT protocol $\PIbft$ that provides $(n-2f)$-accountable-safety at all times, and $f$-liveness after  $\max(\GST,\GAT)$ under
$\AdvEnvPG$.
To match the accountable safety resilience of $\PIbft$ on $\PIacc$, we tune the thresholds for the number of accept and reject votes required to output a new checkpoint as $(n-f)$ and $f+1$ respectively on lines~\ref{line:threshold1} and~\ref{line:threshold2} of Algorithm~\ref{algo:pseudocode-checkpoint-extractor}.

Recall that $\PIacc$ is used on top of a Nakamoto-style permissioned longest chain (LC) protocol $\PIlc$.
For concreteness and notational purposes, we assume that $\PIacc$ is the Sleepy consensus protocol \cite{sleepy} although  we could have used any other permissioned LC protocol in its place.

Given the accountability gadget $\PIacc$ and the LC protocol $\PIlc$, goal of this section is to prove that the ledgers $\LOGacc{}{}$ and $\LOGda{}{}$ outputted by $\PIacc$ and $\PIlc$ satisfy Theorem~\ref{thm:main-security} repeated below:

Given any security parameter $\sigma$ and $f \leq \lceil n/2 \rceil$,
\begin{enumerate}
    \item (\textbf{P1:Accountability}) Under $\AdvEnvPG$, the accountable ledger $\LOGacc{}{}$ provides $n-2f$-accountable safety at all times, and there exists a constant $\mathbf{C}$ such that $\LOGacc{}{}$ provides $f$-liveness (with confirmation time polynomial in $\sigma$) after \linebreak $\mathbf{C}\max(\GST,\GAT)$ except with probability $\negl(\sigma)$.
    \item (\textbf{P2:Dynamic Availability}) Under $\AdvEnvDA$, the available ledger $\LOGda{}{}$ is guaranteed to be safe and live at all times, provided that $\beta<1/2$.
    \item (\textbf{Prefix}) $\LOGacc{}{}$ is always a prefix of $\LOGda{}{}$.
\end{enumerate}

Before proceeding with the proofs, we formalize the concept of \emph{security after a certain time} (We write $\LOG{}{} \preceq \LOGBLANKFIX'$ if $\LOG{}{}$ is a prefix of $\LOGBLANKFIX'$.):

\begin{definition}
\label{def:security-after}
Let $\Tconfirm$ be a polynomial function of the security parameter $\sigma$. 
We say that a ledger $\LOG{}{}$ is \emph{secure after time} $T$ and has transaction confirmation time $\Tconfirm$
if $\LOG{}{}$ satisfies:
\begin{itemize}
    \item \textbf{Safety:} For any two times $t \geq t' \geq T$, and any two honest nodes $i$ and $j$ awake at times $t$ and $t'$ respectively, either $\LOG{i}{t} \preceq \LOG{j}{t'}$ or $\LOG{j}{t'} \preceq \LOG{i}{t}$.
    \item \textbf{Liveness:} If a transaction is received by an awake honest node at some time $t \geq T$, then, for any time $t' \geq t+\Tconfirm$ and honest node $j$ that is awake at time $t'$, the transaction will be included in $\LOG{j}{t'}$.
\end{itemize}
\end{definition}

Definition~\ref{def:security-after} formalizes the meaning of `safety, liveness and security \emph{after a certain time $T$}'.
In general, there might be two different times after which a protocol is safe or live.
A protocol that is safe (live) at all times (i.e, after $T=0$) is simply called \emph{safe} (\emph{live}) without further qualification.

\subsection{Accountable Safety Resilience}
\label{sec:appendix-accountable-safety}

We first show that $\LOGacc{}{}$ provides $n-2f$-accountable safety under $\AdvEnvPG$.

\begin{proposition}
\label{thm:checkpoint-safety}
Suppose the number of adversarial nodes is less than $n-2f$. Then, if a block $b$ is checkpointed for iteration $c$ in the view of an honest node $i$ at slot $t$, for any honest node $j$ and slot $s$, either $b$ is checkpointed for iteration $c$ at slot $s$ or no block has been checkpointed for iteration $c$ yet.
\end{proposition}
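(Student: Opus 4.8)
The plan is to reduce the consistency of checkpoint decisions to the safety of the internal ledger $\LOGbft{}{}$ together with the fact that the checkpoint vote interpreter (Alg.~\ref{algo:pseudocode-checkpoint-extractor}) is a \emph{deterministic} function of the sequence of votes it reads off $\LOGbft{}{}$. Since the number of adversarial nodes is below $n-2f$ and $\PIbft$ provides $(n-2f)$-accountable-safety, which in particular implies $(n-2f)$-safety at all times, a safety violation in $\LOGbft{}{}$ would force $\adj$ to output at least $n-2f$ adversarial nodes, which is impossible here; hence $\LOGbft{}{}$ is safe. Thus, for the honest nodes $i$ and $j$ at slots $t$ and $s$, the vote sequences they read are prefix-comparable: $\LOGbft{i}{t} \preceq \LOGbft{j}{s}$ or $\LOGbft{j}{s} \preceq \LOGbft{i}{t}$. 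This is the only place the hypothesis on the number of adversarial nodes enters.

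First I would make the determinism precise. Per iteration the interpreter keeps only the latest vote of each public key (the map $\mathrm{currVotes}$), re-evaluates the fixed accept/reject thresholds after each processed vote, and advances to the next iteration upon the first threshold crossing. Consequently the whole list of checkpoint decisions a node emits is a deterministic function of the prefix of $\LOGbft{}{}$ it has processed, and two nodes that have processed the same prefix have emitted exactly the same decisions, even though later votes may overwrite earlier ones, because the processing order is pinned down by $\LOGbft{}{}$.

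Next I would isolate the \emph{decision prefix} $\pi$ for iteration $c$: when $i$ checkpoints $b$ for iteration $c$ at slot $t$, let $\pi$ be the shortest prefix of $\LOGbft{i}{t}$ whose processing first pushes the accept count for $b$ across the threshold in iteration $c$; by minimality, processing any strict sub-prefix of $\pi$ yields no decision for iteration $c$ (neither an accept for any block nor a reject). Since $\pi \preceq \LOGbft{i}{t}$ and the two views are prefix-comparable, $\pi$ is comparable to $\LOGbft{j}{s}$. If $\pi \preceq \LOGbft{j}{s}$, then $j$ has processed all of $\pi$ and, by determinism, reaches the identical decision, i.e.\ checkpoints the same $b$ for iteration $c$ by slot $s$. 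If instead $\LOGbft{j}{s} \prec \pi$, then $j$ has processed only a strict sub-prefix of $\pi$, so by the minimality of $\pi$ and determinism $j$ has reached no decision for iteration $c$; in particular no block has been checkpointed for $c$ in $j$'s view, which is the second alternative of the claim.

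The main obstacle is the careful characterization of the decision prefix and the use of determinism to simultaneously exclude two failure modes: $j$ checkpointing a different block $b' \neq b$, and $j$ aborting iteration $c$ with $\bot$ while $i$ checkpointed $b$. Both are ruled out by the single observation that the first threshold crossing along the fixed vote sequence is identical in every honest view; the only remaining freedom is how far each node has advanced along that sequence, which is exactly the dichotomy above. A minor point to dispatch is that honest nodes may cast an accept vote and later a reject vote within the same iteration (Alg.~\ref{algo:pseudocode-checkpoint-proposer}), so the interpreter's reliance on the latest vote per node must be invoked explicitly; but this affects only the common deterministic computation, not its agreement across nodes.
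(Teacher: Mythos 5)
Your proposal is correct and follows the same route the paper takes: the paper disposes of this proposition in one line by observing that it ``follows from the safety of $\LOGbft{}{}$ when the number of adversarial nodes is less than $n-2f$,'' and your argument is exactly the elaboration of that observation via the determinism of the checkpoint vote interpreter on prefix-comparable views of $\LOGbft{}{}$. The decision-prefix dichotomy you isolate (either $j$ has processed past the first threshold crossing and reaches the identical decision, or it has not and has decided nothing for iteration $c$) is the right way to make the paper's one-liner rigorous.
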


Proposition~\ref{thm:checkpoint-safety} follows from the safety of $\LOGbft{}{}$ when the number of adversarial nodes is less than $n-2f$.

\begin{theorem}[Accountable Safety of $\LOGacc{}{}$]
\label{thm:accountable-safety-theorem}
$\LOGacc{}{}$ provides $n-2f$-accountable-safety.
\end{theorem}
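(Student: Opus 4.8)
The plan is to show that any safety violation of $\LOGacc{}{}$ leaves behind a signed, undeniable transcript from which the adjudication function $\adj$ can extract at least $n-2f$ adversarial nodes. First I would unpack what a safety violation of $\LOGacc{}{}$ actually is: since each node's accountable ledger is simply the prefix up to its latest checkpoint, and checkpoints are produced \emph{deterministically} by the vote interpreter (Alg.~\ref{algo:pseudocode-checkpoint-extractor}) from the sequence $\LOGbft{}{}$, two conflicting accountable ledgers must be witnessed by two checkpointed blocks $b$ (for some iteration $c$) and $b'$ (for some iteration $c'$, WLOG $c \le c'$) lying on incompatible branches of the block tree, i.e.\ neither an ancestor of the other. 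Each such checkpoint is backed by $n-f$ signed $\mathsf{accept}$ votes present in the corresponding $\LOGbft{}{}$ (the threshold on l.~\ref{line:threshold1}, tuned to $n-f$). I would then split on whether the two underlying $\LOGbft{}{}$ views held by honest nodes are mutually consistent; this dichotomy matches the two outcomes foreshadowed in the proof sketch, and Proposition~\ref{thm:checkpoint-safety} certifies that a genuine conflict really does force one of them.

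In the first case, the two checkpoints are read off inconsistent $\LOGbft{}{}$ ledgers. This is precisely a safety violation of the black-box protocol $\PIbft$, so I would invoke its assumed $(n-2f)$-accountable-safety: running the adjudication procedure of $\PIbft$ on the two conflicting $\LOGbft{}{}$ transcripts outputs at least $n-2f$ provably adversarial nodes, as required.

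In the second case the two checkpoints come from consistent $\LOGbft{}{}$ views. Because the interpreter outputs at most one decision per iteration and is deterministic in $\LOGbft{}{}$, consistency forces $b=b'$ whenever $c=c'$; hence a genuine conflict requires $c<c'$. The crux is then to argue that no honest node could have signed $\CpReq{accept}{c',b'}{}$. Here I would use the bookkeeping of the vote generator (Alg.~\ref{algo:pseudocode-checkpoint-proposer}): a node advances to iteration $c'$ only after its interpreter has output a decision for every earlier iteration, in particular the decision $b$ for iteration $c$, which consistency of $\LOGbft{}{}$ makes common to all honest nodes. An honest node accepts a proposal only if it is valid, i.e.\ extends all observed checkpoints including $b$ (l.~\ref{line:isvalidproposal}); since $b'$ conflicts with $b$, it cannot be valid, so every honest node votes $\mathsf{reject}$ in iteration $c'$. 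Consequently all $n-f \ge n-2f$ signatures backing $(c',b')$ belong to adversarial nodes, and $\adj$ outputs exactly this set.

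The main obstacle I anticipate is not any single inequality but the careful reduction in the first paragraph: pinning down that a conflict of the derived ledger $\LOGacc{}{}$ must manifest as a pair of conflicting checkpoints, and that these are cleanly attributable either to a $\PIbft$-level disagreement or to a single consistent vote sequence. The most delicate bookkeeping is establishing that an honest $\mathsf{accept}$-voter in the later iteration has provably already committed to the earlier, conflicting checkpoint; this hinges on the sequential, wait-for-decision structure of Alg.~\ref{algo:pseudocode-checkpoint-proposer} together with determinism of the interpreter, and is where I would spend the most care. Everything else reduces to counting signatures, which are undeniable and hence never implicate an honest node, matching Definition~\ref{def:accountable-safety}.
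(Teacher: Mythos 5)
Your proposal is correct and follows essentially the same route as the paper's proof: reduce a safety violation of $\LOGacc{}{}$ to two conflicting checkpointed blocks each backed by $n-f$ signed $\mathsf{accept}$ votes, then case-split on whether $\LOGbft{}{}$ itself is safe, invoking the $(n-2f)$-accountable-safety of $\PIbft$ in the inconsistent case and the extends-all-prior-checkpoints validity rule in the consistent case. The one divergence is in the consistent case: the paper intersects the two $\mathsf{accept}$ quorums to obtain a set $S$ of more than $2(n-f)-n = n-2f$ double-voters and shows no honest node could have cast both votes, whereas you argue that \emph{every} one of the $n-f$ voters backing the later checkpoint is provably adversarial (since any honest node entering that iteration must, by Proposition~\ref{thm:checkpoint-safety} and the wait-for-decision structure of Alg.~\ref{algo:pseudocode-checkpoint-proposer}, have already adopted the earlier conflicting checkpoint). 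Both arguments are sound and rest on the same evidence; yours blames a larger set, while the paper's quorum-intersection version is the more standard formulation and is what you would likely converge to when writing the details out.
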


\begin{proof}
To show that $\LOGacc{}{}$ provides $n-2f$-accountable-safety, we construct an adjudication protocol $\adj$, which in the case of a safety violation on $\LOGacc{}{}$, outputs at least $n-2f$ nodes as adversarial and never outputs an honest node.
For this purpose, suppose there is a safety violation on $\LOGacc{}{}$.
Then, there exist honest nodes $i$ and $j$, iterations $c$ and $c'$ (without loss of generality $c' \leq c$) and slots $s$ and $t$ such that (i) a block $b_1 \neq \bot$ is checkpointed for iteration $c'$ in the view of node $i$ at slot $s$, (ii) a block $b_2 \neq \bot$ is checkpointed for iteration $c$ in the view of node $j$ at slot $t$, (iii) $b_1,b_2$ conflict with each other.
Then, within $\LOGbft{s}{i}$, there are accept votes $\CpReq{accept}{c',b_1}{k}$ from at least $n-f$ nodes for the proposal $b_1$ and iteration $c'$.
Similarly, within $\LOGbft{t}{j}$, there are accept votes $\CpReq{accept}{c,b_2}{k}$ from at least $n-f$ nodes for the proposal $b_2$ and iteration $c$.
Thus, more than $2(n-f)-n=n-2f$ nodes voted both $\CpReq{accept}{c',b_1}{k}$ and $\CpReq{accept}{c,b_1}{k}$.
Let $S$ denote the set of these nodes.

Next, consider the following two cases:

(i) There is a safety violation on $\LOGbft{}{}$.
(Recall that $\LOGbft{}{}$ provides $n-2f$-accountable safety.)
In this case, since the adjudication protocol for $\LOGbft{}{}$ identifies at least $n-2f$ nodes as adversarial, $\adj$ simply returns the output of the adjudication protocol for $\LOGbft{}{}$. 

(ii) Suppose there is no safety violation on $\LOGbft{}{}$ and $c'<c$. 
Then, via Proposition~\ref{thm:checkpoint-safety}, every node $k$ in $S$ have either seen $b_1$ become checkpointed for iteration $c'$ before voting accept $\CpReq{accept}{c,b_2}{k}$ or voted for iteration $c$ before seeing any checkpoint for iteration $c'$.
However, an honest node votes accept for the proposal $b_2$ of iteration $c$ only if it has already seen a block checkpointed for all past iterations including $c'$, and if $b_2$ is consistent with all of the checkpoints from the past iterations, including $b_1$.
(This is because an honest node votes accept for a proposal only if it is part of the node's checkpoint-respecting LC.)
Then, no honest node could have voted both $\CpReq{accept}{c',b_1}{k}$ and $\CpReq{accept}{c,b_2}{k}$, implying that all of the nodes in $S$ have violated the protocol.
If $c'=c$, then all of the nodes in $S$ voted accept twice for two different proposals for iteration $c$, which is again a protocol violation.

Finally, when there is no safety violation on $\LOGbft{}{}$, $\LOGbft{i}{s} \preceq \LOGbft{j}{t}$ or $\LOGbft{j}{t} \preceq \LOGbft{i}{s}$.
In either case, all of the accept votes $\CpReq{accept}{c',b_1}{k}$ and $\CpReq{accept}{c,b_2}{k}$ are within the longer of $\LOGbft{i}{s}$ and $\LOGbft{j}{t}$, which can be used to prove that the nodes in $S$ violated the protocol.
Hence, in this case, $\adj$ returns $S$, which contains at least $n-2f$ nodes as the set of nodes that have irrefutably violated the protocol. 
\end{proof}

\subsection{Liveness Resilience}
\label{sec:appendix-liveness}

We next focus on the liveness of $\LOGacc{}{}$.

\begin{proposition}
\label{thm:bft-live}
$\PIbft$ satisfies $f$-liveness after $\max(\GST,\GAT)$ with transaction confirmation time $\Tconfirm$.
\end{proposition}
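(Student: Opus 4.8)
The plan is to obtain Proposition~\ref{thm:bft-live} as a black-box consequence of the native liveness guarantee of $\PIbft$ as a partially synchronous BFT protocol, by arguing that the environment $\AdvEnvPG$ restricted to times after $\max(\GST,\GAT)$ is, from the point of view of $\PIbft$, indistinguishable from a plain partially synchronous environment $\AdvEnvPsync$ whose stabilization time equals $\max(\GST,\GAT)$. First I would recall the property assumed of $\PIbft$ (satisfied, e.g., by HotStuff with quorum size $n-f$ when $f<n/3$): in $\AdvEnvPsync$ it is safe at all times and, for any transaction handed to an awake honest node at time $t$, confirms it in $\LOGbft{}{}$ by time $\max(t,\GST)+\Tconfirm$, \emph{regardless of the configuration the honest nodes are in at} $\GST$, since the pre-$\GST$ schedule is under adversarial control. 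This robustness to the pre-$\GST$ state is the property on which the whole argument hinges.

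Next I would verify that after $t^\star \triangleq \max(\GST,\GAT)$ the three hypotheses that $\PIbft$'s liveness needs all hold simultaneously in $\AdvEnvPG$: (i) by $\GST \le t^\star$ all messages between honest nodes are delivered within $\Delta$; (ii) by $\GAT \le t^\star$ every one of the $n-f$ honest nodes is awake and executes $\PIbft$ faithfully; and (iii) at most $f$ nodes are adversarial. Thus from $t^\star$ onward the execution meets exactly the conditions under which $\PIbft$ guarantees progress, so $t^\star$ acts as an effective global stabilization time for $\PIbft$.

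To make this into a reduction I would, given any $\AdvEnvPG$-execution of $\PIbft$, exhibit a matching $\AdvEnvPsync$-execution with $\GST':=t^\star$ in which all honest nodes are awake throughout and the disruption before $t^\star$ is reproduced by adversarial message scheduling: an honest node that is asleep before $t^\star$ in the original execution is mapped to an awake node all of whose incoming and outgoing messages the simulated adversary withholds until $t^\star$ (permissible before $\GST'$), while all honest-to-honest messages are delivered at the same times $\ge t^\star$ as in the original run and the corruption set is preserved. Applying the assumed liveness of $\PIbft$ in $\AdvEnvPsync$ to this execution then yields confirmation within $\Tconfirm$ of any transaction input after $t^\star$, which is the claim.

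The step I expect to be the main obstacle is justifying that the internal states of the formerly-asleep honest nodes at $t^\star$ do not obstruct recovery: a node that slept has a frozen state, whereas the awake-but-isolated node of the reduction may have advanced its timers and views, so the two executions need not be state-identical at $t^\star$. I would resolve this by appealing precisely to the robustness noted in the first paragraph, namely that a correct partially synchronous BFT protocol synchronizes the views of all honest nodes within a bounded time after $\GST'$ through its pacemaker/view-change mechanism and then confirms transactions, and that this guarantee is by design insensitive to the (arbitrary) honest configuration at $\GST'$. Hence whether a node reaches $t^\star$ frozen (asleep) or with advanced timers (isolated) is immaterial, and the liveness-after-$t^\star$ guarantee of $\PIbft$ carries over to $\AdvEnvPG$ verbatim.
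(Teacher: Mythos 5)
The paper gives no proof of Proposition~\ref{thm:bft-live} at all: it is a standing \emph{assumption} on the black box. Appendix~B.1 opens by stipulating that $\PIacc$ is ``instantiated with a BFT protocol $\PIbft$ that provides $(n-2f)$-accountable-safety at all times, and $f$-liveness after $\max(\GST,\GAT)$ under $\AdvEnvPG$,'' and the main text merely points to HotStuff with quorum size $n-f$ as an example satisfying it. So the proposition is discharged by definition, and your attempt to actually \emph{derive} it from the ordinary partially synchronous liveness of $\PIbft$ is a genuinely different (and more ambitious) route: it would strengthen the paper by weakening the hypothesis on the black box from ``live after $\max(\GST,\GAT)$ under $\AdvEnvPG$'' to ``live after $\GST$ under $\AdvEnvPsync$.''

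As a proof, however, your reduction has a real gap, and it is exactly the one you flag in your last paragraph. The map from an $\AdvEnvPG$-execution to an $\AdvEnvPsync$-execution with $\GST' = \max(\GST,\GAT)$ is not faithful: an asleep honest node sends nothing and does not advance its timers, whereas the awake-but-isolated node of your simulation advances views, fires timeouts, and emits messages that the network must eventually deliver after $\GST'$. The two executions therefore differ both in the honest nodes' internal states at $t^\star$ and in the set of honest messages in flight afterwards, so liveness of the simulated run does not transfer to the real run by indistinguishability. Your repair---that a correct partially synchronous protocol's pacemaker resynchronizes from an \emph{arbitrary} honest configuration at $\GST'$---is plausible for HotStuff but is not a property guaranteed by the generic statement of liveness under $\AdvEnvPsync$ (which only quantifies over configurations reachable by adversarial scheduling of always-awake honest nodes, not over frozen initial states with missing outgoing messages). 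To close the argument you would have to either verify this recovery property for the concrete $\PIbft$, or do what the paper does and simply assume liveness after $\max(\GST,\GAT)$ under $\AdvEnvPG$ as part of the interface of the black box.
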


\begin{proposition}
\label{prop:entering-iter-after-gst}
Suppose a block from iteration $c$ was checkpointed in the view of an honest node at time $t$.
Then, every honest node enters iteration $c+1$ by time $\max(\GST,\GAT,t)+\Delta$.
\end{proposition}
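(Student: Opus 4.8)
The plan is to reduce the claim about the vote generator's iteration counter first to a statement about the vote interpreter, and then to a timing statement about how quickly a confirmed prefix of $\LOGbft{}{}$ reaches all honest nodes. First I would observe, from Algorithm~\ref{algo:pseudocode-checkpoint-proposer}, that an honest node's $\mathrm{currIter}$ advances from $c$ to $c+1$ exactly when its vote interpreter (Algorithm~\ref{algo:pseudocode-checkpoint-extractor}) outputs a decision $\operatorname{Checkpoint}(c,\cdot)$ for the current iteration: the bookkeeping macro catches such a decision and jumps (l.~\ref{line:goto2}) to the conclusion of iteration $c$. Since both the generator and the interpreter process iterations strictly in the order $0,1,2,\dots$, a node is in iteration $c+1$ precisely when its interpreter has already output decisions for all of iterations $0,\dots,c$. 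Hence it suffices to show that every honest node's interpreter outputs decisions for iterations $0,\dots,c$ by time $\max(\GST,\GAT,t)+\Delta$.

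Next I would pin down why node $i$ decided at time $t$. The interpreter is a deterministic function of the ordered vote sequence $\LOGbft{}{}$, and $i$ output a non-$\bot$ checkpoint for iteration $c$ because a finite prefix $L \preceq \LOGbft{i}{t}$ already contains, in order, enough votes to trigger the thresholds of all iterations $0,\dots,c$ (in particular $(n-f)$ distinct $\CpReq{accept}{c,b}{k}$ votes for the last one; l.~\ref{line:threshold1} with the tuned threshold). Because fewer than $n-2f$ nodes are adversarial, $\LOGbft{}{}$ is safe, so any honest node $j$ with $L \preceq \LOGbft{j}{}$ replays the interpreter over the same ordered prefix and, by Proposition~\ref{thm:checkpoint-safety}, reaches the identical decisions for iterations $0,\dots,c$ and thus enters iteration $c+1$. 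This isolates the remaining task as a pure timing claim: $L \preceq \LOGbft{j}{}$ for every honest $j$ by time $\max(\GST,\GAT,t)+\Delta$.

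For the timing I would argue that, having confirmed the prefix $L$ at time $t$, node $i$ holds the messages (quorum certificates) that witness this confirmation, exploiting the certificate-based, transferable nature of confirmation in accountable BFT protocols such as the HotStuff instantiation of $\PIbft$. After $\max(\GST,\GAT,t)$ the network is synchronous and all honest nodes, $i$ included, are awake; hence $i$'s forwarding of this evidence, together with any messages queued during earlier sleep, reaches every honest node within $\Delta$. Upon receipt each honest node $j$ verifies the certificates and, by monotonicity of the confirmation rule $\Cf$ applied to a superset of evidence, confirms $L \preceq \LOGbft{j}{}$. Combined with the previous paragraph, every honest node has decided iterations $0,\dots,c$, and therefore entered iteration $c+1$, by $\max(\GST,\GAT,t)+\Delta$.

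The step I expect to be the main obstacle is this last one: rigorously establishing that a prefix of $\LOGbft{}{}$ already confirmed by a single honest node propagates to all honest nodes within one $\Delta$ after stabilization. This is strictly stronger than the generic liveness of $\PIbft$ (Proposition~\ref{thm:bft-live}), which only guarantees that \emph{fresh} transactions eventually confirm; what is needed is a transferability/monotonicity property of confirmation, which holds for certificate-based accountable BFT protocols but has to be invoked explicitly as a property of $\PIbft$ (or checked for the concrete instantiation). A secondary point requiring care is the edge case $t<\max(\GST,\GAT)$, where $i$ may have decided during asynchrony or partial participation; the stated bound still reads $\max(\GST,\GAT)+\Delta$, and one must confirm that the witnessing messages $i$ holds before stabilization are indeed delivered within $\Delta$ afterwards.
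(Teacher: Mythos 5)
Your proposal is correct and follows essentially the same route as the paper's proof: the paper likewise observes that the checkpoint decision at node $i$ means $n-f$ $\mathsf{accept}$ votes lie on $\LOGbft{i}{t}$, that all votes and BFT messages observed by $i$ reach every honest node by $\max(\GST,\GAT,t)+\Delta$, and then invokes safety of $\LOGbft{}{}$ (Theorem~\ref{thm:accountable-safety-theorem}) to conclude $\LOGbft{i}{t} \preceq \LOGbft{j}{\max(\GST,\GAT,t)+\Delta}$, whence every honest node sees the votes and enters iteration $c+1$. The transferability/monotonicity point you flag as the main obstacle is exactly the step the paper performs implicitly in that prefix claim, so your treatment is a more explicit rendering of the same argument rather than a different one.
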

\begin{proof}
Suppose a block from iteration $c$ was checkpointed in the view of an honest node $i$ at time $t$.
Then, there are at least $n-f$ $\mathsf{accept}$ votes for the block from iteration $c$ on $\LOGbft{i}{t}$.
Note that all checkpointing votes and BFT protocol messages observed by node $i$ by time $t$ are delivered to all other honest nodes by time $\max(\GST,\GAT,t)+\Delta$.
Hence, via Theorem~\ref{thm:accountable-safety-theorem}, for any honest node $j$, $\LOGbft{t}{i} \preceq \LOGbft{\max(\GST,\GAT,t)+\Delta}{j}$.
Thus, for any honest node $j$, there are at least $n-f$ $\mathsf{accept}$ votes on $\LOGbft{\max(\GST,\GAT,t)+\Delta}{j}$ for the same block from iteration $c$.
This implies that every honest node enters iteration $c+1$ by time $\max(\GST,\GAT,t)+\Delta$.
\end{proof}

\begin{theorem}[Liveness of $\LOGacc{}{}$]
\label{thm:liveness-lemma}
Suppose $\PIlc$ is secure (safe and live) after some slot $T \geq \max(\GST,\GAT)+\Delta+\Tcheckpoint$.
Then, $\LOGacc{}{}$ satisfies $f$-liveness after slot $T$
with transaction confirmation time $\sigma$ except with probability $e^{-\Omega(\sigma)}$.
\end{theorem}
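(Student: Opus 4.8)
The plan is to show that after slot $T$ the accountability gadget $\PIacc$ keeps producing fresh checkpoints, and that these checkpoints eventually swallow any transaction that has entered the (now secure) longest chain. Since $\LOGacc{}{}$ consists of the checkpointed blocks together with their prefixes, liveness of $\LOGacc{}{}$ reduces to two claims: (i) checkpoint iterations never stall after $T$, and (ii) sufficiently often an \emph{honest} leader's proposal is checkpointed, and such a proposal contains all transactions already buried deep in the checkpoint-respecting LC.

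First I would establish that iterations keep advancing and that honest nodes stay synchronized across them. By Proposition~\ref{prop:entering-iter-after-gst}, once any honest node sees a checkpoint for iteration $c$, every honest node enters iteration $c+1$ within $\Delta$, so honest nodes are never more than one iteration and a brief delay apart after $\max(\GST,\GAT)$. To see that no iteration can hang, note that in the absence of a valid proposal all $n-f$ honest nodes eventually submit $\mathsf{reject}$; by $f$-liveness of $\PIbft$ (Proposition~\ref{thm:bft-live}) these votes are sequenced into $\LOGbft{}{}$, and since honest majority gives $n-f \geq f+1$, the reject threshold is met and the iteration aborts. Hence every iteration terminates (with either a checkpoint or an abort) within $\Ttimeout + \Tconfirm + O(\Delta)$ after it begins, and after a decision for $b \neq \bot$ the next iteration begins after the $\Tcheckpoint$ wait.

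Next I would show that an iteration $c$ occurring entirely after $T$ whose leader $\ld{c}$ is honest produces a checkpoint of an honest proposal. The leader proposes the $\kcp$-deep block $b$ of its checkpoint-respecting LC. Because $\PIlc$ is secure after $T$, this $\kcp$-deep block is stable: by safety of $\PIlc$ it lies in the checkpoint-respecting LC of every honest node at every slot after its confirmation, so the proposed block is within each honest node's LC when that node votes. Moreover, since (by Theorem~\ref{thm:accountable-safety-theorem}) there are no conflicting checkpoints and, by the synchronization above, all honest nodes in iteration $c$ have observed the same checkpoint decisions for iterations $0,\dots,c-1$, the honestly proposed block---which respects all checkpoints seen by the leader---extends every previous checkpoint observed by every honest node. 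Therefore all $n-f$ honest nodes vote $\mathsf{accept}$ for $b$; $\PIbft$ liveness sequences these votes, the accept threshold $n-f$ is met, and $b$ is checkpointed into $\LOGacc{}{}$.

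Finally I would combine these with a counting argument to bound the confirmation time. A transaction received by an honest node at some $t \geq T$ enters the checkpoint-respecting LC by liveness of $\PIlc$ and, as $\PIlc$ keeps growing, becomes more than $\kcp$-deep, so from then on any honest leader's $\kcp$-deep proposal contains it. The leader of each iteration is an independent publicly-verifiable random node, so the number of iterations until an honest leader appears is stochastically dominated by a geometric variable with success probability $(n-f)/n \geq 1/2$; within $\Omega(\sigma)$ iterations an honest-led iteration occurs except with probability $e^{-\Omega(\sigma)}$, and by the previous paragraph it checkpoints a block containing the transaction. As each iteration lasts at most $\Tcheckpoint + \Ttimeout + \Tconfirm + O(\Delta)$, the total confirmation time is polynomial in $\sigma$, with the overall failure probability absorbing both the $\PIlc$ liveness error and the $e^{-\Omega(\sigma)}$ no-honest-leader event. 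The hard part will be the timing in the third paragraph: arguing that at the precise moment each honest node casts its vote (within its $\Ttimeout$ window and possibly one iteration behind the leader) the proposed block is already in that node's checkpoint-respecting LC \emph{and} extends exactly the checkpoints it has observed; this is where the depth $\kcp$, the post-$\GST$ $\Delta$-delay bound, and the safety of both $\PIlc$ and the checkpoints must be used together.
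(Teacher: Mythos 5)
Your proposal is correct and follows essentially the same route as the paper's proof: synchronization of iteration entry times via Proposition~\ref{prop:entering-iter-after-gst}, a $\Delta+\Ttimeout+\Tconfirm$ bound on the duration of any (possibly adversarially led) iteration, unanimous honest acceptance of an honest leader's proposal once $\PIlc$ is secure, and a geometric bound on the wait for an honest leader. The only cosmetic difference is that the paper derives transaction inclusion from a fresh honest block guaranteed by chain growth and quality over the $\Tcheckpoint$-interval preceding the proposal, whereas you argue the transaction is already $\kcp$-deep in the leader's chain; both are valid, and the timing subtleties you flag at the end are exactly the ones the paper handles via the assumptions $\Ttimeout > \Delta + \Tconfirm$ and $t - \Tcheckpoint > T$.
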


\begin{proof}
If there are $f$ or more adversarial nodes, we know via Proposition~\ref{thm:bft-live} that $\LOGbft{}{}$, and by implication $\LOGacc{}{}$ will not be live. 
Thus, to show $f$-liveness of $\LOGacc{}{}$, we assume that there are less than $f$ adversarial nodes and prove that $\LOGacc{}{}$ satisfies liveness.
In this case, again via Proposition~\ref{thm:bft-live}, we know that $\LOGbft{}{}$ satisfies liveness with transaction confirmation time $\Tconfirm$ after $\max(\GST,\GAT)$, a property which we will use subsequently.

Let $c'$ be the largest iteration such that a block was checkpointed in the view of some honest node before $\max(\GAT,\GST)$.
(Let $c'=0$ if there does not exist such an iteration.)
By Proposition~\ref{prop:entering-iter-after-gst}, all honest nodes would have entered iteration $c'+1$ by slot $\max(\GAT,\GST)+\Delta$.
Then, all honest nodes observe a block proposed for iteration $c'+1$ by slot $\max(\GAT,\GST)+\Delta+\Tcheckpoint$.
Thus, entrance times of the honest nodes to subsequent iterations have become synchronized by slot  $\max(\GAT,\GST)+\Delta+\Tcheckpoint$:
If an honest node enters an iteration $c>c'$ at slot $t \geq \max(\GAT,\GST)+\Delta+\Tcheckpoint$, every honest node enters iteration $c'$ by slot $t+\Delta$.

Suppose iteration $c>c'$ has an honest leader $\ld{c}$, which sends a proposal $\bprop{c}$ at slot $t > T + \Tcheckpoint$.
Note that $\bprop{c}$ is received by every honest node by slot $t+\Delta$.
Since the entrance times of nodes are synchronized by $T \leq \max(\GST,\GAT)+\Delta+\Tcheckpoint$, every honest node votes by slot $t+\Delta$.
Now, as $\PIlc$ is secure after slot $T$, $\bprop{c}$ is on all of the checkpoint-respecting LCs held by the honest nodes.
Moreover, as we will argue in the paragraph below, $\bprop{c}$ extends all of the checkpoints seen by the honest nodes by slot $t+\Delta$.
(Honest nodes see the same checkpoints from iterations preceding $c$ due to synchrony.)
Consequently, every honest node votes $\CpReq{accept}{c,\bprop{c}}{k}$ for $\bprop{c}$ by slot $t+\Delta$, all of which appear within $\LOGbft{}{}$ in the view of every honest node by slot $t+\Delta+\Tconfirm$.
Thus, $\bprop{c}$ becomes checkpointed in the view of every honest node by slot $t+\Delta+\Tconfirm$. 
(Here, we assume that $\Ttimeout$ was chosen large enough for $\Ttimeout>\Delta+\Tconfirm$ to hold.)

Note that $\ld{c}$ waits for $\Tcheckpoint$ slots before broadcasting $\bprop{c}$ after observing the last checkpoint block before iteration $c$.
Since $t-\Tcheckpoint>T$, during the period $[t-\Tcheckpoint,t]$, $\PIlc$ satisfies the chain growth and quality properties (see Appendix~\ref{sec:appendix-security-proof-for-the-longest-chain-protocol}). 
Thus, for a large enough $\Tcheckpoint$, the checkpoint-respecting LC of $\ld{c}$ at time $t$ contains at least one honest block between $\bprop{c}$ and the last checkpointed block on it from before iteration $c$.
(As a corollary, $\ld{c}$ extends all of the previous checkpoints seen by itself and all other honest nodes.)
This implies that $\bprop{c}$ contains at least one fresh honest block that enters $\LOGacc{}{}$ by slot $t+\Delta+\Tconfirm$.

Next, we show that an adversarial leader cannot make an iteration last longer than $\Delta+\Ttimeout+\Tconfirm$ for any honest node.
Indeed, if an honest node $i$ enters an iteration $c$ at slot $t$, by slot $t+\Delta+\Ttimeout$, either it sees a block become checkpointed for iteration $c$, or every honest node votes reject.
In the first case, every honest node sees a block checkpointed for iteration $c$ by slot at most $t+2\Delta+\Ttimeout$.
In the second case, reject votes $\CpReq{reject}{c}{k}$ from at least $n-f > f$ of the nodes appear in $\LOGbft{}{}$ in the view of every honest node by slot at most $t+\Delta+\Ttimeout+\Tconfirm$.
Hence, a new checkpoint, potentially $\bot$, is outputted in the view of every honest node by slot $t+\Delta+\Ttimeout+\Tconfirm$.

Finally, we observe that except with probability $(f/n)^m$, there exist an iteration with an honest leader within $m$ consecutive iterations.
Since an iteration lasts at most $\max(\Delta+\Ttimeout+\Tconfirm,\Delta+\Tconfirm+\Tcheckpoint) \leq \Delta+\Ttimeout+\Tconfirm+\Tcheckpoint$ slots and a new checkpoint containing a fresh honest block in its prefix appears when an iteration has an honest leader, any transaction received by an awake honest node at slot $t$ appears within $\LOGacc{}{}$ in the view of every honest node by slot at most $\max(t,T)+m(\Delta+\Ttimeout+\Tconfirm+\Tcheckpoint)$ except with probability $(f/n)^m$.
Hence, via a union bound over the total number of iterations (which is a polynomial in $\sigma$), we observe that if $\PIlc$ is secure after some slot $T$, then $\LOGacc{}{}$ satisfies liveness after $T$ with a transaction confirmation time polynomial in $\sigma$ except with probability $e^{-\Omega(\sigma)}$.
\end{proof}

Observe that Theorem~\ref{thm:liveness-lemma} requires $\PIlc$ to eventually regain its security under $\AdvEnvPG$ when there are less than $f$ adversarial nodes.
Although it is not possible to guarantee any security property for $\PIlc$ before $\GST$, the following theorem states that $\PIlc$ recovers its security after $\max(\GST,\GAT)$.
Note that $\PIlc$ is initialized with a parameter $p$ which denotes the probability that a given node is elected as a block producer in a given slot.

\begin{theorem}[Security of $\PIlc$]
\label{thm:security-of-PIlc}
If $p<(n-2f)/(2\Delta n (n-f))$ and there are $f$ (or less) adversarial nodes, for each sufficiently large $\Tcheckpoint$, there exists a constant $\mathbf{C} > 0$ such that for any $\GST$ and $\GAT$ specified by $\AdvEnvPG$, $\PIlc(p)$ is secure after $\mathbf{C}(\max(\GST, \GAT) + \sigma)$, with transaction confirmation time $\sigma$, except with probability $e^{-\Omega(\sqrt{\sigma})}$.
\end{theorem}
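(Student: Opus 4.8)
The plan is to establish chain growth, chain quality, and common-prefix properties for the checkpoint-respecting LC after $\max(\GST,\GAT)$, from which safety and liveness (in the sense of Definition~\ref{def:security-after}) follow in the standard way; the bulk of the work is adapting the Nakamoto-style backbone analysis of Sleepy \cite{sleepy} to tolerate the non-monotonicity introduced by respecting checkpoints. The core difficulty is that an honest node must abandon a longer chain whenever a checkpoint appears on a shorter conflicting chain, so honest blocks produced on the abandoned chain are wasted and the usual monotone accumulation of honest work is lost. Following the self-healing approach of Sankagiri et al.\ \cite{sankagiri_clc}, but extending it from their Algorand-BA-based construction to the generic permissioned setting, I would first import the gap and recency properties of $\PIacc$ (Appendix~\ref{sec:appendix-gap-and-recency}): the gap property guarantees that, for $\Tcheckpoint$ large enough, the time between consecutive checkpoints exceeds the time for a proposal to get checkpointed, so checkpoints are spaced out relative to chain growth; the recency property guarantees that any newly checkpointed block was in the checkpoint-respecting LC of some honest node within a bounded recent window, capping how far back a checkpoint can reorganize honest chains.

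Next I would define checkpoint-strong pivots (Appendix~\ref{sec:appendix-common-prefix}) as a generalization of the strong pivots of \cite{sleepy}. A strong pivot is a slot around which honest successes dominate adversarial successes within the relevant window, forcing all honest chains to converge through the honest block produced then. The key modification is to count only those honest blocks guaranteed to extend every honest node's checkpoint-respecting LC---\ie, blocks that survive any future checkpoint switch---thereby neutralizing the wasted honest blocks and restoring a monotone quantity to which a Chernoff argument applies. The recency property is what makes this possible: it ensures checkpoints fall inside a bounded recent window, so a checkpoint cannot reach back past an established checkpoint-strong pivot and undo the convergence it guarantees.

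With checkpoint-strong pivots in hand, I would show their recurrence after $\max(\GST,\GAT)$. Under the hypothesis $p<(n-2f)/(2\Delta n(n-f))$, the post-$\GAT$ honest block-production rate $\approx (n-f)p$, discounted for $\Delta$-delay wastage and for checkpoint-induced wastage bounded via the gap property, still strictly dominates the adversarial rate $\leq fp$; a Chernoff bound over windows of length linear in $\sigma$ then yields that checkpoint-strong pivots recur, and that fresh honest blocks enter the common prefix, except with the stated probability. Each recurrence pins down a common prefix among all honest nodes and forces new honest blocks in, giving common prefix (hence safety) and chain growth together with chain quality (hence liveness), both with confirmation time $\sigma$ once the system has stabilized. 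The additive constant $\mathbf{C}$ absorbs the time to reach the first post-stabilization pivot plus the $\Delta+\Tcheckpoint$ buffer needed to synchronize iteration entry, exactly as required by the hypothesis of Theorem~\ref{thm:liveness-lemma}.

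The main obstacle is this recurrence step: proving that the wasted honest blocks are genuinely bounded, so that checkpoint-strong pivots still occur with high probability despite the adversary's freedom to time checkpoints and to place $\GST,\GAT$ adversarially. This requires carefully coupling the recency window to the pivot's reach so that no checkpoint can retroactively spoil a pivot, and it is also the source of the weaker $e^{-\Omega(\sqrt{\sigma})}$ bound (rather than $e^{-\Omega(\sigma)}$): the self-healing argument partitions the post-stabilization timeline into segments and unions over the interaction between checkpoint spacing and chain-growth fluctuations, and it is this segmentation that costs a square root in the exponent.
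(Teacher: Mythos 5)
Your proposal follows essentially the same route as the paper's proof in Appendix~\ref{sec:appendix-security-proof-for-the-longest-chain-protocol}: it invokes the gap and recency properties to define inter-checkpoint intervals, generalizes Sleepy's strong pivots to checkpoint-strong pivots by counting only convergence opportunities inside those intervals, establishes their recurrence via the rate inequality implied by $p<(n-2f)/(2\Delta n(n-f))$ and sufficiently large $\Tcheckpoint$, and correctly attributes the $e^{-\Omega(\sqrt{\sigma})}$ bound to the pivot-recurrence analysis inherited from Sleepy. This matches the paper's decomposition into chain growth, chain quality, and common prefix, so no substantive gap to report.
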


Proof of the theorem is given in Section~\ref{sec:appendix-security-proof-for-the-longest-chain-protocol} and relies on a combination of the method outlined in \cite[Appendix C]{sankagiri_clc} with the concept of strong pivots from \cite{sleepy}.

Finally, since $f < n/2$, we can always find a $p$ such that $p<(n-2f)/(2\Delta n (n-f))$.
Then, given Theorems~\ref{thm:security-of-PIlc} and~\ref{thm:liveness-lemma} and a sufficiently small $p$, we can assert that $\LOGacc{}{}$ satisfies $f$-liveness with a transaction confirmation time polynomial in $\sigma$ after time $\mathbf{C}(\max(\GST,\GAT)+\sigma)$ except with probability $e^{-\Omega(\sqrt{\sigma})}$.

\subsection{Recency and Gap Properties}
\label{sec:appendix-gap-and-recency}

Proof of Theorem~\ref{thm:security-of-PIlc} requires the accountability gadget $\PIacc$ to satisfy two main properties first introduced in \cite{sankagiri_clc}: \emph{recency} and \emph{gap} properties.

Gap property states that blocks are checkpointed sufficiently apart in time, controlled by the parameter $\Tcheckpoint$:
\begin{proposition}[Gap Property]
\label{thm:checkpoint-gap}
Given any time interval $[t_1,t_2]$, no more than $(1+(t_2-t_1))/\Tcheckpoint$ blocks can be checkpointed in the interval. 
\end{proposition}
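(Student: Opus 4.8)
The plan is to split the claim into a \emph{minimum-spacing} lemma and a routine counting step. Enumerate the iterations $c_0 < c_1 < c_2 < \dots$ that yield an actual (non-$\bot$) checkpoint, and for each let $\tau_i$ be the first slot at which \emph{any} honest node observes the checkpoint of iteration $c_i$; by Proposition~\ref{thm:checkpoint-safety} all honest nodes agree on the per-iteration decision, so each checkpointed block corresponds to a distinct iteration and $\tau_i$ is well defined. I would then prove $\tau_{i+1} \geq \tau_i + \Tcheckpoint$ for every $i$, after which a counting argument over an interval of length $t_2-t_1$ with minimum spacing $\Tcheckpoint$ immediately yields the stated bound: if $N$ checkpoints fall in $[t_1,t_2]$, then $(N-1)\Tcheckpoint \leq \tau_{N-1}-\tau_0 \leq t_2-t_1$.

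The heart is the spacing estimate. Forming the checkpoint of iteration $c_{i+1}$ requires a quorum of $\CpReq{accept}{c_{i+1},b}{}$ votes whose size ($n-f$ under the tuned thresholds) exceeds the number of adversarial nodes, so at least one \emph{honest} node $h$ casts $\CpReq{accept}{c_{i+1},b}{h}$; since this vote must be sequenced into $\LOGbft{}{}$ before the checkpoint is observed, its casting time $v$ satisfies $v \leq \tau_{i+1}$. Tracing $h$'s local execution, $h$ passed through iteration $c_i$ and—again by consistency of checkpoints—saw the same non-$\bot$ decision there at a time $\geq \tau_i$, so it set $\mathrm{lastCp}\neq\bot$ and began a $\Tcheckpoint$ wait at the top of iteration $c_i+1$ (Alg.~\ref{algo:pseudocode-checkpoint-proposer}). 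Because every accept and reject vote is emitted only \emph{after} that wait, I want to conclude that $h$ cannot cast its accept for $c_{i+1}$ before $\tau_i+\Tcheckpoint$, giving $\tau_{i+1}\geq v\geq \tau_i+\Tcheckpoint$.

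The step that needs care—and the main obstacle—is that $h$ may leave iteration $c_i+1$ \emph{early} through the \textsc{PerformBookkeeping} ``goto'' if it observes the decision for $c_i+1$ before its own $\Tcheckpoint$ timer elapses, so I cannot simply assert that $h$ completed its wait. I would close this by analysing how $h$ concludes iteration $c_i+1$. If $c_{i+1}=c_i+1$, then $h$'s accept lies in $c_i+1$ itself and is emitted only after $h$'s own $\Tcheckpoint$ wait, so $v\geq\tau_i+\Tcheckpoint$. If $c_{i+1}>c_i+1$, then the decision for $c_i+1$ is necessarily $\bot$ and requires a reject quorum ($f+1$) exceeding the adversarial count, hence at least one honest reject vote; but honest reject votes are likewise emitted only after that node's own $\Tcheckpoint$ wait at the top of iteration $c_i+1$ (an iteration it entered only after seeing the checkpoint of $c_i$, i.e.\ no earlier than $\tau_i$), so the $\bot$ decision for $c_i+1$ cannot be produced before $\tau_i+\Tcheckpoint$. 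Thus $h$ concludes $c_i+1$—whether by its own timer or by observing the decision—no earlier than $\tau_i+\Tcheckpoint$, and its accept in the later iteration $c_{i+1}$ is no earlier still. This establishes the spacing $\tau_{i+1}\geq\tau_i+\Tcheckpoint$ in all cases, and combined with the counting step proves the proposition.
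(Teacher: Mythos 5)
Your argument is correct and is essentially the paper's own (one-sentence) proof --- that honest nodes wait $\Tcheckpoint$ after observing a non-$\bot$ checkpoint before casting any vote in the next iteration, so consecutive checkpoint decisions are at least $\Tcheckpoint$ apart --- spelled out with the honest-voter-in-the-quorum step and the early-exit/aborted-iteration case analysis that the paper leaves implicit. One nit: your counting gives $N \leq 1 + (t_2-t_1)/\Tcheckpoint$, which is the correct (and evidently intended) bound, whereas the proposition as printed reads $(1+(t_2-t_1))/\Tcheckpoint$; this appears to be a parenthesization slip in the statement rather than a flaw in your proof.
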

Proof of Proposition~\ref{thm:checkpoint-gap} follows from the fact that upon observing a new checkpoint that is not $\bot$ for an iteration, honest nodes wait for $\Tcheckpoint$ slots before voting for the proposal of the next iteration.

Following the notation in \cite{sankagiri_clc}, we say that a block checkpointed for iteration $c$ at slot $t>\max(\GST,\GAT)$ in the view of an honest node $i$ is $\Trecent$-recent if it has been part of the checkpoint-respecting LC of some honest node $j$ at some slot within $[t-\Trecent,t]$.
Then, we can express the recency property as follows:
\begin{lemma}[Recency Property]
\label{thm:checkpoint-recency}
Every checkpointed block proposed after \linebreak $\max(\GST,\GAT)$ is $\Trecent$-recent for $\Trecent = \Delta+\Ttimeout+\Tconfirm$.
\end{lemma}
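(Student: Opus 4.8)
The plan is to exhibit a single honest node together with a slot in $[t-\Trecent,t]$ at which the checkpointed block lies on that node's checkpoint-respecting LC. Suppose $b$ is checkpointed for iteration $c$ in the view of honest node $i$ at slot $t>\max(\GST,\GAT)$, with $b$ proposed after $\max(\GST,\GAT)$. Then at least $n-f$ accept votes $\CpReq{accept}{c,b}{k}$ are sequenced on $\LOGbft{i}{t}$. Since there are at most $f$ adversarial nodes and $f<n/2$, at least $n-2f\ge 1$ of these votes are honest; I would fix one such honest node $j$ and let $s$ be the slot at which it cast $\CpReq{accept}{c,b}{j}$. By the validity check performed before an accept vote (Alg.~\ref{algo:pseudocode-checkpoint-proposer}, l.~\ref{line:isvalidproposal}), $b$ lay on $j$'s checkpoint-respecting LC at slot $s$, and because this vote is already counted on $\LOGbft{i}{t}$ we have $s\le t$. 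Thus the entire claim reduces to showing $s\ge t-\Trecent$, \ie, that the checkpoint of $b$ cannot lag more than $\Trecent=\Delta+\Ttimeout+\Tconfirm$ behind the honest accept vote.

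To bound this lag I would exploit the timeout mechanism. Because $b$ is proposed after $\max(\GST,\GAT)$, all accept votes for iteration $c$ are cast in the synchronous, fully-awake regime, so by Proposition~\ref{prop:entering-iter-after-gst} the honest nodes enter iteration $c$ within $\Delta$ of one another. Consequently their $\Ttimeout$ voting windows for iteration $c$ all fall inside an interval $[\tau_c,\tau_c+\Ttimeout+\Delta]$, where $\tau_c$ is the earliest window start, and $s\ge\tau_c$. Every honest node that has not seen a checkpoint for $c$ by the end of its window casts $\CpReq{reject}{c}{k}$ (Alg.~\ref{algo:pseudocode-checkpoint-proposer}, l.~\ref{line:vote-timeout-generator}), so all $n-f$ honest reject votes are cast by $\tau_c+\Ttimeout+\Delta$ and, by $f$-liveness of $\PIbft$ (Proposition~\ref{thm:bft-live}), sequenced on $\LOGbft{}{}$ in every honest view by $\tau_c+\Ttimeout+\Delta+\Tconfirm$.

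I would then close the argument as a race: since $n-f\ge f+1$, were $b$ checkpointed any later than $\tau_c+\Ttimeout+\Delta+\Tconfirm$, the honest reject votes would already have met the reject threshold and closed iteration $c$ with decision $\bot$; as each iteration admits only a single decision (Proposition~\ref{thm:checkpoint-safety}), this contradicts $b$ being checkpointed for $c$. Hence $t\le\tau_c+\Ttimeout+\Delta+\Tconfirm\le s+\Trecent$, so $s\in[t-\Trecent,t]$ and $b$ is $\Trecent$-recent. The hard part will be making this race rigorous under the interpreter's \emph{latest-vote} tallying rule (Alg.~\ref{algo:pseudocode-checkpoint-extractor}): an honest node's later reject overwrites its earlier accept in the running count, so I must track the accept and reject quorums as prefix-dependent tallies along the $\LOGbft{}{}$ order and show that once all honest rejects are sequenced the accept count for $b$ can no longer reach $n-f$. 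This is precisely what prevents the adversary from postponing the checkpoint by withholding its own votes, and it is what forces the accept quorum for $b$ to have formed within the stated window.
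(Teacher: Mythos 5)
Your overall route is the paper's route: identify an honest node $j$ among the $\geq n-f$ accept-voters, use the validity check to place $b$ on $j$'s checkpoint-respecting LC at $j$'s voting slot $s\leq t$, and then bound $t-s\leq\Trecent$ by arguing that an iteration cannot drag on longer than $\Delta+\Ttimeout+\Tconfirm$ past its start. The only structural difference is where that duration bound comes from: the paper simply imports it from the proof of Theorem~\ref{thm:liveness-lemma} (phrased there as ``an adversarial leader cannot make an iteration last longer than $\Delta+\Ttimeout+\Tconfirm$,'' and restated in the lemma's proof as ``$b$ must have been proposed after $t-(\Delta+\Ttimeout+\Tconfirm)$''), whereas you re-derive it inline via the reject-timeout race. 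Had you closed that race, the two proofs would be interchangeable.

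The gap is precisely the step you flag as ``the hard part,'' and the claim you propose to prove there --- that once all honest rejects are sequenced the accept count for $b$ can no longer reach $n-f$ --- is false as stated. Liveness of $\PIbft$ (Proposition~\ref{thm:bft-live}) guarantees only that a vote submitted at time $u$ appears in $\LOGbft{}{}$ by $u+\Tconfirm$; it does not guarantee FIFO order. So an honest node that submits $\CpReq{accept}{c,b}{j}$ at slot $s_j$ and $\CpReq{reject}{c}{j}$ at the later slot $\tau_j+\Ttimeout$ may have the accept sequenced \emph{after} the reject, in which case the latest-vote tally of Alg.~\ref{algo:pseudocode-checkpoint-extractor} flips that node back to $\mathrm{Accept}(b)$ after its reject has been counted; adversarial accept votes can moreover be injected arbitrarily late. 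Consequently the accept tally can keep growing after every honest reject is on $\LOGbft{}{}$, and your race does not close in the form you describe: one must additionally rule out (or bound the effect of) interleavings in which the $f+1$ reject threshold is never crossed at any prefix while the accept quorum completes only after late adversarial votes arrive. To match the paper you can simply cite the iteration-duration bound from the proof of Theorem~\ref{thm:liveness-lemma} (which asserts the decision is reached by $\tau_c+\Delta+\Ttimeout+\Tconfirm$ without working through the prefix-dependent tally); to be fully self-contained you would need the more careful prefix analysis you allude to, and in the form you sketch it, it would not go through.
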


\begin{proof}
We have seen in the proof of Theorem~\ref{thm:liveness-lemma} that if a block $b$ proposed after $\max(\GST,\GAT)$ is checkpointed in the view of an honest node at slot $t$, it should have been proposed after slot $t-(\Delta+\Ttimeout+\Tconfirm)$.
Thus, at least $n-f$ nodes must have voted $\CpReq{accept}{c,b}{k}$ by time $t$.
Let $j$ denote an honest nodes which voted $\CpReq{accept}{c,b}{j}$.
Note that $j$ would vote only after it sees the proposal for iteration $c$, i.e after slot $t-\Trecent = t-(\Delta+\Ttimeout+\Tconfirm)$.
Hence, $b$ should have been on the checkpoint-respecting LC of node $j$ at some slot within $[t-\Trecent,t]$.
This concludes the proof that every checkpoint block proposed after $\max(\GST,\GAT)$ is $\Trecent$-recent.
\end{proof}

\section{Security Proof for Checkpoint-Respecting LC}
\label{sec:appendix-security-proof-for-the-longest-chain-protocol}

In this section, we prove Theorem~\ref{thm:security-of-PIlc}, which states that the security of $\PIlc$ is restored after $\max(\GST,\GAT)$ under $\AdvEnvPG$ provided that the election probability $p$ of each node is sufficiently small.
Via \cite[Appendix C]{ebbandflow}, we know that the Sleepy consensus protocol \cite{sleepy} regains its safety and liveness within $O(\max(\GST, \GAT))$ slots under $\AdvEnvPG$.
On a similar note, \cite{redux} has shown the same self-healing property for Nakamoto's PoW LC protocol and other LC based PoS protocols.
However, all of these works analyze the LC protocols in their original form without considering checkpoints in the chain selection rule.
In this context, \cite{sankagiri_clc} is the first work to show the recovery of security for checkpoint-respecting LC protocols.
As argued in \cite{sankagiri_clc}, length of a checkpoint-respecting LC held by an honest node can decrease when a new checkpoint appears at a conflicting chain, thus, requiring a careful analysis to bound how many honest blocks are lost due to such instances.
However, it is not immediately obvious if the analysis of \cite{sankagiri_clc} that relies on \cite{backbone} carries over to the case of PoS protocols, where an adversary can generate multiple blocks when it is elected.
Hence, our goal is to show that the PoS protocols such as \cite{sleepy,david2018ouroboros,badertscher2018ouroboros} recover their safety and liveness after $O(\max(\GST, \GAT))$ time under $\AdvEnvPG$.
In this endeavor, we enhance the proof technique of \cite{sankagiri_clc} by using the concept of strong pivots from \cite{sleepy}.

In the proof below, we follow the same notation as \cite{sleepy}.
Each node is elected as the leader of a time slot with probability $p$.
Total number of slots $T_{\mathrm{max}}$ is fixed and is a polynomial in the security parameter $\sigma$.
There are $n$ nodes in total, among which $f$ nodes are controlled by the adversary.
We denote the checkpoint-respecting longest chain held by an honest node $i$ at slot $t$ by $\chlc{i}{t}$.

Define $\beta=pf$ as an upper bound on the expected number of adversary nodes elected leader in a single slot.
Similarly, define $\alpha$ as a lower bound on the expected number of awake honest nodes elected leader in a single slot. 
After $\GAT$, every honest node wakes up and $\alpha = p(n-f) > \beta$ as $f < n/2$.
For the convergence opportunities, we adopt the definition given in \cite[Section 5.2]{sleepy} and denote the number of convergence opportunities within a time interval $[t_1,t_2]$ by $\mathrm{C}([t_1,t_2])$.

We say that a block $b$ is checkpointed at slot $t$ if $t$ is the first slot an honest node sees $b$ as checkpointed.
Note that after $\GST$, if $b$ is checkpointed at slot $t$, then every honest node sees $b$ as checkpointed by slot $t+\Delta$.
Let $\max(\GST,\GAT)+\Trecent < t^*_1 \leq t^*_2 \leq ... \leq $ be the slots at which new blocks $b^*_1$, $b^*_2$, ... are checkpointed after $\max(\GST,\GAT)$.
To show that checkpointing a new block does not forfeit too many convergence opportunities, we follow the approach of \cite{sankagiri_clc} and divide time into two sets of intervals.
Let $I_l := [t^*_l + \Delta, t^*_{l+1} - \Trecent - \Delta]$ and define $I := \cup_{l \geq 0} I_l$ as the union of the \emph{inter-checkpoint intervals} $I_l$.
(Recall the definition of $\Trecent$ from Lemma~\ref{thm:checkpoint-recency}.) 
Using the definition of $I$, we can now proceed to prove the chain growth, quality and the common prefix properties.

\subsection{Chain Growth}
\label{sec:appendix-chain-growth}

\begin{definition}[{\cite[Section 3.2.1]{sleepy}}]
Predicate $\mathrm{growth}(\tau,k)$ is satisfied if and only if for every slot $t \leq T_{\mathrm{max}} - \tau$, $\min_{i,j}(|\chlc{j}{t+\tau}|-|\chlc{i}{t}|) \geq k$.
\end{definition}

We use the same results given in \cite{sankagiri_clc} to lower bound the chain growth in terms of convergence opportunities that lie within the inter-checkpoint intervals.

\begin{lemma}[{\cite[Lemma 5]{sankagiri_clc}}]
\label{thm:chain-growth-catch}
Let $s,t$ be two slots such that $s \in I$ and and $t \geq s + \Delta$. 
Let $\chlc{i}{s}$ be a chain held by some honest node $i$ at slot $s$. 
Then all honest nodes will hold a chain of length at least $|\chlc{i}{s}|$ in slot $t$.
\end{lemma}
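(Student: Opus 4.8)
The plan is to reduce the claim to the standard longest-chain propagation argument and then to neutralize the one feature that breaks it for checkpoint-respecting chains. First I would record the elementary propagation fact: since honest nodes relay chains, $\chlc{i}{s}$ is in every honest node's view by slot $s+\Delta$. In a plain longest-chain protocol this, together with monotonicity of honest chain length, immediately yields that every honest node holds a chain of length at least $\ell := |\chlc{i}{s}|$ at every slot $\ge s+\Delta$. The entire difficulty is that checkpoint-respecting chains are \emph{not} monotone: an honest node abandons its current chain (possibly for a much shorter one) the moment a freshly decided checkpoint conflicts with it. Hence the real content of the lemma is to show that no checkpoint decided after slot $s$ can drag honest chains below length $\ell$.

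Second, I would use the endpoints of the inter-checkpoint interval $I_l=[t^*_l+\Delta,\,t^*_{l+1}-\Trecent-\Delta]$ containing $s$ to pin down a clean base case. The lower endpoint $s\ge t^*_l+\Delta$ ensures that $i$, and by slot $s+\Delta$ every honest node, has already observed the checkpoint $b^*_l$; thus $\chlc{i}{s}$ extends $b^*_l$ and respects exactly $b^*_1,\dots,b^*_l$. The upper endpoint $s\le t^*_{l+1}-\Trecent-\Delta$ ensures the next checkpoint $b^*_{l+1}$ is not yet decided by slot $s+\Delta$, so at that slot every honest node respects precisely $b^*_1,\dots,b^*_l$, $\chlc{i}{s}$ is a valid checkpoint-respecting chain in every honest view, and each honest node therefore adopts a chain of length $\ge\ell$. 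Plain monotonicity then carries the bound through the whole interval $[s+\Delta,\,t^*_{l+1})$.

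Third, I would extend the bound across all later checkpoints $b^*_{l+1},b^*_{l+2},\dots$ by induction, with the recency property (Lemma~\ref{thm:checkpoint-recency}) as the workhorse and the gap property (Proposition~\ref{thm:checkpoint-gap}) for the bookkeeping. Taking $m>l$ to be the first checkpoint that conflicts with $\chlc{i}{s}$, recency places $b^*_m$ on the checkpoint-respecting chain of some honest node at a slot $u_m\in[t^*_m-\Trecent,\,t^*_m]$; the gap property guarantees that by $u_m$ exactly $b^*_1,\dots,b^*_{m-1}$ have been decided, and since $\chlc{i}{s}$ respects all of these and $u_m\ge s+\Delta$, that honest node's checkpoint-respecting chain at $u_m$ is at least as long as the available $\chlc{i}{s}$, hence of length $\ge\ell$, while carrying $b^*_m$. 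Consequently $b^*_m$ is embedded in an honest chain of length $\ge\ell$ that respects the new checkpoint, so when $b^*_m$ takes effect no honest node is forced below length $\ell$. Iterating this across successive checkpoints, interleaved with the in-interval monotonicity argument, propagates the bound to every slot $t\ge s+\Delta$.

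The main obstacle is precisely this crossing-a-checkpoint step, and within it the timing around a checkpoint decision slot $t^*_m$: I must show that the length-$\ge\ell$ witness chain supplied by recency actually reaches every honest node before that node is forced to respect $b^*_m$, so that there is no transient window in which an honest node holds a short chain. This is where the quantitative relationships among $\Delta$, $\Trecent$, and the confirmation time $\Tconfirm$ (which together fix how far the recency witness precedes the decision) must be matched against the propagation delay, and it is the delicate part of the argument that the interval margins in the definition of $I_l$ are engineered to absorb.
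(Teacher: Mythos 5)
Your proposal is correct and follows essentially the same route as the paper's proof: the base case uses propagation of $\chlc{i}{s}$ together with the fact that, by the definition of $I_l$, no checkpoint beyond those already respected by $\chlc{i}{s}$ is decided before slot $s+\Delta$ (the paper's case~(i)), and the inductive step uses the recency property to exhibit an honest witness chain of length at least $|\chlc{i}{s}|$ that already carries each newly decided checkpoint (the paper's case~(ii)). The only cosmetic differences are that you single out the first \emph{conflicting} checkpoint rather than the first \emph{new} one and invoke the gap property for bookkeeping that the paper's proof does not need; the timing subtlety you flag at the end is likewise present, and glossed over, in the paper's own argument.
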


Proof is (almost) the same as \cite[Lemma 5]{sankagiri_clc} and uses Lemma~\ref{thm:checkpoint-recency}.
\begin{proof}
Let $i$ and $j$ (potentially $i=j$) be honest nodes awake at slots $s$ and $t \geq s+\Delta$ respectively.
Since $s \in I$, there exists an $l$ such that $s \in I_{l}$. 
Let $b$ denote the last checkpoint block within $\chlc{i}{s}$. 
Since $s \geq \max(\GST,\GAT)$, all honest nodes accept $b$ as a checkpoint by slot $t$.
Next, consider the following two cases: (i) $b$ is the last checkpoint block in $j$'s view by slot $t$.
Then, $|\chlc{j}{t}| \geq |\chlc{i}{s}|$, as $\chlc{i}{s}$ contains all of the checkpoints observed by $j$ by slot $t$.
(ii) $j$ observes at least one new block become checkpointed by slot $t$.
In this case, let $b'$ denote the first block that becomes checkpointed in $j$'s view after $b$ within slot $t^*_{l'}$. $l'>l$.
(In this case, $t \geq t^*_{l'}$ by definition.)
Via Lemma~\ref{thm:checkpoint-recency} (recency property), $b'$ must be on the checkpoint-respecting LC $\chlc{k}{t'}$ held by an honest node $k$ at some slot $t' \in [t^*_{l'} - \Trecent, t^*_{l'}]$, during which $b'$ was not checkpointed yet in the view of any honest node.
Thus, via case (i), $|\chlc{k}{t'}| \geq |\chlc{i}{s}|$ since $t' \geq t^*_{l'} - \Trecent \geq s+\Delta$ for $l'>l$.
Note that the length of the checkpoint-respecting LC held by any honest node at the time it observes $b'$ become checkpointed must be at least $|\chlc{k}{t'}| \geq |\chlc{i}{s}|$.
Hence, by induction, we can state that all honest nodes that observe at least one new checkpoint (after $b$) by slot $t$ hold chains of length at least $|\chlc{i}{s}|$ at slot $t$, implying that $|\chlc{j}{t}| \geq |\chlc{i}{s}|$.
\end{proof}

A useful corollary of Lemma~\ref{thm:chain-growth-catch} is given below:
\begin{corollary}
\label{thm:unique}
All honest blocks produced at convergence opportunities within $I$ have distinct heights.
\end{corollary}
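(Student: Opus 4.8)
The plan is to show that chain length strictly increases from one convergence-opportunity block to the next, from which distinct heights follow at once. Fix any two honest blocks $b_1, b_2$ produced at convergence opportunities at slots $t_1 < t_2$ with both $t_1, t_2 \in I$, produced by honest nodes $i$ and $j$ respectively. Throughout I identify the height of a block with the length of the chain ending at it, so that $\mathrm{height}(b_1) = |\chlc{i}{t_1}|$ and $b_2$ is produced by $j$ extending its longest chain at slot $t_2$.

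First I would record the $\Delta$-separation of convergence opportunities. By the definition of a convergence opportunity (\cite[Section 5.2]{sleepy}), the slot $t_1$ is the only slot carrying an honest leader within $[t_1-\Delta, t_1+\Delta]$; since $t_2 \neq t_1$ also carries an honest leader, $t_2$ cannot lie in this window, so $t_2 > t_1 + \Delta$, and in particular $t_2 \geq t_1 + \Delta$. This is precisely the hypothesis needed to invoke the chain-growth catch-up lemma.

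Next I would apply Lemma~\ref{thm:chain-growth-catch} with $s = t_1$ and $t = t_2$: since $t_1 \in I$ and $t_2 \geq t_1 + \Delta$, every honest node awake at slot $t_2$—in particular the producer $j$—holds a chain of length at least $|\chlc{i}{t_1}| = \mathrm{height}(b_1)$. Because $j$ produces $b_2$ by extending the longest chain in its view, whose length is already at least $\mathrm{height}(b_1)$, the block $b_2$ sits one level higher, giving $\mathrm{height}(b_2) \geq \mathrm{height}(b_1) + 1 > \mathrm{height}(b_1)$. As the labeling of the pair by time was arbitrary, any two honest convergence-opportunity blocks in $I$ have strictly ordered heights and are therefore distinct; this is exactly the checkpoint-respecting analogue of the Sleepy argument, with Lemma~\ref{thm:chain-growth-catch} absorbing the non-monotonicity that checkpoints would otherwise introduce.

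The main obstacle is securing the strictness of the inequality: I must argue that the chain $j$ extends at $t_2$ has length at least $\mathrm{height}(b_1)$ \emph{before} $b_2$ is appended, not merely afterwards, since Lemma~\ref{thm:chain-growth-catch} read naively bounds the chain $j$ holds at slot $t_2$ (which already contains $b_2$) and would only yield a non-strict bound. The resolution is that the length guaranteed by the lemma comes from honest blocks propagated from slots at least $\Delta$ earlier and hence is the chain $j$ has in view at the start of $t_2$, so appending its own fresh block $b_2$ lifts the height strictly. A secondary point to state carefully is that the claim is restricted to honest blocks produced within $I$, so that both the recency (Lemma~\ref{thm:checkpoint-recency}) and gap machinery underlying Lemma~\ref{thm:chain-growth-catch} apply; adversarial blocks, which in the PoS setting may be minted in bulk at arbitrary heights, are explicitly excluded.
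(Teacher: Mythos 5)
Your proof is correct and takes essentially the same route as the paper's: the producer of the earlier block holds a chain of length equal to that block's height, Lemma~\ref{thm:chain-growth-catch} propagates that length to all honest nodes by the time of the next convergence opportunity (which is at least $\Delta$ later), and the next honest block therefore lands strictly higher. Your extra care about why the inequality is strict (the lemma's length guarantee reflects the chain in view \emph{before} the new block is appended) is a point the paper leaves implicit, but it is not a different argument.
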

\begin{proof}
Suppose an honest block $b$ is produced at height $\ell$ at a convergence opportunity $t$ within $I$.
Then, the honest producer of $b$ holds a chain of length $\ell$ at slot $t \in S$.
Via Lemma~\ref{thm:chain-growth-catch}, all honest nodes will hold a chain of length at least $\ell$ at all slots $\geq t+\Delta$.
Since the next convergence opportunity after $t$ happens at some slot $\geq t+\Delta$, the next honest block will be at a height larger than $\ell$.
\end{proof}

\begin{lemma}[{\cite[Lemma 6]{sankagiri_clc}}]
\label{thm:chain-growth-lemma}
Consider a slot $s \in I$ and an honest node $i$ awake at $s$ such that $|\chlc{i}{s}|=\ell$.
(Alternatively, consider a slot $t$ and an honest node $i$ awake at $t$ such that $\chlc{i}{t}$ contains an honest block at height $\ell$ produced in some slot $s<t$.)
Then, for any slot $t \geq s+2\Delta$ and honest node $j$ awake at slot $t$, $|\chlc{j}{t}| \geq \ell + \mathrm{C}(I \cap [s+\Delta,t-\Delta])$.
\end{lemma}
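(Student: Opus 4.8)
The plan is to prove this by an induction over the convergence opportunities lying in $I \cap [s+\Delta, t-\Delta]$, with Lemma~\ref{thm:chain-growth-catch} as the workhorse that already absorbs the non-monotonicity of checkpoint-respecting chains. Both formulations in the hypothesis reduce to the same starting point: an honest node holding a chain of length $\ell$ at a slot $s \in I$ (in the alternative formulation, the honest producer of the height-$\ell$ block held a length-$\ell$ chain at the slot $s$ at which it produced that block). So I would fix such an $s$ and $\ell$, enumerate the convergence opportunities in $I \cap [s+\Delta, t-\Delta]$ as $u_1 < u_2 < \dots < u_m$ with $m = \mathrm{C}(I \cap [s+\Delta, t-\Delta])$, and show by induction on $k$ that the honest block produced at $u_k$ sits at height at least $\ell + k$.

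For the base case, since $u_1 \geq s+\Delta$ and $s \in I$, Lemma~\ref{thm:chain-growth-catch} guarantees that every honest node—in particular the leader of slot $u_1$—holds a chain of length at least $\ell$ at slot $u_1$, so extending it yields a block at height at least $\ell+1$. For the inductive step, I would use the defining property of convergence opportunities from \cite{sleepy}, namely that their $\Delta$-quiet periods force consecutive ones to be separated by at least $\Delta$, whence $u_k \geq u_{k-1} + \Delta$. The honest leader of $u_{k-1}$ holds, right after producing its block, a chain of length at least $\ell + (k-1)$, and $u_{k-1} \in I \cap [s+\Delta,t-\Delta] \subseteq I$; applying Lemma~\ref{thm:chain-growth-catch} with this slot and target $u_k \geq u_{k-1}+\Delta$ forces every honest node at $u_k$ to hold a chain of length at least $\ell + (k-1)$, and the leader of $u_k$ extends it to height at least $\ell + k$. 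This quantifies the very mechanism behind Corollary~\ref{thm:unique}: the heights are not merely distinct but strictly increasing by at least one per convergence opportunity.

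To conclude, I would apply Lemma~\ref{thm:chain-growth-catch} one final time at the last convergence opportunity $u_m \in I$: its leader holds a chain of length at least $\ell + m$, and since $u_m \leq t - \Delta$ we have $t \geq u_m + \Delta$, so every honest node $j$ awake at $t$ holds $|\chlc{j}{t}| \geq \ell + m = \ell + \mathrm{C}(I \cap [s+\Delta, t-\Delta])$. The degenerate case $m = 0$ follows from a single application of Lemma~\ref{thm:chain-growth-catch} with the original $s \in I$ and target $t \geq s + 2\Delta \geq s+\Delta$.

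I do not expect a serious obstacle: the genuinely hard part—showing that honest nodes do not lose chain length within an inter-checkpoint interval despite checkpoints possibly appearing on conflicting chains—is already discharged by Lemma~\ref{thm:chain-growth-catch} through the recency property (Lemma~\ref{thm:checkpoint-recency}). The only points requiring care are verifying that each earlier convergence slot $u_{k-1}$ lies in $I$ so that Lemma~\ref{thm:chain-growth-catch} is applicable (it does, as noted above), and invoking the correct $\Delta$-separation of convergence opportunities so that the induction chains together without gaps.
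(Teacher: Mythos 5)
Your induction over the convergence opportunities of $I \cap [s+\Delta,t-\Delta]$, with Lemma~\ref{thm:chain-growth-catch} supplying both the base case and the gluing between consecutive opportunities, is exactly the argument the paper has in mind (it defers the details to \cite[Lemma 6]{sankagiri_clc} and only records that the proof ``uses Lemma~\ref{thm:chain-growth-catch}''). The first formulation of the hypothesis is handled completely and correctly, including the $m=0$ case and the final $\Delta$-step from $u_m \leq t-\Delta$ to $t$.

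The one genuine loose end is your claim that the alternative formulation ``reduces to the same starting point.'' In that formulation $s$ is merely the production slot of an honest block at height $\ell$ that survives into $\chlc{i}{t}$; nothing places $s$ in $I$, and Lemma~\ref{thm:chain-growth-catch} is only stated for starting slots in $I$ --- indeed the whole reason $I$ exists is that outside it an honest node's chain can shrink when a checkpoint lands on a conflicting chain. So the base case of your induction is not justified for that formulation as written. It can be repaired: the block $b$ at height $\ell$ reaches every honest node by $s+\Delta$, and because $b$ is still in an honest node's checkpoint-respecting LC at time $t$ while checkpoints never conflict (Proposition~\ref{thm:checkpoint-safety}), $b$'s prefix chain is consistent with every checkpoint any honest node has adopted at any slot $u \leq t-\Delta$; hence every honest node's checkpoint-respecting LC has length at least $\ell$ at every such $u$, which is all your base case needs. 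With that patch the rest of your induction goes through unchanged.
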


Proof is similar to \cite[Lemma 6]{sankagiri_clc} and uses Lemma~\ref{thm:chain-growth-catch}.
A direct consequence of Lemma~\ref{thm:chain-growth-lemma} is that $\mathrm{growth}(\tau,k)$ is satisfied if for any interval $[t_1,t_2]$ of length $t_2-t_1 = \tau \geq 2\Delta$, $\mathrm{C}(I \cap [t_1+\Delta,t_2-\Delta]) \geq k$.

\subsection{Chain Quality}

\begin{definition}[{\cite[Section 3.2.2]{sleepy}}]
Predicate $\mathrm{quality}(\mu,k)$ is satisfied if and only if for every slot $t$ and every honest node $i$ awake at $t$, among any consecutive sequence of $k$ blocks within $\chlc{i}{t}$, the fraction of blocks produced by honest nodes is at least $\mu$.
\end{definition}

\begin{lemma}
\label{thm:chain-quality}
If $\mathrm{quality}(\mu,1/\mu)=0$, then there exist slots $s,t$ such that $\mathrm{A}([s,t]) \geq 1/\mu$ and $\mathrm{A}([s,t]) \geq \mathrm{C}(I \cap [s+\Delta,t-\Delta])-1$.
\end{lemma}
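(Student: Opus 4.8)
The plan is to run the standard ``maximal run of adversarial blocks'' chain-quality argument of the longest-chain analysis, but with every convergence-opportunity count restricted to the inter-checkpoint intervals $I$, precisely so that the chain-growth estimates of Lemmas~\ref{thm:chain-growth-catch} and~\ref{thm:chain-growth-lemma} (which only count convergence opportunities inside $I$) and the distinctness of honest heights from Corollary~\ref{thm:unique} can be invoked. First I would unpack the hypothesis $\mathrm{quality}(\mu,1/\mu)=0$: it yields an honest node $i$, a slot $t_0$ at which $i$ is awake, and a window of $1/\mu$ consecutive blocks in $\chlc{i}{t_0}$ whose honest fraction is below $\mu$. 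Since $\mu\cdot(1/\mu)=1$, fewer than one of them is honest, so \emph{all} $1/\mu$ blocks in the window are adversarial. I would then enlarge the window to the maximal adversarial run containing it, occupying consecutive heights $\ell+1,\dots,\ell+K$ with $K\geq 1/\mu$, and let $B_\ell$ be the block at height $\ell$ immediately preceding the run; by maximality $B_\ell$ is honest (or genesis, or a checkpointed block).

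Next I would fix the interval endpoints. Let $s$ be the slot in which $B_\ell$ is produced, and let $t$ be the slot in which the first block after the run (at height $\ell+K+1$, which is honest by maximality) is produced, or $t=t_0$ if the run sits at the tip of $\chlc{i}{t_0}$. For the first inequality I would argue that every one of the $K$ run blocks is adversarial and, since slots strictly increase along a chain, sits at a distinct slot of $(s,t)$; hence each block witnesses a distinct adversarial leader election and $\mathrm{A}([s,t])\geq K\geq 1/\mu$.

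For the second inequality I would invoke chain growth anchored at the honest block $B_\ell$. By the alternative form of Lemma~\ref{thm:chain-growth-lemma}, since $\chlc{i}{t_0}$ contains the honest block $B_\ell$ at height $\ell$ produced at slot $s$, every honest node awake at a slot $\tau\geq s+2\Delta$ holds a chain of length at least $\ell+\mathrm{C}(I\cap[s+\Delta,\tau-\Delta])$; here Corollary~\ref{thm:unique} guarantees that the convergence opportunities inside $I$ really do advance the honest chain along distinct heights. In the tip case this applies with $\tau=t_0=t$ and gives $\ell+K=|\chlc{i}{t_0}|\geq\ell+\mathrm{C}(I\cap[s+\Delta,t-\Delta])$, i.e.\ $K\geq\mathrm{C}(I\cap[s+\Delta,t-\Delta])$. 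In the interior case, the honest producer of the post-run block at height $\ell+K+1$ extends the longest chain in its view at slot $t$; had an honest chain of length exceeding $\ell+K$ already propagated to it, it would have built higher, so comparing its height $\ell+K$ against the chain-growth lower bound evaluated just before $t$ yields $K\geq\mathrm{C}(I\cap[s+\Delta,t-\Delta])-1$. Either way $\mathrm{A}([s,t])\geq K\geq\mathrm{C}(I\cap[s+\Delta,t-\Delta])-1$.

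I expect the main obstacle to be the $\Delta$-bookkeeping in the interior case: pinning down which slot the post-run honest producer's view refers to, and thereby justifying the exact loss of a single convergence opportunity (the ``$-1$''), requires care because chain growth only lower-bounds lengths over $[s+\Delta,\tau-\Delta]$ while messages travel with delay up to $\Delta$. A secondary subtlety is the boundary case in which $B_\ell$ is genesis or a checkpointed block rather than an ordinary honest block: there I would anchor the chain-growth argument using Lemma~\ref{thm:chain-growth-catch} together with the recency property (Lemma~\ref{thm:checkpoint-recency}), which guarantees that an honest node held a chain of length at least $\ell$ within $[s-\Trecent,s]$, so the convergence-opportunity count over $I$ can still be started from height $\ell$.
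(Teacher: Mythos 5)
Your proposal is correct and follows essentially the same route as the paper's proof: both delimit the maximal adversarial run by the last honest block before it (produced at slot $s$) and the first honest block after it (produced at slot $t$, or the tip), count the run blocks to get $\mathrm{A}([s,t])\geq 1/\mu$, and anchor Lemma~\ref{thm:chain-growth-lemma} at the preceding honest block to get $\mathrm{A}([s,t])\geq \mathrm{C}(I\cap[s+\Delta,t-\Delta])-1$. The $\Delta$-bookkeeping and boundary cases you flag are handled in the paper exactly as you anticipate (the genesis block is simply declared honest, and the post-run producer's chain of length $\ell_t-1$ at slot $t$ supplies the "$-1$").
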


\begin{proof}
If $\mathrm{quality}(\mu,1/\mu)=0$, then there exists a slot $t'$ and an honest node $i$ awake at $t'$ such that $\chlc{i}{t'}$ contains a consecutive sequence of $1/\mu$ blocks $b_1,...,b_{1/\mu}$ produced by the adversary.
Let $b^*_s$ denote the last honest block before $b_1$ within $\chlc{i}{t'}$ and let $\ell_s$ and $s$ respectively denote its height and the slot it was produced in.
Similarly, let $b^*_t$ denote the first honest block after $b_{1/\mu}$ within $\chlc{i}{t'}$ and let $\ell_t$ and $t$ respectively denote its height and the slot it was produced in.
(Note that the genesis block can be taken as an honest block.)
If there is no honest block following $b_{1/\mu}$ within $\chlc{i}{t'}$, let $b^*_t = \chlc{i}{t'}[-1]$, $\ell_t = |\chlc{i}{t'}|$ and $t=t'$.
In either case, there exists an honest node which holds a chain of length $\ell_t-1$ at slot $t$ and this chain contains an honest block at height $\ell_s$ produced in slot $s$.
Thus, via Lemma~\ref{thm:chain-growth-lemma}, we know that $\ell_t \geq \ell_s + \mathrm{C}(I \cap [s+\Delta,t-\Delta])$.
Note that every block within $\chlc{i}{t'}$ with height in $(\ell_s,\ell_t)$ was produced by the adversary within the interval $[s,t]$, and lie on the same chain.
Hence, $\mathrm{A}([s,t]) \geq \ell_t-\ell_s-1$,
where $\ell_t-\ell_s-1 \geq \mathrm{C}(I \cap [s+\Delta,t-\Delta])-1$.
Moreover, the blocks $b_1,...,b_{1/\mu}$ were produced within the interval $[s,t]$ and lie on the same chain, implying that $\mathrm{A}([s,t]) \geq 1/\mu$.
Hence, we can conclude that if $\mathrm{quality}(\mu,1/\mu)=0$, $\mathrm{A}([s,t]) \geq 1/\mu$ and $\mathrm{A}([s,t]) \geq \mathrm{C}(I \cap [s+\Delta,t-\Delta])-1$.
\end{proof}

\subsection{Common Prefix}
\label{sec:appendix-common-prefix}

\begin{definition}[{\cite[Section 3.2.3]{sleepy}}]
Predicate $\mathrm{prefix}(\tau)$ is satisfied if and only if for all slots $s \leq t$ and honest nodes $i,j$ such that $i$ and $j$ are awake at slots $s$ and $t$ respectively, prefix of $\chlc{t}{j}$ consisting of blocks produced at slots $\leq t-\tau$ is a prefix of $\chlc{s}{i}$.
\end{definition}

To show the common prefix property in the context of checkpointed PoS protocols, we extend the definition of strong pivots in \cite[Section 5.6.1]{sleepy} as shown below:

\begin{definition}
A slot $t$ is said to be a \emph{checkpoint-strong pivot}, if for any $t_0 \leq t \leq t_1$, it holds that either $\mathrm{A}([t_0,t_1]) < \mathrm{C}(I \cap [t_0+\Delta,t_1-\Delta])$ or $\mathrm{A}([t_0,t_1])=0$.
\end{definition}

Observe that when we count the number of convergence opportunities for a checkpoint-strong pivot, we only take those that lie within the inter-checkpoint intervals. 
Intuitively, this is because the convergence opportunities that arrive during checkpoint intervals do not offer any guarantee of growth for the chains held by honest nodes.
Conversely, as Corollary~\ref{thm:unique} states, all of the honest blocks that arrive at convergence opportunities within $I$ have a unique height.
Hence, by counting only the convergence opportunities in $I$, we can inherit all of the qualitative results presented in \cite{sleepy} about the prefix property.
In this context, following proposition and lemma extend \cite[Fact 4, Lemma 5]{sleepy}:

\begin{proposition}[Unique Honest Blocks at Convergence Opportunities in $I$]
\label{thm:common-prefix-prop}
Let $i$ and $j$ be two honest nodes awake at slots $r_1$ and $r_2\geq r_1$ respectively. 
If $\chlc{i}{r_1}[\ell]$ and $\chlc{j}{r_2}[\ell]$ are both honest blocks and there exists a convergence opportunity $t^*$, $t^* \in I$, such that an honest block $b^*$ was produced at height $\ell$, then, $\chlc{i}{r_1}[\ell]=\chlc{j}{r_2}[\ell]=b^*$.
\end{proposition}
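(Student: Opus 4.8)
The plan is to reduce the proposition to a single claim: \emph{every} honest block occurring at height $\ell$ in the checkpoint-respecting LC of \emph{any} honest node equals $b^*$. Once this is established, instantiating it at $\chlc{i}{r_1}[\ell]$ and at $\chlc{j}{r_2}[\ell]$ immediately yields $\chlc{i}{r_1}[\ell]=\chlc{j}{r_2}[\ell]=b^*$. This mirrors the uniqueness of honest blocks at convergence opportunities in the original Sleepy analysis \cite[Fact~4]{sleepy}, but with the naive ``longest chain propagates and is adopted'' argument — which fails once the fork-choice respects checkpoints and a chain can \emph{lose} length — replaced throughout by the inter-checkpoint chain-growth statements of Appendix~\ref{sec:appendix-chain-growth}.

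Concretely, I would fix the unique honest leader $h$ of the convergence opportunity $t^*$; since $b^*$ has height $\ell$, the checkpoint-respecting LC that $h$ extends at $t^*$ has length exactly $\ell-1$, while $|\chlc{h}{t^*}|=\ell$. Let $b'$ be an arbitrary honest block of height $\ell$ lying on the checkpoint-respecting LC of some honest node, produced by honest node $h'$ in slot $s'$, and show $b'=b^*$ by a three-way case analysis on $s'$, exploiting that $t^*$ is isolated (no honest leader in $[t^*-\Delta,t^*)\cup(t^*,t^*+\Delta]$). If $s'=t^*$, leader uniqueness forces $b'=b^*$. If $s'>t^*$, isolation gives $s'>t^*+\Delta$, so applying Lemma~\ref{thm:chain-growth-catch} with the slot $t^*\in I$ and $|\chlc{h}{t^*}|=\ell$ yields $|\chlc{h'}{s'}|\geq\ell$, whence $h'$ extends at height $>\ell$, contradicting that $b'$ has height $\ell$. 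If $s'<t^*$, isolation gives $s'<t^*-\Delta$; since $b'$ is an honest height-$\ell$ block on an honest chain, the second (``alternatively'') hypothesis of Lemma~\ref{thm:chain-growth-lemma} applies \emph{without} assuming $s'\in I$, and taking $t'=t^*$ bounds the chain $h$ holds and extends at $t^*$ from below by $\ell$, contradicting that it has length $\ell-1$. In every case $b'=b^*$.

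The delicate point — and the reason Corollary~\ref{thm:unique} and the recency property (Lemma~\ref{thm:checkpoint-recency}) are indispensable — is exactly the $s'<t^*$ branch: under the checkpoint-respecting rule a strictly longer chain may be discarded by $h$ if it omits a checkpoint $h$ has already adopted, so the Sleepy shortcut ``the longer chain has reached $h$'' is invalid here. What salvages it is the inter-checkpoint interval machinery: counting convergence opportunities only inside $I$ (where, by Corollary~\ref{thm:unique}, honest blocks have distinct heights) together with recency (a freshly checkpointed block was recently on some honest chain) is precisely what lets Lemma~\ref{thm:chain-growth-lemma} upgrade ``$b'$ is an honest height-$\ell$ block on an honest chain'' into the monotone lower bound needed at $t^*$, even though $s'$ itself need not lie in $I$. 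The only residual care is the narrow window $t^*-2\Delta<s'<t^*-\Delta$, where $t'=t^*$ just misses the $t'\geq s'+2\Delta$ requirement of Lemma~\ref{thm:chain-growth-lemma}; there I would fall back on isolation of $t^*$ and on Lemma~\ref{thm:chain-growth-catch} applied from the slot of $b'$ when it lies in $I$, fixing the constants relating $\Delta$ and $\Trecent$ so that this boundary case is absorbed. Finally, applying the established claim to $\chlc{i}{r_1}[\ell]$ and $\chlc{j}{r_2}[\ell]$ gives $\chlc{i}{r_1}[\ell]=\chlc{j}{r_2}[\ell]=b^*$, completing the proof.
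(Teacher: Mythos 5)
Your overall strategy coincides with the paper's: reduce to showing that any honest height-$\ell$ block on an honest node's checkpoint-respecting LC equals $b^*$, then argue by contradiction with a trichotomy on the production slot $s'$ of the rival block, using isolation of the convergence opportunity to push $s'$ outside $[t^*-\Delta,t^*+\Delta]$. Your $s'=t^*$ and $s'>t^*+\Delta$ branches are exactly the paper's; the latter is the same application of Lemma~\ref{thm:chain-growth-catch}.

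The gap is in the $s'<t^*-\Delta$ branch. The paper does not invoke Lemma~\ref{thm:chain-growth-lemma} there at all; it makes an explicit dichotomy: either some honest node still holds a checkpoint-respecting LC of length at least $\ell$ at slot $t^*-\Delta$, in which case single-$\Delta$ propagation puts a chain of length at least $\ell$ in the view of $b^*$'s producer just before $t^*$, contradicting that $b^*$ was produced at height $\ell$; or the chain through $b'$ was abandoned, which can only happen because $b'$ conflicts with a checkpoint established before $t^*-\Delta$, and since honest nodes never revert checkpoints and all later checkpoints are consistent with earlier ones, $b'$ can then never again lie on any honest checkpoint-respecting LC after $t^*$, contradicting $b'=\chlc{i}{r_1}[\ell]$. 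This dichotomy needs no $2\Delta$ margin. Your substitute---a ``retroactive'' application of Lemma~\ref{thm:chain-growth-lemma}, with the hypothesis certified at slot $r_1$ but the conclusion read off at the earlier slot $t^*$---has two problems. First, every use of that lemma in the paper applies the conclusion at the same slot at which the hypothesis is verified; the retroactive direction is sound only \emph{because of} the very dichotomy above (a block still on an honest chain at $r_1$ never conflicted with any checkpoint, hence its chain propagated normally all along), so citing the lemma for it without reproving it in that strengthened form is circular. Second, and concretely, the lemma's $t\geq s+2\Delta$ requirement genuinely excludes the window $t^*-2\Delta<s'<t^*-\Delta$, and your proposed patch does not close it: $s'$ need not lie in $I$ (so Lemma~\ref{thm:chain-growth-catch} cannot be launched from $s'$), isolation of $t^*$ only excludes $[t^*-\Delta,t^*+\Delta]$, and no tuning of $\Trecent$ against $\Delta$ changes the lemma's $2\Delta$ hypothesis. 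The missing ingredient is the checkpoint-conflict dichotomy itself, which is what the paper uses in place of both your lemma citation and your boundary patch.
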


\begin{proof}
For the sake of contradiction, suppose $\chlc{i}{r_1}[\ell]$ and $\chlc{j}{r_2}[\ell]$ are both honest blocks at least one of which is different from $b^*$.
Let $k$ denote the honest producer of $b^*$ such that $\chlc{k}{t^*}[\ell]=b^*$.
Without loss of generality, suppose $\chlc{i}{r_1}[\ell]=b \neq b^*$, and let $m$ and $t$ denote the honest block producer and the production slot of $b$.
As $b \neq b^*$, either $t < t^*-\Delta$ or $t > t^*+\Delta$.
Now, if $t<t^*-\Delta$, either at least one honest node holds a checkpoint-respecting LC of length $\ell$ at time $t^*-\Delta$, or $b$ conflicts with one of the blocks checkpointed before slot $t^*-\Delta$.
In the first case, $|\chlc{k}{t^{*-}}| \geq \ell$, which implies $b^*$ could not have been produced at height $\ell$, leading to a contradiction.
In the latter case, no checkpoint-respecting LC of an honest node will contain $b$ after slot $t^*$, which is a contradiction as $b \in \chlc{i}{r_1}$.
Conversely, if $t>t^*+\Delta$, via Lemma~\ref{thm:chain-growth-catch}, $|\chlc{m}{t^-}| \geq |\chlc{k}{t^*}| \geq \ell$, which implies $b$ could not have been produced at height $\ell$, leading to a contradiction.
Thus $b=b^*$ and $\chlc{i}{r_1}[\ell]=\chlc{j}{r_2}[\ell]=b^*$.
\end{proof}

\begin{lemma}
\label{thm:checkpoint-strong-pivot}
Let $i$ and $j$ be two honest nodes awake at slots $r_1$ and $r_2\geq r_1$ respectively. 
Let $t$ be a checkpoint-strong pivot such that there is a convergence opportunity in $[t,r_1] \cap I$. 
Then, the last common block within $\chlc{i}{r_1}$ and $\chlc{j}{r_2}$ should have been produced in a slot $\geq t$.
\end{lemma}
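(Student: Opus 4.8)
The plan is to adapt the strong-pivot common-prefix argument of \cite{sleepy} (their Fact~4 and Lemma~5) to the checkpoint-respecting setting, with the crucial change that every count of convergence opportunities is taken inside the inter-checkpoint set $I$, exactly matching the definition of a checkpoint-strong pivot and thereby making Corollary~\ref{thm:unique} and Proposition~\ref{thm:common-prefix-prop} applicable. I would argue by contradiction: suppose the last common block $b_0$ of $\chlc{i}{r_1}$ and $\chlc{j}{r_2}$ was produced in some slot $s_0 < t$, say at height $\ell_0$. Because $b_0$ is the \emph{last} common block, the two chains carry conflicting blocks at every height above $\ell_0$ up to their respective tips, and since along any single chain the production slots strictly increase with height, each such fork block on $\chlc{i}{r_1}$ is produced in $(s_0, r_1]$ and each on $\chlc{j}{r_2}$ in $(s_0, r_2]$.

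The core is a pinning and counting step. For each convergence opportunity $t_c \in I$ lying after the fork and deep enough inside the observation window that both chains have grown past its height, Corollary~\ref{thm:unique} supplies a \emph{unique} height $h_c > \ell_0$ at which an honest block $b_c$ is produced, and Lemma~\ref{thm:chain-growth-catch} guarantees that both chains reach height $h_c$. The blocks at height $h_c$ in $\chlc{i}{r_1}$ and $\chlc{j}{r_2}$ conflict (they lie above $\ell_0$); were both of them honest, Proposition~\ref{thm:common-prefix-prop} would force them to coincide with $b_c$, a contradiction. Hence at least one of the two chains carries an \emph{adversarial} block at height $h_c$. As the $h_c$ are pairwise distinct, this defines an injection from the counted convergence opportunities into distinct adversarial fork blocks.

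I would then collect this injection over an interval $[t_0, t_1]$ with $t_0 = s_0 < t \le t_1$, using the hypothesised convergence opportunity $t^* \in [t, r_1] \cap I$ to certify (by the same conflict argument via Proposition~\ref{thm:common-prefix-prop}) that at least one such adversarial fork block is actually present, i.e.\ $\mathrm{A}([t_0,t_1]) > 0$. Combined with the injection this yields $\mathrm{A}([t_0,t_1]) \ge \mathrm{C}(I \cap [t_0+\Delta, t_1-\Delta]) > 0$, which directly contradicts the defining property of the checkpoint-strong pivot $t$, namely that for every $t_0 \le t \le t_1$ either $\mathrm{A}([t_0,t_1]) < \mathrm{C}(I \cap [t_0+\Delta, t_1-\Delta])$ or $\mathrm{A}([t_0,t_1]) = 0$. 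The contradiction forces $s_0 \ge t$, which is the claim.

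The main obstacle is the boundary bookkeeping that makes the injection and the interval choice mutually consistent: one must (i) ensure the honest block of every counted convergence opportunity sits strictly above the fork height $\ell_0$ and within reach of \emph{both} chains, which requires discarding opportunities within $\Delta$ of the endpoints, and (ii) reconcile the two distinct observation times $r_1 \le r_2$ so that every adversarial block charged to a counted opportunity genuinely lies in the single interval $[t_0,t_1]$ over which the pivot inequality is evaluated. As in the chain-quality argument (Lemma~\ref{thm:chain-quality}), I expect to resolve this by anchoring $t_0, t_1$ at honest blocks bracketing the divergence and invoking the chain-growth Lemma~\ref{thm:chain-growth-catch}; the hypothesis of a convergence opportunity in $[t, r_1]$, i.e.\ before the \emph{earlier} observation time, is precisely what guarantees that the divergence interval extends past $t$ and is non-empty, so that the pivot inequality can be brought to a contradiction.
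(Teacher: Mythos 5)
Your proposal has the right skeleton (argue by contradiction against the pivot inequality, using Corollary~\ref{thm:unique} and Proposition~\ref{thm:common-prefix-prop} to pin honest blocks at convergence-opportunity heights), but the central counting step does not go through in this model. You charge, to each counted convergence opportunity $t_c$, one adversarial block at height $h_c$, and this block may live on \emph{either} of the two diverging chains; you then conclude $\mathrm{A}([t_0,t_1]) \geq \mathrm{C}(I\cap[t_0+\Delta,t_1-\Delta])$ from the mere distinctness of the charged blocks. That inference is exactly what fails for sleepy-style PoS lotteries: $\mathrm{A}$ counts adversarial leader elections, and a single election can be reused to mint several distinct blocks on different chains (nothing-at-stake). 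Distinctness of production slots, hence of elections, is only guaranteed for adversarial blocks lying on one and the same chain --- which is why the paper's own counting arguments (e.g., Lemma~\ref{thm:chain-quality}) always insist the counted blocks ``lie on the same chain.'' Since your charged blocks are split across $\chlc{i}{r_1}$ and $\chlc{j}{r_2}$, in the worst case you only obtain $\mathrm{A}\geq\mathrm{C}/2$, which does not contradict the pivot definition. This is precisely the obstruction the paper cites as the reason the PoW-style fork counting of \cite{backbone}/\cite{sankagiri_clc} does not carry over to PoS, and the reason checkpoint-strong pivots are introduced. (A smaller loose end: to get $h_c>\ell_0$ you implicitly need chain growth from the fork block $b_0$, but $b_0$ need not be honest nor produced in $I$, so Lemma~\ref{thm:chain-growth-catch} does not directly apply at $s_0$.)

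The paper's proof avoids the cross-chain issue by localizing to a single height. It first replaces $t$ by a convergence opportunity $t'\in I$ with $t\le t'\le r_1$ that is itself a checkpoint-strong pivot (the pivot condition yields an adversary-free window $[t'',t']$ with $t'',t'\in I$ straddling $t$, or $t$ is itself such an opportunity). Letting $\ell^*$ be the height of the honest block $b^*$ produced at $t'$, it shows that no honestly-held chain can carry an \emph{adversarial} block at height $\ell^*$: were there one, it is bracketed by the nearest honest blocks $b_1,b_2$ \emph{on that same chain}, a case analysis forces $t'\in[t_1+\Delta,t_2-\Delta]$, and the all-adversarial run between $b_1$ and $b_2$ --- lying on a single chain and hence produced at distinct slots --- is counted against the chain-growth bound of Lemma~\ref{thm:chain-growth-lemma} to violate the pivot property of $t'$ on $[t_1,t_2]$. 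With height $\ell^*$ honest in both $\chlc{i}{r_1}$ and $\chlc{j}{r_2}$, Proposition~\ref{thm:common-prefix-prop} forces both entries to equal $b^*$, so the last common block sits at height $\ge\ell^*$ and was produced at a slot $\ge t'\ge t$. To repair your argument you would need to restrict each charge to a maximal adversarial run on a single honestly-held chain, which essentially collapses it into the paper's proof.
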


\begin{proof}
Since $t$ is a checkpoint-strong pivot, there exist convergence opportunities $t'' \leq t \leq t'$ such that $t'',t' \in I$ and no adversary block is produced in the interval $[t'',t']$.
(Note that if $t$ is a convergence opportunity in $I$, $t''=t'=t$.)
In any case, $t'$ is also a checkpoint-strong pivot.
By the assumption that there is a convergence opportunity in $[t,r_1] \cap I$, we know that $t' \leq r_1$.

As $t' \in I$ is a convergence opportunity, there exists an honest block $b^*$ produced at some height $\ell^*$ at slot $t'$.
Next, we claim that since $t'$ is a checkpoint-strong pivot, there cannot be an adversarial block within $\chlc{i}{r_1}$ and $\chlc{j}{r_2}$ at height $\ell^*$.
For the sake of contradiction, suppose there exists an adversary block $b$ at height $\ell^*$ in a chain $\chlc{m}{s}$ held by some honest node $m$ at some slot $s \geq t'+\Delta$.
Let $b_1$, produced at slot $t_1$ and height $\ell_1<\ell^*$, denote the last honest block in the prefix of $b$ within $\chlc{m}{s}$.
If there is no such block, we take the genesis block as $b_1$.
Similarly, let $b_2$, produced at slot $t_2$ and height $\ell_2>\ell^*$, denote the first honest block in the suffix of $b$ within $\chlc{m}{s}$.
If there is no such block, we take the last block on $\chlc{m}{s}$ as $b_2$.
If $b_2$ is an honest block, by definition, all of the blocks on $\chlc{m}{s}$ at heights $(\ell_1,\ell_2)$ are adversarial (there is at least one such block $b$) and have been mined at distinct times within the interval $(t_1,t_2)$.
On the other hand, if $b_2$ is adversarial, all of the blocks on $\chlc{m}{s}$ at heights $(\ell_1,\ell_2]$ are adversarial and have been mined at distinct times within the interval $(t_1,t_2]$.
In this context, we first consider the case that $b_2$ is honest.

Observe that as $t'$ is a convergence opportunity, either $t' \leq t_1-\Delta$, $t' \in [t_1+\Delta,t_2-\Delta]$ or $t' \geq t_2+\Delta$.
First, if $t' \leq t_1-\Delta$, via Lemma~\ref{thm:chain-growth-catch}, all honest nodes will hold a chain of length at least $\ell^*$ by time $t_1$, implying that $b_1$ could not have been mined at height $\ell_1<\ell^*$, which is a contradiction.
Second, if $t' \geq t_2+\Delta$, then the honest producer of $b^*$ has already seen $b_2$; yet decided to produce a block at height $\ell^*<\ell_2$.
However, this is only possible if $b_2$ conflicts with a checkpoint observed by the producer of $b^*$ by time $t'$.
However, this checkpoint would also be seen by node $m$ by time $t'+\Delta \leq s$, implying that $b_2$ cannot be on $\chlc{m}{s}$ at time $s$, which again is a contradiction.
Consequently, the only possible scenario for $t'$ is $t' \in [t_1+\Delta,t_2-\Delta]$. 

Finally, suppose $b_2$ was produced by some honest node $m'$.
As $\chlc{m'}{t_2}$ contains an honest block, $b_1$, at height $\ell_1$ produced at slot $t_1<t_2-\Delta$, via Lemma~\ref{thm:chain-growth-lemma}, $\ell_2=|\chlc{m'}{t_2}|>\ell_1+\mathrm{C}(I \cap [t_1+\Delta,t_2-\Delta])$.
Moreover, all of the blocks between $b_1$ and $b_2$ (perhaps including $b_2$) are adversarial and has been mined in the interval $(t_1,t_2]$ at distinct times, implying that $A([t_1,t_2]) \geq \mathrm{C}(I \cap [t_1+\Delta,t_2-\Delta])$.
However, this is a contradiction with the fact that $t' \in [t_1+\Delta,t_2-\Delta]$ is a checkpoint-strong pivot.

Next, we consider the case that $b_2$ is an adversarial.
From the analysis above, we know that $t'>t_1+\Delta$.
Moreover, every block on $\chlc{m}{s}$ following $b_1$ are adversarial blocks.
As the honestly-held chain $\chlc{m}{s}$, $s > t'+\Delta$, contains an honest block, $b_1$, at height $\ell_1$ produced at slot $t_1<s-\Delta$, via Lemma~\ref{thm:chain-growth-lemma}, $\ell_2=|\chlc{m}{s}|>\ell_1+\mathrm{C}(I \cap [t_1+\Delta,s-\Delta])$.
Moreover, all of the blocks between $b_1$ and $b_2$ (including $b_2$) are adversarial and have been mined in the interval $(t_1,s]$ at distinct times, implying that $A([t_1,s]) \geq \mathrm{C}(I \cap [t_1+\Delta,s-\Delta])$.
However, this is a contradiction with the fact that $t' \in [t_1+\Delta,s-\Delta]$ is a checkpoint-strong pivot.

Consequently, there cannot exist an adversary block $b$ at height $\ell$ in any chain held by honest nodes after slot $t'+\Delta$, an assertion that includes the chains $\chlc{i}{r_1}$ and $\chlc{j}{r_2}$.
Then, as $\chlc{i}{r_1}[\ell^*]$ and $\chlc{j}{r_2}[\ell^*]$ are both honest blocks and $t'$ is a convergence opportunity, via Proposition~\ref{thm:common-prefix-prop}, \begin{IEEEeqnarray}{C}\chlc{i}{r_1}[\ell^*]=\chlc{j}{r_2}[\ell^*]=b^*.\end{IEEEeqnarray}
Hence, the last common block within  $\chlc{i}{r_1}$ and $\chlc{j}{r_2}$ must have been produced in a slot $\geq t' \geq t$.
\end{proof}

\subsection{Probabilistic Analysis}
\label{sec:appendix-probabilistic-analysis}

To lower bound the number of convergence opportunities within $I$, we can use the following observation from \cite[Proposition 4]{sankagiri_clc} which relies on Proposition~\ref{thm:checkpoint-gap} (gap property): If $t \geq s \geq \max(\GST,\GAT)$, \begin{IEEEeqnarray}{C}
\mathrm{C}([t_1+\Delta,t_2-\Delta] \cap I) \geq \mathrm{C}([t_1+\Delta,t_2-\Delta]) - (1+(t_2-t_1))(\Trecent+2\Delta+1)/\Tcheckpoint
\IEEEeqnarraynumspace
\end{IEEEeqnarray}
Combining this expression with \cite[Lemma 2, Corollary 2, Fact 2]{sleepy} yields the following proposition:

\begin{proposition}
\label{thm:number-of-c}
For any $\epsilon>0$, there exists an $\epsilon'$ such that given $t_2 \geq t_1 \geq \max(\GST,\GAT)$, $t\triangleq t_2-t_1 \geq 2\Delta$, 
\begin{IEEEeqnarray*}{rCl}
&& \mathbb{P}\left[\mathrm{C}([t_1+\Delta,t_2-\Delta] \cap I) \leq \left((1-\epsilon)(1-2pn\Delta)\alpha-\frac{\Trecent+2\Delta+1}{\Tcheckpoint}\right)t\right] \\
&<& \exp{(-\epsilon'\alpha t)} \\
&& \mathbb{P}[\mathrm{A}([t_1,t_2]) > (1+\epsilon)\beta t] < \exp{(-\epsilon^2\beta t/3)}.
\end{IEEEeqnarray*}
\end{proposition}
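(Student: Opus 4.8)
The plan is to separate the deterministic combinatorial accounting of the ``lost'' convergence opportunities from the probabilistic concentration, prove each of the two tail bounds independently, and then fold the reduction into the concentration estimate.

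First I would record the deterministic inequality stated just before the proposition,
$$\mathrm{C}([t_1+\Delta,t_2-\Delta] \cap I) \geq \mathrm{C}([t_1+\Delta,t_2-\Delta]) - (1+(t_2-t_1))(\Trecent+2\Delta+1)/\Tcheckpoint,$$
which follows from \cite[Proposition 4]{sankagiri_clc} together with the gap property (Proposition~\ref{thm:checkpoint-gap}). The mechanism is that the gap property caps the number of blocks checkpointed in $[t_1,t_2]$ at $(1+(t_2-t_1))/\Tcheckpoint$, and each checkpoint excises from $I$ a stretch of at most $\Trecent+2\Delta+1$ slots (the $\Trecent$ from the recency guard in the definition of $I_l$, the $2\Delta$ from its boundary trimming, and one slot for endpoint counting); every convergence opportunity lying outside these excised stretches survives inside $I$.

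Next I would invoke the concentration machinery of \cite[Lemma 2, Corollary 2, Fact 2]{sleepy}. Since all honest nodes are awake after $\max(\GST,\GAT)$, the per-slot expectation of convergence opportunities is $(1-2pn\Delta)\alpha$ and that of adversarial leader slots is $\beta=pf$. A lower-tail multiplicative Chernoff bound yields, for a suitable $\epsilon'$ depending on $\epsilon$,
$$\mathbb{P}\big[\mathrm{C}([t_1+\Delta,t_2-\Delta]) \leq (1-\epsilon)(1-2pn\Delta)\alpha t\big] < \exp(-\epsilon'\alpha t),$$
while the upper-tail Chernoff bound on the sum of the (independent, per-slot) adversarial-leader counts gives directly $\mathbb{P}[\mathrm{A}([t_1,t_2]) > (1+\epsilon)\beta t] < \exp(-\epsilon^2\beta t/3)$, which is already the second claim verbatim.

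Finally I would combine the two ingredients for the first claim: on the complement of the Chernoff failure event, the deterministic inequality gives $\mathrm{C}([t_1+\Delta,t_2-\Delta]\cap I) \geq (1-\epsilon)(1-2pn\Delta)\alpha t - (1+t)(\Trecent+2\Delta+1)/\Tcheckpoint$, whose part linear in $t$ is exactly the claimed threshold $\big((1-\epsilon)(1-2pn\Delta)\alpha-(\Trecent+2\Delta+1)/\Tcheckpoint\big)t$. The one subtlety to watch is the residual additive constant $(\Trecent+2\Delta+1)/\Tcheckpoint$ produced by the ``$1+$'' in the reduction; since it is independent of $t$, it is absorbed by charging it against an arbitrarily small portion of the Chernoff slack (equivalently, shrinking $\epsilon$ and $\epsilon'$ slightly), leaving the bound in the form $\exp(-\epsilon'\alpha t)$. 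I expect this bookkeeping step, rather than the concentration itself, to be the only delicate point, since both the deterministic reduction and the raw Chernoff tails are quoted wholesale from \cite{sankagiri_clc,sleepy} and fed downstream into Theorem~\ref{thm:security-of-PIlc}.
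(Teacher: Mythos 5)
Your proposal is correct and follows essentially the same route as the paper, which likewise obtains the result by combining the deterministic reduction $\mathrm{C}([t_1+\Delta,t_2-\Delta]\cap I)\geq \mathrm{C}([t_1+\Delta,t_2-\Delta])-(1+(t_2-t_1))(\Trecent+2\Delta+1)/\Tcheckpoint$ (from the gap property via \cite[Proposition 4]{sankagiri_clc}) with the Chernoff-type concentration bounds of \cite[Lemma 2, Corollary 2, Fact 2]{sleepy}. Your explicit handling of the residual additive constant arising from the ``$1+$'' term is a detail the paper leaves implicit, but it is resolved exactly as you describe, by absorbing it into the slack of $\epsilon$ and $\epsilon'$.
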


We also note that for any given $p<(n-2f)/(2\Delta n(n-f))$ and sufficiently large $\Tcheckpoint$, there exists a constant $\epsilon>0$ such that
\begin{equation}
\label{eq:ineq}
    (1+\epsilon)\beta<(1-\epsilon)(1-2pn\Delta)\alpha-\frac{\Trecent+2\Delta+1}{\Tcheckpoint}
\end{equation}

Next, we define $T$ as the minimum slot $t \geq \max(\GST,\GAT)$ such that $\mathrm{C}([0, t-\Delta] \cap I)=\mathrm{A}([0,t])$.
In other words, $T$ is an upper bound on the slot by which checkpoint-respecting LCs held by honest nodes would have caught up with the checkpoint-respecting LCs held by the adversary nodes.
Thus, we can view $T$ as the time $\PIlc$ resets itself such that after $T$, it behaves like a checkpoint-respecting LC protocol that has just started running in a synchronous network.
As long as $\Tcheckpoint$ is selected sufficiently large for equation~\ref{eq:ineq} to hold, combining \cite[Propositions 2,3,4]{ebbandflow} with Proposition~\ref{thm:number-of-c}, we can assert the following proposition bounding $T$: 
\begin{proposition}
\label{thm:catch-up}
There exists a constant $\mathbf{C}$ such that for any given security parameter $\sigma$, and $\GST,\GAT$ specified by $\AdvEnvPG$, $T \leq \mathbf{C}(\max(\GST,\GAT) + \sigma)$ except with probability $e^{-\Omega(\sigma)}$.
\end{proposition}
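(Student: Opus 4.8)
The plan is to recast $T$ as a first-passage time of the \emph{lead} process $L(t) \triangleq \mathrm{A}([0,t]) - \mathrm{C}([0,t-\Delta]\cap I)$ and to show that $L$ acquires a strictly negative drift once the network stabilizes. Write $s_0 \triangleq \max(\GST,\GAT)$. Since the inter-checkpoint intervals $I$ lie entirely after $s_0$ (their left endpoints are $\geq s_0 + \Delta$ by construction), we have $\mathrm{C}([0,s_0-\Delta]\cap I)=0$, so $L(s_0) = \mathrm{A}([0,s_0])$. I would bound this initial lead two ways: deterministically $\mathrm{A}([0,s_0]) \leq f s_0$, and, when $s_0$ is comparable to or larger than $\sigma$, via the second tail bound of Proposition~\ref{thm:number-of-c} by $(1+\epsilon)\beta s_0 + O(\sigma)$ except with probability $e^{-\Omega(\sigma)}$. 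Either way the initial lead is $O(s_0 + \sigma)$ with overwhelming probability.

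Next I would extract the drift for $t = s_0 + \tau$ with $\tau \geq 2\Delta$. Applying the first inequality of Proposition~\ref{thm:number-of-c} to the interval $[s_0,t]$ gives $\mathrm{C}([0,t-\Delta]\cap I) \geq \mathrm{C}([s_0+\Delta,t-\Delta]\cap I) \geq \gamma_{\mathrm h}\tau$ with $\gamma_{\mathrm h} \triangleq (1-\epsilon)(1-2pn\Delta)\alpha - (\Trecent+2\Delta+1)/\Tcheckpoint$, while the second gives $\mathrm{A}([s_0,t]) \leq \gamma_{\mathrm a}\tau$ with $\gamma_{\mathrm a} \triangleq (1+\epsilon)\beta$, with respective failure probabilities $e^{-\Omega(\alpha\tau)}$ and $e^{-\Omega(\beta\tau)}$. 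Because $f<n/2$ forces $\alpha = p(n-f) > \beta = pf$ after $s_0$, inequality~\eqref{eq:ineq} (valid for $p < (n-2f)/(2\Delta n(n-f))$ and $\Tcheckpoint$ large enough) yields $\delta \triangleq \gamma_{\mathrm h} - \gamma_{\mathrm a} > 0$. On the event that both estimates hold, $L(t) \leq L(s_0) + \gamma_{\mathrm a}\tau - \gamma_{\mathrm h}\tau = L(s_0) - \delta\tau$, so the lead is driven below zero once $\tau \geq L(s_0)/\delta$.

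Choosing $\tau^\star \triangleq \max(L(s_0)/\delta,\, \mathbf{C}_0\sigma)$ for a suitable constant $\mathbf{C}_0$ and setting $t^\star \triangleq s_0 + \tau^\star$, the concentration error at the single endpoint $t^\star$ is $e^{-\Omega(\alpha\tau^\star)} + e^{-\Omega(\beta\tau^\star)} = e^{-\Omega(\sigma)}$, and combined with $L(s_0) = O(s_0 + \sigma)$ this gives $\mathrm{C}([0,t^\star-\Delta]\cap I) \geq \mathrm{A}([0,t^\star])$, i.e.\ $L(t^\star) \leq 0$, for $t^\star \leq \mathbf{C}(\max(\GST,\GAT)+\sigma)$. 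To pass from this one-sided inequality to the \emph{equality} defining $T$, note that $\mathrm{C}([0,t-\Delta]\cap I)$ increases by at most one as $t$ increments (only slot $t-\Delta$ is newly admitted), so $L(t)-L(t-1) \geq -1$; letting $t'$ be the first slot in $[s_0,t^\star]$ with $L(t')\leq 0$, minimality forces $L(t'-1)\geq 1$, whence $L(t') \geq L(t'-1)-1 \geq 0$ and therefore $L(t')=0$, so $T = t' \leq t^\star$. A union bound over the polynomially many candidate slots up to $T_{\mathrm{max}}$ preserves the $e^{-\Omega(\sigma)}$ error. This is precisely the recipe of \cite[Propositions 2, 3, 4]{ebbandflow} fed with the checkpoint-aware counting estimate of Proposition~\ref{thm:number-of-c}.

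The main obstacle I anticipate is not the drift argument but the pre-stabilization bookkeeping: before $s_0$ the network may be partitioned and honest nodes asleep, so no growth estimate on $\mathrm{C}(\cdot\cap I)$ is available and the adversarial head start must be controlled purely by the one-sided/deterministic bound on $\mathrm{A}([0,s_0])$, while simultaneously justifying $\mathrm{C}([0,s_0-\Delta]\cap I)=0$ from the placement of $I$ relative to $s_0+\Trecent$ (which in turn leans on the gap property, Proposition~\ref{thm:checkpoint-gap}, to keep the correction term $(\Trecent+2\Delta+1)/\Tcheckpoint$ in $\gamma_{\mathrm h}$ small enough for $\delta>0$). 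Getting these constants to line up so that $\mathbf{C}$ is genuinely absolute — independent of $\GST$, $\GAT$, and $\sigma$ — is the delicate part.
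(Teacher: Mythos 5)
Your proposal is correct and takes essentially the same route as the paper, which establishes Proposition~\ref{thm:catch-up} precisely by feeding the concentration bounds of Proposition~\ref{thm:number-of-c} (under inequality~\eqref{eq:ineq}) into the drift/first-passage argument of \cite[Propositions 2, 3, 4]{ebbandflow}; your write-up merely makes explicit the lead process, the deterministic bound on the adversarial head start before $\max(\GST,\GAT)$, and the unit-step argument forcing exact equality at $T$, all of which are consistent with the paper's intended argument.
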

Using Proposition~\ref{thm:catch-up}, we can complete the proof of Theorem~\ref{thm:security-of-PIlc}:
\begin{proof}
Using Proposition~\ref{thm:number-of-c} and Lemma~\ref{thm:chain-growth-lemma}, we can assert that for any given $\epsilon>0$, $\mathrm{growth}(\sigma, k)$ is satisfied after $\max(\GST,\GAT)$ except with probability $e^{-\Omega(\sigma)}$, where $k=g_0 \sigma$ for \begin{IEEEeqnarray}{C}g_0=(1-\epsilon)(1-2pn\Delta)\alpha-\frac{\Trecent+2\Delta+1}{\Tcheckpoint}.\end{IEEEeqnarray}
Similarly, using Lemma~\ref{thm:chain-quality}, Proposition~\ref{thm:number-of-c} and Proposition~\ref{thm:catch-up}, we can assert that for any given $\epsilon>0$, $\mathrm{quality}(\mu,1/\mu)$ is satisfied after slot $\mathbf{C}(\max(\GST,\GAT)+\sigma)$, except with probability $e^{-\Omega(\sigma)}$, where 
\begin{IEEEeqnarray}{C}\mu=\frac{(1-\epsilon)(1-2pn\Delta)\alpha-(1+\epsilon)\beta-(\Trecent+2\Delta+1)/\Tcheckpoint}{(1-2pn\Delta)\alpha-(\Trecent+2\Delta+1)/\Tcheckpoint}.\end{IEEEeqnarray}
Finally, we know via Lemma~\ref{thm:checkpoint-strong-pivot} that checkpoint-strong pivots force convergence of the checkpoint-respecting LCs seen by all honest nodes.
Hence, we can use \cite[Theorem 7]{sleepy} to show that $\mathrm{prefix}(\sigma)$ is satisfied after slot $\mathbf{C}(\max(\GST,\GAT)+\sigma)$ except with probability $e^{-\Omega(\sqrt{\sigma})}$.
Then, using \cite[Lemma 1]{sleepy}, we conclude that $\PIlc$ is secure with confirmation time $O(\sigma/g_0)$ 
after slot $\mathbf{C}(\max(\GST,\GAT)+\sigma)$ except with probability $e^{-\Omega(\sqrt{\sigma})}$.
\end{proof}

Via Lemma~\ref{thm:checkpoint-recency}, $\Trecent = \Delta + \Ttimeout + \Tconfirm$.
On the other hand, $\Tcheckpoint$ should be chosen large enough for the inequality~\eqref{eq:ineq} to be satisfied.
Thus, placing the definitions of $\alpha$ and $\beta$, setting $f = n/4$,  $\Ttimeout = 60$ seconds as in Section~\ref{sec:experiments}, and choosing $\epsilon = 0.1$ and $p=0.8(n-2f)/(2\Delta n(n-f))=0.8/(3n\Delta)$, we can obtain inequality~\eqref{eq:ineq} from the following expression:
\begin{equation}
    \frac{2\Ttimeout+3\Delta}{(1+\epsilon)pf - (1-\epsilon)(1-2pn\Delta)p(n-f)2\Ttimeout} = 100(120+3\Delta)\Delta < \Tcheckpoint
\end{equation}
Thus, for any given value of $\Delta$, there exists a $\Tcheckpoint$ that satisfies inequality~\eqref{eq:ineq} for these set of parameters.
Experimentation in Section~\ref{sec:experiments} shows that for $\Delta$ approximately a few seconds, $\Tcheckpoint$ of $300$ seconds is sufficiently large to ensure security under real network conditions.

\subsection{Security Argument for Chia}
\label{sec:appendix-pospace}

While the sections above prove Theorem~\ref{thm:main-security} for PoS, and by implication PoW protocols, security of Chia \cite{cohen2019chia} does not immediately follow from the analysis of checkpoint-strong-pivots due to nothing-at-stake attacks \cite{dem20}, which enable the adversary to mine blocks on top of each existing block via independent Poisson processes.
The first paper to show security for such protocols given the possibility of nothing-at-stake attacks is \cite{dem20} which introduced a novel method called blocktree-partitioning.
This method splits the overall blocktree into adversarial trees that build on a \emph{fictitious honest tree} with a chain growth property analogous to the one in Appendix~\ref{sec:appendix-chain-growth}.
Thus, as in the case of convergence opportunities, we can once again count only the honest blocks that arrive within inter-checkpoint intervals $I_l$ to provide a non-trivial lower bound on the growth of the fictitious honest tree in the context of checkpoint-respecting LCs.
This lower bound follows from the fact that each honest block produced during such an interval $I_l$ has the potential to contribute to the growth of honest chains just like in the case of original LC protocols.
Using the modified definition for the fictitious honest tree, we can then prove \cite[Theorem 3.2]{dem20} that ties protocol security to the evolution of the fictitious honest tree for checkpoint-respecting LC protocols.
Finally, the probabilistic analysis of \cite[Section 4.2]{dem20} goes through provided that $\beta<1/e$ and the parameters $p$ and $1/\Tcheckpoint$ are sufficiently small.
Details of this analysis is left as future work. 

\section{Experimental Evaluation Details}
\label{sec:appendix-experiments}

\paragraph{Implementation:}
Our prototype is implemented in the programming language Rust.
A diagram of the different components and their interactions is provided in Figure~\ref{fig:protocol-systems-diagram}.
We use a longest chain protocol modified to respect latest checkpoints as $\PIlc$, with a permissioned block production lottery with winning probability $p$ per node and per time slot of duration $\Tslot$; and
HotStuff\footnote{We used this Rust implementation: \url{https://github.com/asonnino/hotstuff}} as $\PIbft$. Honest nodes pause HotStuff (including its timeouts) while waiting for the next checkpoint proposal.
All communication (including HotStuff's) takes place in a broadcast fashion via \texttt{libp2p}'s \texttt{Gossipsub} protocol\footnote{We used this Rust implementation: \url{https://github.com/libp2p/rust-libp2p}}, mimicking Ethereum 2's network layer \cite{eth2-spec-p2p}, to be able to scale to thousands of nodes.
Thus, we assume that under normal conditions every message
received by one honest node
will be received by all honest nodes within some bounded delay.
Since responsiveness is not so important for our checkpointing application
and to avoid broadcasting quorum certificates,
we use a variant of HotStuff where
to ensure liveness
the leader waits
for the network delay bound before proposing a block.
Our prototype does not implement the application logic of the beacon chain (such as validators joining and leaving, integration with shard chains and Ethereum 1, etc.) which can be realized on top of consensus in the same way as currently done in Ethereum 2,
and our prototype does not use any orthogonal techniques to reduce bandwidth by constant factors (such as signature aggregation, short signature schemes, compression of network communication, etc.) which are not fundamental to the consensus problem.

\paragraph{Choice of parameters:}
We chose the parameters of \ourprotocol in the experiments to match the parameters of Ethereum 2's beacon chain.
The beacon chain has $C=32$ slots per epoch and $m=128$ validators per slot, for a total of $n=4096$ validators (per epoch), which is the approximate number of nodes that we run our experiments with.
To match the block inter-arrival time (\ie, the duration of one slot) of $12\,\mathrm{s}$ in the beacon chain, we set $p = 1/n$ and account for the probability of no node winning the block production lottery and choose $\Tslot = 7.5\,\mathrm{s}$.
We also match the block payload size of $22\,\mathrm{KBytes}$.
In terms of $\PIlc$, we chose $\kcp=6$ so that a checkpoint proposal by an honest leader is reasonably likely (although not `guaranteed') to be accepted by other honest nodes, and $k=6$ for a swift $72\,\mathrm{s}$ average confirmation delay of $\LOGda{}{}$. Note that $\kcp$ should be the same for all nodes, while each client can choose an individual $k$ to trade off latency and confirmation error probability of $\LOGda{}{}$.
To leave enough time for message propagation,
we set the HotStuff timeout $\Thotstuff = 20\,\mathrm{s}$.
To avoid HotStuff timeouts escalating into checkpoint timeouts
for honest leaders, we set $\Ttimeout = 1\,\mathrm{min}$.
Finally, to target $5\times$ improvement in average
$\LOGacc{}{}$ latency over Gasper (\cf Figure~\ref{fig:latency-bandwidth}), we set $\Tcheckpoint = 5\,\mathrm{min}$.

\paragraph{Experiment setup:}
Adversarial nodes in the experiment
aim to stall consensus as much as possible.
Thus, they do not
share a proposal when elected leader in $\PIbft$ or $\PIacc$, so that honest nodes
have to wait for a timeout before they can move on,
and they mine selfishly \cite{selfishmining} in $\PIlc$
to reduce honest chain growth.
We ran our prototype
(a) with no adversary (Figure~\ref{fig:prototype-results-ledgers-nofaults}(l)),
and (b) with $\beta=25\%$ adversary (Figure~\ref{fig:prototype-results-ledgers-faults}(r)),
each 
for $2500\,\mathrm{s}$ on five AWS EC2 \texttt{c5a.8xlarge} instances in each of ten AWS regions\footnote{\texttt{eu-north-1}, \texttt{eu-west-3}, \texttt{ap-south-1}, \texttt{ap-northeast-1}, \texttt{ap-southeast-2}, \texttt{sa-east-1}, \texttt{ca-central-1}, \texttt{us-west-1}, \texttt{us-east-2}, \texttt{us-east-1}}, with $82$ nodes per machine, for a total of $4100$ nodes.
Each honest (adversarial) node connected to $15$ ($15$ honest and $15$ adversarial) randomly selected peers for the
peer-to-peer
gossip network.

\paragraph{Observations:}
We observe that both without faults
(Figure~\ref{fig:prototype-results-ledgers-nofaults}(l))
as well as under the $25\%$-attack
(Figure~\ref{fig:prototype-results-ledgers-faults}(r))
the available full ledger (\ref{leg:prototype-results-ledgers-nofaults-LOGava})
shows steady growth, albeit under attack at a reduced rate
due to selfish mining.
In both cases,
the accountable prefix ledger (\ref{leg:prototype-results-ledgers-nofaults-LOGacc})
periodically catches up with the available ledger.
Timeouts cause occasional but overall minor
delays of the catch-up.

In terms of bandwidth
(reported in Figures~\ref{fig:prototype-results-traffic-faults} and \ref{fig:prototype-results-traffic-nofaults} \emph{for an exemplary AWS instance},
\ie, for $82$ nodes),
we observe a distinct spiky pattern
with frequent small spikes corresponding to the propagation
of $\PIlc$ blocks and infrequent wide spikes corresponding
to the propagation of checkpoint votes and $\PIbft$ blocks and votes as part of checkpointing.
There is more traffic under attack
than without attack
(per node:
avg. $78\,\mathrm{KB/s}$ peak $1.5\,\mathrm{MB/s}$
vs avg. $56\,\mathrm{KB/s}$ peak $1.34\,\mathrm{MB/s}$),
since timeouts due to adversarial non-action
lead to additional iterations of checkpointing and HotStuff.
In either case,
the bandwidth requirement does not pose
a severe limitation
to participation
even using
consumer-grade Internet connectivity.

\paragraph{Bandwidth requirement and accountable ledger latency:}
We examine the tradeoff between the average number of votes
communicated per time
(as a surrogate for average required bandwidth,
to avoid confounding factors such as compression
or signature aggregation)
and the average latency of the accountable ledger (see Figure~\ref{fig:latency-bandwidth}), for varying number $n$ of nodes
and varying parameters $C$ and $\Tcheckpoint$ for Gasper and \ourprotocol,
respectively, under ideal operation,
\ie, $\beta = 0, \Delta = 0$.
In this case, Gasper transmits
$2 \cdot \frac{n}{C}$ votes per $12\,\mathrm{s}$ (per slot, each
committee member issues an LMD GHOST and a Casper FFG vote),
while \ourprotocol transmits $5 \cdot n$ votes per $\Tcheckpoint$ time
(broadcast checkpoint votes, checkpoint votes
in HotStuff proposal, three rounds of HotStuff voting for confirmation).
A transaction takes on average $\frac{1}{2} + 2$ epochs to enter
into the accountable ledger for Gasper (wait until end of ongoing epoch, then two epochs to reach finality),
and $\kcp \cdot 12\,\mathrm{s} + \frac{1}{2}\cdot\Tcheckpoint$
time to enter into the accountable ledger for \ourprotocol
($\kcp$-deep to enter checkpoint proposal, then wait until next checkpoint iteration).
As evident from Figure~\ref{fig:latency-bandwidth},
\ourprotocol offers slightly improved latency at comparable
bandwidth, or comparable bandwidth and latency but for
a larger number of nodes.
Let us point out that, as currently implemented,
nodes in \ourprotocol broadcast votes at the
highest throughput feasible once
$\Tcheckpoint$ has expired, so that the resulting
traffic pattern is more bursty
than that produced by Gasper,
where voting is taking place throughout each epoch.
However, Figure~\ref{fig:latency-bandwidth} also corroborates that
even if voting after $\Tcheckpoint$
was artificially rate-limited, bandwidth and latency comparable
to Gasper can be achieved.

Note that the gross bandwidth
measured in Figures~\ref{fig:prototype-results-traffic-nofaults}
and \ref{fig:prototype-results-traffic-faults} is roughly $6\times$ the bandwidth
estimate based on the number of votes per time.
This is largely due to two factors:
\emph{1)} We have not optimized HotStuff in our prototype
to remove quorum certificates from blocks
(although we may do so due to the broadcast nature of the
gossip network, as discussed in the earlier `Implementation' paragraph),
\emph{2)} the amplification factor that comes with nodes flooding every new message to all their peers in the gossip network.

\paragraph{Scaling the number of nodes for fixed accountable ledger latency:}
To scale the number of nodes for a fixed accountable ledger latency
(and hence fixed $C$ and $\Tcheckpoint$, respectively),
both Gasper and \ourprotocol need to increase their vote bandwidth
proportionally.
However, the attack described in
\cite{ethresearch-balancing-attack2}
suggests that even without adversarial but with merely
random network delay, Gasper is susceptible to
balancing by an adversary controlling
$O(\sqrt{C/n})$ of nodes.
Thus,
for fixed $C$, the relative adversarial resilience
decreases (to $0$) as $n$ increases (to $\infty$).

\paragraph{Improvements:}
Obvious improvements would be to customize the block proposal
generation of HotStuff to account for the semantics of votes and the state of the checkpointing protocol, \eg, do not propose votes from past checkpoint iterations, delay proposals when there are no pending votes, propose blocks with the minimum set of votes to reach a new checkpoint decision, etc.

\section{Helper Functions for Algorithms~\ref{algo:pseudocode-checkpoint-proposer} and~\ref{algo:pseudocode-checkpoint-extractor}}
\label{sec:appendix-pseudocode-helpers}

\begin{itemize}
    \item $\Call{PerformBookkeeping}{}$:
        Macro for bookkeeping in Alg.~\ref{algo:pseudocode-checkpoint-proposer}
        of checkpoint decisions
        and checkpoint proposals from checkpoint leaders in charge
    \item $\operatorname{CpLeaderOfIter}(c)$:
        Returns randomly selected publicly verifiable unique leader $\ld{c}$
        of checkpoint iteration $c$
    \item $\Call{Broadcast}{}(...)$:
        Broadcasts checkpoint proposal to other nodes
    \item $\Call{GetCurrProposalTip}{\null}$:
        Returns $\kcp$-deep block on node's checkpoint-respecting LC
        from $\PIlc$
    \item $\Call{IsValidProposal}{b}$:
        Checks whether block $b$ is consistent with
        $\kcp$-deep block on node's checkpoint-respecting LC
        as obtained from $\PIlc$
    \item $\Call{SubmitVote}{}(v)$:
        Inputs vote $v$ as payload to $\PIbft$ for ordering
    \item $\Call{GetNextVerifiedVoteFromBft}{\null}$:
        Retrieves next vote with valid signature
        from the output ledger $\LOGbft{}{}$
        as ordered by $\PIbft$
    \item $\Call{OutputCp}{...}$:
        Alg.~\ref{algo:pseudocode-checkpoint-proposer} outputs
        a new checkpoint decision
        (used as input in Alg.~\ref{algo:pseudocode-checkpoint-extractor}
        and to determine checkpoint-respecting LC in $\PIlc$)
\end{itemize}

\section{Proof-of-Work and Proof-of-Space}
\label{sec:appendix-extension-to-PoW}

We have so far focused on accountability gadgets built for permissioned LC protocols.
We conclude with an outlook on
accountability gadgets for permissionless LC protocols
such as those based on Proof-of-Work (PoW), \eg, Bitcoin \cite{nakamoto_paper}, or those based on Proof-of-Space (PoSpace), \eg, Chia \cite{cohen2019chia}.
In such constructions,
nodes fulfill two different roles:
\emph{miners} control a unit of 
a rate-limiting resource, \eg, compute power or storage space, and are responsible for extending the permissionless checkpoint-respecting LC; and \emph{validators} with unique cryptographic identities are responsible for providing accountability.
Security of $\PIlc$ is primarily maintained by miners, 
security of $\PIacc$ and the associated protocol $\PIbft$ are primarily maintained by validators.
While miners are free to participate dynamically, validators are expected to be always present to provide accountability.

In the following, let
$\beta$ be the fraction of online rate-limiting resource controlled by adversarial miners,
and $\beta^*$ be a quantity depending on the underlying $\PIlc$ protocol:
$\beta^* = 1/2$ for Bitcoin and $\beta^* = 1/e$ for Chia,
our two examples.
Then, the accountable ledger $\LOGacc{}{}$ and the available ledger $\LOGda{}{}$ satisfy:

Consider a network with
validators (with $n$ unique cryptographic identities)
and miners (at least one of which is honest and awake), with the properties discussed above.
The network may partition, and the miners may come online and go offline subject to the constraints below.
Then, for any $f \leq \lceil n/2 \rceil$:
\begin{enumerate}
    \item (\textbf{P1: Accountability}) The accountable ledger $\LOGacc{}{}$ provides \asr $(n-2f+2)$ at all times, and is live after network partition, \emph{if $\beta < \beta^*$ holds}, and if the number of honest validators is greater than $(n-f)$.
    \item (\textbf{P2: Dynamic Availability}) The available ledger $\LOGda{}{}$ is guaranteed to be safe after network partition and live at all times, if $\beta < \beta^*$ holds, and \emph{if the number of adversarial validators is at most $(n-f)$}.
\end{enumerate}
Again, $\LOGacc{}{}$ is always a prefix of $\LOGda{}{}$ by construction.

Observe that the requirements of having greater than $(n-f)$ honest nodes under \textbf{P1} and having $\beta < \beta^*$ under \textbf{P2} are analogous to the permissioned case formulated in
Theorem~\ref{thm:main-security}.
However, accountability gadgets for permissionless LC protocols have further requirements.
First, under \textbf{P1}, liveness of $\LOGacc{}{}$ requires $\beta < \beta^*$.
Otherwise, the 
adversarial miners might stall $\PIlc$ and hence $\LOGda{}{}$ might not become live after network partition, which was argued to be a necessary condition for liveness of $\LOGacc{}{}$ in Section~\ref{sec:proof-sketch}.
Second, under \textbf{P2}, security of $\LOGda{}{}$ requires the number of adversarial validators to be bounded by $(n-f)$.
Otherwise, if the number of adversarial validators was greater than $(n-f)$, it would be possible for a block that is not on a checkpoint-respecting LC held by any honest nodes to become checkpointed solely by adversarial votes (l.~\ref{line:threshold1} of Alg.~\ref{algo:pseudocode-checkpoint-extractor}), thus causing safety violations on $\LOGda{}{}$.

For Bitcoin,
the security proof of
Appendices~\ref{sec:appendix-security-proofs} and~\ref{sec:appendix-security-proof-for-the-longest-chain-protocol}
applies.
For Chia, security can be proven via an extension of 
\emph{blocktree partitioning} \cite{dem20} to 
checkpoint-respecting LCs.
Details in
Appendix~\ref{sec:appendix-pospace}.
Finally, note that the reduced threshold $\beta^* = 1/e$ for Chia as the permissionless LC protocol
is due to nothing-at-stake attacks \cite{dem20}, since randomness is updated per
block in Chia.
Other embodiments of PoSpace may provide different $\beta^*$.

\section{Proof of Non-Accountability of Checkpointed LC}

\label{sec:appendix-non-accountability-clc}

In this section, we show that the checkpointed longest chain protocol presented in \cite{sankagiri_clc} does not provide accountable safety.
The checkpointing protocol used by \cite{sankagiri_clc} is a slight modification of the Algorand BA protocol from \cite{algorand_agreement}.
Thus, the attack below on the accountability of the Algorand BA in \cite{sankagiri_clc} is very similar to the one described for Algorand BBA* \cite{algorand} in \cite[Appendix C.3]{forensics}.

\begin{theorem}
Algorand BA \cite{algorand_agreement} does not provide accountable safety.
\end{theorem}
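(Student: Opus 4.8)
The plan is to establish non-accountability by the same indistinguishability principle that drives Theorem~\ref{thm:aa-dilemma}: I will exhibit an execution of Algorand BA that ends in a genuine safety violation, and then argue that the collected \clues $W,W'$ (with $\Cf(W)$ conflicting with $\Cf(W')$) are equally consistent with honest behavior of \emph{every} node. Since, by Definition~\ref{def:adjudication-protocol}, a sound adjudication function $\adj$ may never output an honest node, and since $\adj$ sees only the \clues rather than the true corruption set, I can conclude that $\adj$ is forced to return $\emptyset$ even though safety has been violated — contradicting any positive \asr. Operationally I would make the ``honest-consistency for every node'' precise by a twin-world argument: for each node $\adj$ might accuse, build a second execution producing the same \clues in which that node follows the protocol faithfully.

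First I would recall the feature of Algorand BA that the attack targets. The protocol proceeds in numbered graded-voting steps among sortition-selected committees and uses a common coin to break ties across steps: a node that fails to observe a decisive quorum updates its estimate via the coin and may legitimately certify \emph{different} binary values in \emph{different} steps. The crucial point is that conflicting final commits can be supported entirely by signed votes cast in distinct steps, so the combined evidence need not contain any node that double-voted \emph{within a single step} — and only such same-scope double-votes constitute a provable protocol violation that $\adj$ could pin on a specific party.

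Next I would construct the adversarial execution. Working in a network where the adversary controls message delivery (as permitted before \GST under $\AdvEnvPsync$), I would have the adversary bias the common coin and schedule deliveries so that one group of honest nodes certifies value $0$ in an early step while a second group, after a coin-induced estimate flip, certifies $1$ in a later step; each group thereby assembles a locally valid certificate, and $\Cf$ outputs conflicting non-empty ledgers, i.e.\ a bona fide safety violation. I would then build, for any node $i$ that $\adj$ might accuse on input $(W,W')$, a twin execution with a different coin outcome and delivery schedule in which $i$'s signed messages are exactly those in $W\cup W'$ yet arise from protocol-honest play (its votes falling in non-conflicting steps). Because the two executions are indistinguishable to $\adj$ and $i$ is honest in the twin world, $\adj$ cannot output $i$; as this holds for every node, $\adj$ must output $\emptyset$, so Algorand BA provides no positive \asr.

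The main obstacle I anticipate is faithfully instantiating Algorand BA's sortition and common-coin machinery so that these two requirements hold simultaneously: (a) the biased coin together with the step-wise certificates is actually achievable within the committee-size and quorum constraints at the adversarial fraction needed to violate safety, and (b) the twin-world reassignment genuinely reproduces the evidence, so that each accused node's transcript is consistent with honest execution. Verifying (a) and (b) amounts to checking that the attack of \cite[Appendix~C.3]{forensics} on Algorand BBA* transfers to the BA variant used by \cite{sankagiri_clc}; conceptually the argument reduces to the observation that, because conflicting commits can be spread across steps, the quorum-intersection accounting never isolates a same-step equivocator, and hence leaves $\adj$ with no convictable party.
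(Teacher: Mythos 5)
Your overall strategy matches the paper's: manufacture a safety violation in which the adversary's only attributable (signed) messages are votes cast in \emph{different} periods, so that no same-step equivocation appears in the \clues, and then run a twin-world indistinguishability argument to conclude that any $\adj$ satisfying Definition~\ref{def:adjudication-protocol} would risk convicting an honest node and hence can convict no one. The paper's proof is exactly this, instantiated with two worlds: the nodes are split into sets $P$, $Q$, $R$ with $|P|=f-1$, $|Q|=|R|=f+1$; in world 1 the set $R$ is adversarial, in world 2 the set $Q$ is adversarial but produces identical \clues, so $\adj$'s verdict (which in world 1 must be a non-empty subset of $R$) convicts honest nodes in world 2, a contradiction.

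However, your concrete attack sketch mis-instantiates the protocol, and this is a genuine gap rather than a cosmetic one. The Algorand BA of \cite{algorand_agreement}, as used in the checkpointed longest chain \cite{sankagiri_clc}, is a leader-based period/step protocol (propose, soft-vote, cert-vote, next-vote, with starting values and a halting condition); it is \emph{not} the sortition-plus-common-coin protocol (BBA*) that your attack targets, so there is no coin to bias and no ``coin-induced estimate flip'' to exploit. The paper's attack instead works purely by selective message delivery: the adversarial set cert-votes $0$ in period $1$ but delivers those cert-votes only to $P$, so $P$ sees $3f+1 \geq 2f+1$ cert-votes and halts with output $0$, while $Q$ sees only $2f$ cert-votes and (together with the adversary, which feigns non-receipt of $Q$'s cert-votes) next-votes $\bot$, enters period $2$, and there certifies a leader's proposal of $1$. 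The reason nobody can be convicted is that the adversary's signed votes (cert-vote for $0$ in period $1$, cert-vote for $1$ in period $2$) are exactly mirrored by the honest nodes in $Q$; the only asymmetry lies in unattributable claims about which messages were received. You do state this insight abstractly, but the step you defer --- ``check that the attack of \cite[Appendix C.3]{forensics} on BBA* transfers to this BA variant'' --- is precisely the substance of the paper's proof, and carrying out that transfer requires replacing your coin-biasing mechanism with the delivery-scheduling mechanism above. As written, your construction could not be executed against the actual protocol, so the central attack remains unproven in your proposal.
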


\begin{proof}
For the sake of contradiction, suppose there exists an adjudication function $\adj$ that can identify at least one adversary node when there is a safety violation, and never identifies an honest node.
Let $n=3f+1$ denote the total number of nodes among which $f$ nodes are controlled by a Byzantine adversary. 
Note that Algorand BA requires each node $i$ to hold a \emph{starting value} $st_i^p$ (that is distinct from the input values) for each period $p$ of the protocol.
Starting values are set to $st_i^1 = \perp$ for period $p=1$.
Each period consists of $4$ steps and an optional $5$-th step.

Consider three disjoint sets of nodes $P$, $Q$ and $R$, where $|P|=f-1$ and $|Q|=|R|=f+1$.
We next construct two worlds each with a different set of Byzantine nodes. 

\textbf{World 1:} Nodes in $R$ are controlled by the adversary. 
Suppose that the nodes in $P$ and $Q$ start with the input bits $0$ and $1$ respectively, and a node from $P$ is elected leader at period $p=1$.
Then, the following steps are executed by Algorand BA.
\begin{itemize}
    \item Period $1$, Step $1$: Leader proposes its input, $0$.
    \item Period $1$, Step $2$: Every node soft-votes the value $0$ proposed by the leader. 
    Adversary nodes also soft-vote $0$ and share their votes with all honest nodes.
    \item Period $1$, Step $3$: Nodes in $P$ and $Q$ see more than $2f+1$ soft-votes for $0$, thus they cert-vote for $0$.
    Nodes in $R$ also cert-vote for $0$, but send their cert-votes only to the nodes in $P$.
    \item Period $1$, Step $4$: Nodes in $P$ receive $3f+1>2f+1$ cert-votes, thus terminate using the halting condition and output $0$.
    Nodes in $Q$ receive only $|P|+|Q|=2f$ cert-votes, thus are not able to certify any value.
    Nodes in $R$ pretend as if they do not receive any cert-votes from the nodes in $Q$, thus are not able to certify any value either.
    Hence, nodes in $Q$ and $R$ go to the period's first finishing step. 
    They all next-vote their starting values $st_i^1$, which is set to $\perp$ for period $p=1$.
    Note that the nodes in $R$ send their next-votes to the nodes in $Q$.
    \item Period $2$, Step $1$: Since the nodes in $Q$ and $R$ observe a total of $2f+1$ $\perp$-next-votes, they do not go to the second finishing step of period $1$ and instead jump to step $1$ of period $p=2$, with $st_i^2 = \bot$.
    Suppose a node from $Q$ is elected leader at period $p=2$.
    It proposes its input $1$.
    \item Period $2$, Step $2$: Nodes in $Q$ and $R$ soft-vote the value $1$ proposed by the leader.
    \item Period $2$, Step $3$: Nodes in $Q$ and $R$ see more than $2f+1$ soft-votes for $1$, thus they cert-vote for $1$.
    \item Period $2$ (Halting Condition): Nodes in $Q$ and $R$ terminate using the halting condition and output $1$.
\end{itemize}

Since the honest nodes in $P$ and $Q$ outputted different values, there is a safety violation, upon which all of the nodes send their \clues to the adjudication function $\adj$.
\Clues sent by the nodes in $Q$ and $R$ state that they did not hear from the nodes in $R$ and $Q$ respectively in step $3$ of period $1$.
By assumption, $\adj$ identifies at least one node from the set $R$ as an adversarial node.

\textbf{World 2:} This world is identical to World $1$ except that 
\begin{itemize}
    \item Nodes in $Q$ are adversarial and the nodes in $R$ are honest.
    \item Nodes in the set $Q$ behave exactly like the nodes in $R$ behaved in World $1$, i.e. the nodes in $Q$ do not send any cert-votes to the nodes in $R$ in step $3$ of period $1$ and ignore their votes at the beginning of step $4$ of period $1$. 
\end{itemize}
Thus, again the honest nodes in $P$ and $Q$ output different values, upon which all of the nodes send their \clues to $\adj$.
As worlds 1 and 2 are indistinguishable in the perspective of $\adj$, it again identifies at least one node from the set $R$ as an adversary node with non-negligible probability.
However, this is a contradiction with the definition of $\adj$ as $R$ consists of honest nodes in world 2.
\end{proof}

\section{Example Protocols Framed in Model of Section~\ref{sec:model}}
\label{sec:appendix-examples-model}

To illustrate the model in Section~\ref{sec:model}, consider a client that queries nodes running a Nakamoto-style longest chain protocol under $\AdvEnvDA$ at some time $t$. 
Suppose $\beta<1/2$.
The transcript $\tr{i}{t}$ held by node $i$ at time $t$ consists of
the blocks
received by node $i$ by time $t$.
Given the transcript $\tr{i}{t}$,
$\Wf$ outputs as \clue the longest chain implied by  $\tr{i}{t}$.
Upon collecting \clues from a subset $S$ of awake nodes with at least one honest node, a client calls $\Cf$ which selects the longest chain in the set $\{\wt{i}{t}\}_{i \in S}$ and outputs the $k$-deep prefix of that longest chain as the ledger.

We can also consider propose-and-vote-style BFT protocols such as HotStuff, LibraBFT, Streamlet and PBFT \cite{yin2018hotstuff,libraBFT,streamlet,pbft} with $n=3f+1$ nodes under $\AdvEnvPsync$.
In this case, the transcript $\tr{i}{t}$ held by a node $i$ at time $t$ consists of all received
messages such as proposals and votes.
Given $\tr{i}{t}$,
$\Wf$ outputs as \clue a sequence of proposals with votes attesting to them.
Upon collecting \clues from a subset $S$ of nodes containing at least one honest node, a client calls $\Cf$, which outputs the largest possible sequence of proposals that can be confirmed given the votes attesting to them.
The confirmation rule typically requires votes from $n-f+1$ nodes on consecutive proposals to guarantee safety which follows from a quorum intersection argument.
Liveness ensues from the fact that the honest \clue within $S$ includes all of the confirmed proposals submitted by honest nodes.
Existence of an honest \clue in $S$ is typically enforced by collecting \clues from at least $f+1$ nodes.

\end{document}